\def\eqref#1{equation~\ref{#1}}
\def\1{\bm{1}}
\DeclareMathAlphabet{\mathsfit}{\encodingdefault}{\sfdefault}{m}{sl}
\SetMathAlphabet{\mathsfit}{bold}{\encodingdefault}{\sfdefault}{bx}{n}
\newcommand{\One}{\boldsymbol{\mathbf{1}}}
\newcommand{\bzeta}{\boldsymbol{\mathbb{\zeta}}}
\newcommand{\bupsilon}{\boldsymbol{\mathbb{\upsilon}}}
\newcommand{\bpi}{\boldsymbol{{\pi}}}
\newcommand{\bP}{\mathbb{P}}
\newcommand{\bR}{\mathbb{R}}
\newcommand{\bE}{\mathbb{E}}
\newcommand{\cF}{\mathcal{F}}
\newcommand{\cL}{\mathcal{L}}
\newcommand{\cG}{\mathcal{G}}
\theoremstyle{plain}
\newtheorem{theorem}{Theorem}[section]
\newtheorem{proposition}[theorem]{Proposition}
\newtheorem{lemma}[theorem]{Lemma}
\newtheorem{corollary}[theorem]{Corollary}
\theoremstyle{definition}
\newtheorem{assumption}[theorem]{Assumption}
\theoremstyle{remark}
\newtheorem{remark}[theorem]{Remark}
\title{SpeedCP: Fast Kernel-Based Conditional Conformal Prediction}
\author{%
  Yating Liu\textsuperscript{1} \\
  Department of Statistics\\
  University of Chicago\\
  \texttt{yatingliu@uchicago.edu} \\
  \And
  Yeo Jin Jung\textsuperscript{1} \\
  Department of Statistics\\
  University of Chicago\\
  \texttt{yeojinjung@uchicago.edu} \\
  \AND
  Zixuan Wu \\
  Department of Statistics\\
  University of Chicago\\
  \texttt{zixuanwu@uchicago.edu} \\
   \And
  So Won Jeong \\
  Booth Business School\\
  University of Chicago\\
\texttt{sowonjeong@chicagobooth.edu}\\
  \And
  Claire Donnat \\
  Department of Statistics\\
  University of Chicago\\
  \texttt{cdonnat@uchicago.edu}
}
\date{}
\begin{document}

\maketitle
\footnotetext[1]{Equal contribution.}

\begin{abstract}
Conformal prediction provides distribution-free prediction sets with finite-sample conditional guarantees. RKHS-based frameworks---while promising for complex covariate shifts---suffer from prohibitive computational costs. To guarantee conditional validity under such shifts while ensuring feasibility, we build upon the framework of \citet{gibbs2023conformal} by introducing a stable and efficient algorithm that computes the full solution path of the regularized RKHS conformal optimization problem, at essentially the same cost as a single kernel quantile fit. Our approach provides simultaneous hyperparameter tuning for smoothness control and data-adaptive calibration. To extend the method to high-dimensional settings, we further integrate our approach with low-rank latent embeddings that capture conditional validity in a data-driven latent space. Empirically, our method provides reliable conditional coverage across a variety of modern black-box predictors, improving the interval length of \citet{gibbs2023conformal} by 30\%, while achieving a 40-fold speedup.

\end{abstract}

\section{Introduction}\label{sec: intro}

Conformal prediction is a framework for constructing prediction sets that are valid under minimal distributional assumptions. Given a trained predictor $\hat\mu(X)$, and calibration data $(X_i, Y_i)_{i\in[n]}$ together with a test point $X_{n+1}$, all drawn i.i.d. (or more generally, exchangeable) from an unknown and arbitrary distribution $P$, conformal methods such as split conformal prediction (SplitCP) \citep{papadopoulos2002inductive} calculate conformity scores $\{S_i\}_{i\in[n]}$ on the calibration data to construct a prediction set $\hat{C}(X_{n+1})$. This procedure guarantees \textit{marginal coverage}, ensuring that the resulting set includes the true label $Y_{n+1}$ with probability at least $1-\alpha$, for any specified $\alpha\in (0, 1)$.

However, marginal coverage does not preclude significant variability in \textit{conditional coverage} on the test input $X_{n+1}$, defined as $\bP(Y_{n+1}\in \hat{C}(X_{n+1})\mid X_{n+1}=x)=1-\alpha$ for all $x$. This limitation can be particularly problematic in high-stakes applications such as drug discovery or socially sensitive decision-making, where systematic under-coverage on critical subgroups may lead to unreliable or even harmful outcomes. Unfortunately, prior works \citep{vovk2012conditional,foygel2021limits} have shown that in distribution-free settings, any interval satisfying conditional coverage must have an infinite expected length, $\hat C(X_{n+1})=\mathbb{R}$, making meaningful prediction impossible without further assumptions.

To address this issue, \citet{gibbs2023conformal} note that the conditional coverage can be equivalently reformulated as a marginal guarantee over any measurable function $f$, i.e., $\mathbb{E}[f(X_{n+1})\cdot (\One\{Y_{n+1}\in \hat C(X_{n+1})\}-(1-\alpha))]=0$.
This observation motivates them to relax the objective by restricting the requirement to a user-specified function class $\cF$:
\begin{equation}
    \label{eq: condcalib}
\begin{aligned}
     &\mathbb{E}\big[f(X_{n+1})\cdot \big(\One\{Y_{n+1}\in \hat C(X_{n+1})\}-(1-\alpha)\big)\big]\\=&0, \text{ for all }f\in \mathcal{F}.
 \end{aligned}\end{equation}
 Different choices of $\cF$ yield different notions of conditional validity. For example, taking $\cF^0=\{\eta:\eta\in \bR\}$ to be the set of all constant functions in \eqref{eq: condcalib} is equivalent to guaranteeing marginal coverage. Taking $\cF^g$ to be the set of piecewise constant functions  over a set of pre-specified (potentially overlapping) groups $\cG$, so that $\cF^g=\left\{\sum_{G\in \cG} \eta_G\One\{x\in G\}:\eta\in \bR^{|\cG|}\right\}$, yields group-conditional coverage \citep{vovk2003mondrian, jung2022batch}, i.e., $\bP(Y_{n+1}\in \hat C(X_{n+1})\mid X_{n+1}\in G)=1-\alpha$ for all $G\in \cG$.

 In this paper, we consider a more flexible class associated with a reproducing kernel Hilbert space (RKHS) that is capable of achieving coverage guarantees under \textit{complex, nonlinear covariate shifts}:
   \begin{equation}\label{eq:rkhs}
        \cF^{\mathrm{RKHS}}=\left\{f_{\psi}(\cdot)+\Phi(\cdot )^\top \eta:f_{\psi} \in \cF_{\psi}, \eta\in \bR^{d} \right\}\footnotemark{},\end{equation}\footnotetext{Given a positive definite kernel $\psi: \mathcal{X}\times \mathcal{X}\to \bR$, let $\cF_{\psi}$ denote the associated RKHS with an inner product $\langle\cdot, \cdot\rangle_{\psi}$ and a norm $\|\cdot \|_{\psi}$. Using the representer theorem \citep{kimeldorf1971some}, any function $f_{\psi}\in \cF_{\psi}$ has a finite form $f_{{\psi}}(X)=\sum_{i\in[n+1]} \upsilon_{i} \psi(X, X_i)$ for some coefficient vector $ \upsilon\in \mathbb{R}^{n+1}$. The norm has form $\|f_{\psi}\|^2_{\psi}=\langle f_{\psi},f_{\psi}\rangle_{\psi}=\sum_{i,j}  \upsilon_{i}  \upsilon_{j} \psi(X_j, X_i)$. We provide notations used in the paper in Appendix \ref{notation}.}with a given positive definite kernel $\psi: \mathcal{X}\times \mathcal{X}\to \bR$ and any covariate representation $\Phi:\mathcal{X}\to \mathbb{R}^{d}$. The linear component $\Phi(\cdot)^{\top}\eta$ enables marginal, group-conditional, or other linear adjustments, while the RKHS component $f_{\psi}(\cdot)$ controls local smoothness over complex data structures.
Notably, both $\cF^0$, $\cF^g$ are special cases of $\cF^{\mathrm{RKHS}}$.  For instance, setting $f_{\psi}=0$ and choosing
$\Phi(X) = \One\{X\in G\}$ for a group $G\in \mathcal{G}$ in \eqref{eq:rkhs} recovers group-conditional coverage.

Although RKHS function classes provide a promising surrogate for exact conditional coverage in \eqref{eq: condcalib}, their practical use remains limited. While \citet{gibbs2023conformal} provide conditional coverage guarantees for RKHS classes, the associated computational complexity remains a significant bottleneck. Unlike simpler shift classes where their method is efficient, the cost in RKHS settings is often prohibitive, limiting its scalability in large-scale applications.

To construct prediction sets, \citet{gibbs2023conformal} fit an RKHS quantile regression on the $n$ calibration points $(X_i, S_i)_{i \in [n]}$, augmented with the test point $(X_{n+1}, S)$, where $S$ is an imputed score
{for the unknown conformity score $S_{n+1}$}. The imputation of $S$ is carried out via a binary search, with each candidate value requiring a fresh RKHS regression on the $n+1$ points. To mitigate this computational burden, the authors fix the kernel bandwidth $\gamma$ and restrict hyperparameter selection to cross-validate over a pre-specified grid for the regularization parameter $\lambda$. While they demonstrate that $(\lambda, \gamma)$ do not compromise marginal coverage, these hyperparameters crucially shape the smoothness of the regression fit and thus the tightness of the resulting prediction sets.

The primary objective of this paper is to improve upon the framework of \citet{gibbs2023conformal}, enabling conditional validity within the RKHS function classes at a significantly reduced computational cost. We maintain coverage guarantees under complex covariate shifts by formulating the problem as a regularized RKHS quantile regression, which recovers score cutoffs for constructing prediction sets.

To address the above limitations, we introduce two novel \emph{$(\lambda, S)\text{-Path}$} algorithms. Inspired by the algorithmic part of the RKHS quantile regression setting in \citet{li2007quantile}, our method builds solution paths of regression parameters that are piecewise-linear in either the smoothness parameter $\lambda$ (the $\lambda\text{-Path}$) or in the candidate test score $S$ (the $S\text{-Path}$). Rather than evaluating a fixed grid, the algorithm decides the next $\lambda$ or $S$ by updating these parameters only when  {the elbow set---the indices of observations with exactly zero residuals---undergoes a change}. At each step, the solution is derived based on the current elbow set, whose size is dramatically smaller than $n{+}1$, yielding substantial computational savings. This formulation makes conditional conformal prediction with RKHS both tractable and tunable, providing prediction sets that are not only valid but also adaptively tight.

Our second objective is to deploy our method in high-dimensional settings when $X\in \mathbb{R}^p$ with $p\gg n$. In such cases, conditional coverage on low-rank representation is often more interpretable and relevant. Using raw covariates in kernel methods is often ineffective, as distance-based similarities become less discriminative. Accordingly, we approximate each covariate vector $X$ using a $K$-dimensional latent embedding (i.e., latent mixture, principal component, or layer embedding of a predictor network model) via a low-rank map $\hat\pi : \mathbb{R}^p\to \bR^K$ with $K \ll p$. We define the kernel of the RKHS function class $\cF^{\mathrm{RKHS}}$ on this representation, resulting in improved signal-to-noise ratios and enhanced predictive performance \citep{hastie2009elements, udell2019big}. {\it This yields a different notion of conditional coverage: rather than directly guaranteeing $\mathbb{P}(Y_{n+1} \in \hat{C}(X_{n+1}) | X_{n+1})$, we wish to condition on $\mathbb{P}(Y_{n+1} \in \hat{C}(X_{n+1}) | \hat\pi(X_{n+1}))$.}

\paragraph{Contributions.} Our contributions in this work are threefold:
\begin{itemize}[noitemsep, topsep=-0.1em]
\item \textit{Methods:} We extend conditional conformal prediction \citep{gibbs2023conformal} to high-dimensional settings by conditioning on learned low-rank embeddings $\hat\pi(X)$ within an RKHS, and thus improving signal-to-noise and yielding better-calibrated prediction sets, particularly in low-density data regions.
\item \textit{Algorithm:}
{We exploit the affine relationship between the imputed conformity score $S$ and the RKHS coefficients, and leverage this structure to design a fast, stable solution-path algorithm for RKHS-based conformal prediction, yielding a closed-form solution for hyperparameter selection and higher-quality prediction sets.}
\item \textit{Theory:}
{We establish finite-sample guarantees for approximate conditional coverage with respect to data-driven latent embeddings and characterize how embedding estimation error affects coverage validity in high-dimensional inference.}
\end{itemize}

We illustrate our contributions in Figure~\ref{fig:ternary}. \textbf{SpeedCP} achieves uniform 0.9 coverage across the latent space of $\hat \pi(X)$, a 2D simplex, delivering smaller prediction sets while running nearly 50 times faster than CondCP \citep{gibbs2023conformal}. Further results are discussed in Section~\ref{sec: experiments}. We provide an open-source implementation of SpeedCP at \url{https://github.com/yeojin-jung/speedcp}.

\begin{figure*}[!htbp]
    \centering
    \includegraphics[width=0.9\textwidth]{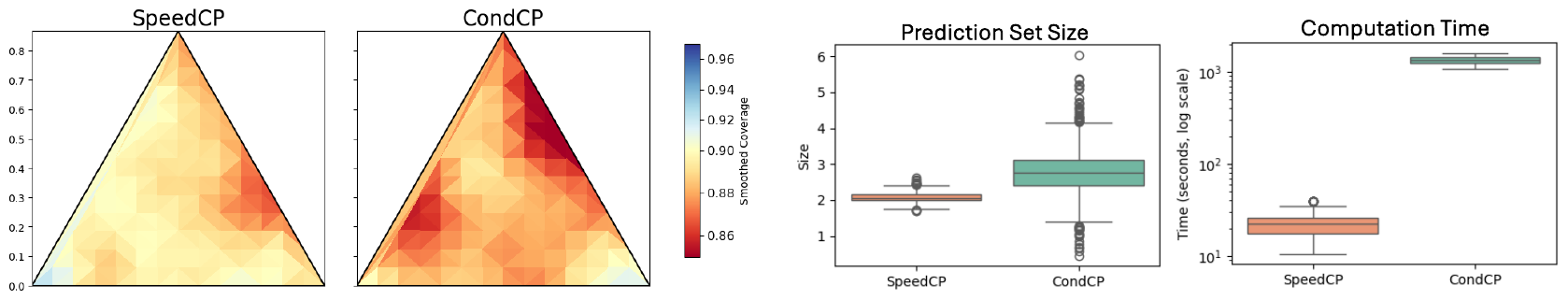}
    \caption{\textbf{SpeedCP} achieves more uniform 0.9 (pale yellow) coverage across the simplex with superior efficiency and smaller prediction set sizes. Heatmaps represent mean coverage on a fine-gridded latent space partition, aggregated over 50 independent runs.}
    \label{fig:ternary}
    \vspace{-3mm}
\end{figure*}

\paragraph{Related Literature.}
The LARS algorithm, proposed by \citet{efron2004least} for least squares regression, shows that the LASSO coefficient path is piecewise linear in the regularization parameter
$\lambda$. This inspired a broad family of path algorithms for other regularized problems, such as LASSO \citep{tibshirani1996regression, tibshirani2011solution}, generalized linear models \citep{friedman2010regularization}, and quantile regression \citep{koenker2005quantile, li2007quantile}. In our work, we build on the solution path algorithm for RKHS quantile regression developed by \citet{li2007quantile} and adapt it to our RKHS function class $\cF^{\mathrm{RKHS}}$, which has an extra linear component $\Phi(X)^{\top}\eta$. In LASSO, the active set tracks which features have nonzero coefficients as $\lambda$ decreases. The regularization parameter $\lambda$ is updated when an event (a change in the active set) occurs. In our setting, the elbow set
 plays the same role: it tracks the data points
 that are sitting exactly on the quantile boundary. What is new relative to these classic results is that we integrate this path structure into the conformal prediction framework to prove the affine structure of the solution path over the imputed score $S$, which enables efficient construction of prediction sets without the binary search used in \citet{gibbs2023conformal}.

Split conformal prediction guarantees marginal coverage through a single global cutoff \citep{papadopoulos2002inductive}, whereas some extensions seek stronger notions of calibration by conditioning on predefined groups \citep{vovk2003mondrian,jung2022batch}, local neighborhoods \citep{guan2023localized,hore2023conformal}, latent score-mixture structure \citep{zhang2024posterior}, or RKHS constraints \citep{gibbs2023conformal}; see Appendix \ref{sec: background} for a detailed comparison. SpeedCP is closest to \citet{gibbs2023conformal}, sharing the same regularized kernel quantile-regression objective, but targets a different high-dimensional regime. By conditioning on an estimated low-rank representation $\widehat{\pi}(X)$ rather than the raw covariates, SpeedCP makes conditional calibration meaningful in a learned latent space, especially when raw-space neighborhoods are sparse. The bandwidth $\gamma$ and regularization parameter $\lambda$ control the locality--stability trade-off of this latent space calibration. Algorithmically, \citet{gibbs2023conformal} exploit monotonicity of the test point dual variable to search over the imputed score, whereas SpeedCP proves a stronger piecewise-affine path structure in both $\lambda$ and $S$. This replaces repeated RKHS refitting, pre-specified $\lambda$-grid search, and binary search over $S$ with active set path tracking, making flexible conditional conformal prediction computationally feasible for high-dimensional data.

\section{Methods}\label{sec: method}

We begin by introducing preliminary notation. We partition the dataset $\{({X}_i, Y_i)\}_{i\in \mathcal{D}}$ into four disjoint subsets: $\mathcal{D}_{{\mathrm{train}}}$, $\mathcal{D}_{\mathrm{split}}$, $\mathcal{D}_{\mathrm{calib}}$, and $\mathcal{D}_{\mathrm{test}}$. A single test input is denoted as ${X}_{n+1}$, since $Y_{n+1}$ is unobserved. The training set $\mathcal{D}_{{\mathrm{train}}}$ is used to train a predictive model $\hat\mu(\cdot)$ while the calibration set $\mathcal{D}_{\mathrm{calib}}$ provides conformity scores $S_i = S({X}_i, Y_i)$ for $i \in \mathcal{D}_{\mathrm{calib}}$ (we also use $i\in [n]$ to denote calibration points since $|\mathcal{D}_{\mathrm{calib}}|=n$).

For high-dimensional covariates $\mathbf{X} \in \mathbb{R}^{n \times p}$ with $p \gg n$, we denote a low-rank embedding map by $\hat\pi : \mathcal{X} \to \mathbb{R}^K$ with $K \ll p$. Our procedure can accommodate any low-rank embedding $\hat\pi({X})$, provided that $\hat\pi(\cdot)$ is fitted symmetrically across the calibration and test set. We provide experiments on different low-rank methods in Section~\ref{sec: experiments}. When dimensionality reduction is unnecessary, the identity map $\hat\pi(X)=X$ may be used.

Our goal is to construct prediction intervals for test points $X_{n+1}$ that achieve conditional coverage defined in \eqref{eq: condcalib} within the RKHS function class $\cF^{\mathrm{RKHS}}$ (\eqref{eq:rkhs}). In the high-dimensional setting, we define the kernel on low-rank embeddings, yielding a subclass $\cF^{*} \subset\cF^{\mathrm{RKHS}}$ tailored to the latent space. The associated kernel $\psi^*(\cdot, \cdot)\equiv\psi_{\gamma, \widehat \pi}(\cdot, \cdot) $ is designed to emphasize local coverage in the latent embedding space with a given kernel bandwidth $\gamma$: \begin{align}\label{eq:kernel}
\psi^*({X}_1, {X}_2):
=\exp\left\{-\gamma\cdot d_{\pi}\left(\hat\pi({X}_1), \hat\pi({X}_2)\right)\right\},
\end{align}
where $d_{\pi}\left(\cdot, \cdot\right)$ is a distance metric between the low-dimensional embeddings (we detail this distance in Appendix \ref{sec: topic modeling}). The feature map $\Phi^*:\mathbb{R}^K\to \mathbb R^d$ is defined using the estimated embedding space generated by $\hat\pi(\cdot)$, thereby allowing linear modeling of covariate shifts within the latent representation space. The specific form of $\Phi^*$\footnote{The superscript $^*$
 is intended to emphasize that the RKHS/kernel/feature class is defined on the low-rank representation, as opposed to the raw covariate space.} depends on the application and will be specified in the theorem statements and experimental settings later. Throughout the paper, we use $f(\cdot)=f_{\psi^*}(\cdot)+\Phi^*(\cdot)^{\top}\eta\in \cF^{*}$ to denote the covariate-shift weighting of interest with a given $\gamma$ and $g(\cdot)= g_{\psi^*}(\cdot)+\Phi^*(\cdot)^{\top} \eta\in\cF^{*}$ to denote the fitted quantile estimate over the same RKHS.

\subsection{Algorithm: SpeedCP}\label{sec: algo}

In this section, we present our method for constructing conditionally calibrated prediction sets. We fit a regularized quantile regression in the RKHS class $\cF^*$. Recalling that the rank of a test point is uniformly distributed over the calibration set plus the test point, we fit using $n$ calibration covariate-score pairs $(X_i, S_i)_{i \in [n]}$ plus the test point $(X_{n+1}, S_{n+1})$. Because $S_{n+1}$ is unobserved, we impute it with an arbitrary candidate value $S$, which yields a regression function parameterized by $S$,
\begin{equation}\label{eq:quantile reg on RKHS mixture}
\begin{aligned}
\hat g_S:=&\mathrm{arg}\mathrm{min}_{g\in \cF^{*}}\frac{1}{n+1}\sum_{i\in[n]}\ell_{\alpha}(S_i-g(X_i))\\
&+\frac{1}{n+1}\ell_{\alpha}(S-g(X_{n+1}))+\frac{\lambda}{2}\|g_{\psi^*}\|_{\psi^*}^2,\footnotemark
\end{aligned}
\end{equation}\footnotetext{
The RKHS class is given by a fixed $\lambda$ such that $\cF_{\psi^*}=\{f_{\psi^*}(x)=\frac{1}{ \lambda}\sum_{i\in[n+1]}\upsilon_i \psi^*(x, X_i), \upsilon\in \mathbb{R}^{n+1}\}$.}
where $\lambda>0$ is the regularization parameter and $\ell_\alpha(z)=(1-\alpha)[z]_++\alpha[z]_{-}$ denotes the pinball loss at level $\alpha\in (0,1)$. The regularization penalty rules out the meaningless prediction set $\hat C(X_{n+1})=\mathbb{R}$ that can arise in infinite-dimensional classes. While the kernel bandwidth $\gamma$ does not appear explicitly in equation~\ref{eq:quantile reg on RKHS mixture}, it is implicitly embedded within the definition of the kernel $\psi^*$.

Accordingly, the prediction set takes the form,
\begin{align}\label{predset}
    \hat{C}^*(X_{n+1}) = \{y: S(X_{n+1},y) \leq \hat g_{S(X_{n+1},y)}(X_{n+1})\}.
\end{align}

{The RKHS class $\mathcal{F}^*$ and the corresponding quantile regressor in \eqref{eq:quantile reg on RKHS mixture} are well defined for any fixed pair $(\gamma,\lambda)$.} Our method proceeds in two stages.

{  First, we aim to select a sufficiently good pair of hyperparameters $(\gamma,\lambda)$ for the RKHS quantile regression. To maintain the validity of the downstream conformal guarantees, we use a separate data set, $\mathcal{D}_{\mathrm{split}}$, disjoint from both the calibration and test sets.

We trace the $\lambda$-path for each candidate bandwidth $\gamma$ in a prespecified grid, which provides a solution path of RKHS regression along the regularization parameter $\lambda$. We then cross-validate on the bandwidth $\gamma$ of the kernel $\psi^*$ to choose the optimal $(\hat \gamma,\hat \lambda)$ pair. This approach avoids the prohibitive cost of a full joint grid search over $(\gamma,\lambda)$. While we still iterate over the grid of $\gamma$'s, the $\lambda$-path allows for an exhaustive and efficient exploration of the regularization space without repeatedly solving the full optimization problem. We then fix the RKHS class $\mathcal{F}^*$ in \eqref{eq:quantile reg on RKHS mixture} and in all subsequent theorems with respect to this selected pair $(\hat \gamma,\hat \lambda)$.}

Second, given the optimized hyperparameters, we integrate the test observation to construct the $S$-path, which traces the maximum score cutoff $S$ that satisfies the condition in \eqref{predset}. The full procedure is detailed in Algorithm~\ref{algo:1}. We begin by outlining the setup before introducing the $(\lambda, S)$-paths.

For a given $\lambda$, the solution to \eqref{eq:quantile reg on RKHS mixture} has the following closed form:
\begin{align}
\label{eq:closed_form}
    \hat g_S(X) =\Phi^*(X)^\top\hat\eta_{S}+\frac{1}{\lambda}\sum_{i=1}^{n+1}\hat \upsilon_{ S, i}\psi^*(X,X_i),
\end{align}
where $\hat \eta_{S}$, $\hat \upsilon_{ S, i}$ are parameters when the score of the test point $S_{n+1}$ is set to $S$. For numerical stability of the algorithm, we assume the columns of $\Phi^*(X)$ are linearly independent. Plugging this into \eqref{eq:quantile reg on RKHS mixture}, the objective becomes,
\begin{align} \min_{\eta_{S}, \upsilon_{S}} &
\sum_{i=1}^{n+1}\ell_{\alpha}(S_i- \Phi^*(X_i)^\top\eta_{S}-\frac{1}{\lambda}\sum_{i^{'}=1}^{n+1} \upsilon_{S,i'}\psi^*(X_i,X_{i^{'}}))\notag\\
   &+\frac{1}{2\lambda}\sum_{i, i'=1}^{n+1}\upsilon_{ S, i}\upsilon_{S, i^{'}}\psi^*(X_i,X_{i^{'}}).\label{eq: opt obj}\end{align}
The Lagrangian formulation and the Karush–Kuhn–Tucker (KKT) conditions of \eqref{eq: opt obj} motivate us to define three index sets: the \emph{Elbow}, \emph{Left}, and \emph{Right} set,
\begin{equation}\label{eq:elbow}
    \begin{split}
        E&=\{i: S_i-g_S(X_i)=0, \upsilon_{ S,i}\in(-\alpha, 1-\alpha)\}\\
        L&=\{i: S_i-g_S(X_i)<0, \upsilon_{S,i}=-\alpha\}\\
        R&=\{i: S_i-g_S(X_i)>0, \upsilon_{S,i}=1-\alpha\}.\\
    \end{split}
\end{equation}
We observe that for the left and right sets, the kernel parameters $\upsilon_{S,i}$ are fixed to either $-\alpha$ or $1-\alpha$. Thus, we only need to solve for $\upsilon_{S,i}$'s in the elbow set, making the computation more efficient. The algorithm reduces to tracking changes in this set for different $\lambda$ or $S$ values: an \emph{event occurs when there is a change in the index sets:} 1) a point leaves the elbow or 2) when a point from the left or right set enters it.

\subsubsection{\texorpdfstring{$\lambda$-path}{lambda-path} for Smoothness Control}

To select $\lambda$, { we rely exclusively on the separate $m$ observations in $\mathcal{D}_{\mathrm{split}}$ to choose the  optimal $\lambda$ independent of calibration and test sets. The equations \ref{eq:closed_form}-\ref{eq:elbow} remain valid on this subset, so we denote the index sets as $(E(\lambda), L(\lambda), R(\lambda))$ as the sets evolve with $\lambda$. Since no imputed score $S$ is required for $S_{n+1}$, we drop $S$ from the subscripts.} We initialize $\lambda$ at the largest value for which at least two points are in the elbow, and define the step size to the next $\lambda$ as the smallest decrement that triggers an event. Importantly, {  the resulting coefficients $\{\hat{\upsilon}_{i'}(\lambda)\}_{i'\in[m]}$ and $\hat{\eta}(\lambda)$ evolve as a piecewise-linear function of $\lambda$, which we formalize in the following proposition.}

\begin{proposition}\label{prop:lambda_path} Let $\{\lambda^l\}_{l=1,2,3,\cdots}$ be the change points when an event occurs. For $\lambda^{l+1} \leq \lambda \leq \lambda^{l}$, denote $\{\hat{\upsilon}_{i'}(\lambda)\}_{i'\in [m]}$ and $\hat{\eta}(\lambda)$ as the solution of \eqref{eq: opt obj} on $\mathcal{D}_{\mathrm{split}}$, given $\lambda$. Then, $\{\hat{\upsilon}_{i'}(\lambda)\}_{i'\in [m]}$ are affine in $\lambda$ and $\hat{\eta}(\lambda)$ is affine in $1/\lambda$.
\end{proposition}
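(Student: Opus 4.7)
The plan is to exploit the Karush--Kuhn--Tucker conditions of \eqref{eq: opt obj}: on any interval $[\lambda^{l+1},\lambda^l]$ the index sets $(E,L,R)$ are frozen by construction, so the optimality conditions collapse to a finite linear system in the free unknowns $(\hat\upsilon_E(\lambda),\hat\eta(\lambda))$, and a simple rescaling will make the $\lambda$-dependence completely transparent.

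First I would write out the KKT conditions explicitly. Differentiating in $\upsilon_j$ gives $\hat\upsilon_j(\lambda)\in\partial\ell_\alpha(S_j-\hat g(X_j))$, which reproduces the trichotomy in \eqref{eq:elbow}, while differentiating in $\eta$ yields the linear orthogonality constraint $\sum_{i\in[n]} \Phi^*(X_i)\hat\upsilon_i(\lambda)=0$. By definition of the event sequence $\{\lambda^l\}$, the partition $(E,L,R)$ is constant on $[\lambda^{l+1},\lambda^l]$. The components $\hat\upsilon_i(\lambda)$ for $i\in L\cup R$ are pinned at $-\alpha$ or $1-\alpha$ and are therefore trivially affine; all of the $\lambda$-dynamics live in $(\hat\upsilon_E(\lambda),\hat\eta(\lambda))$.

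Next I would combine the $|E|$ elbow equations $S_i-\Phi^*(X_i)^{\top}\hat\eta-\tfrac{1}{\lambda}(K_{EE}\hat\upsilon_E+c_E)_i=0$, where $c_E:=-\alpha K_{EL}\mathbf{1}+(1-\alpha)K_{ER}\mathbf{1}$, with the $d$ orthogonality equations $\Phi^*(X_E)^{\top}\hat\upsilon_E=\alpha\Phi^*(X_L)^{\top}\mathbf{1}-(1-\alpha)\Phi^*(X_R)^{\top}\mathbf{1}=:b$. Multiplying the residual equations through by $\lambda$ and introducing the reparameterization $\tilde\eta(\lambda):=\lambda\,\hat\eta(\lambda)$ turns the combined system into
\[
\begin{pmatrix} K_{EE} & \Phi^*(X_E) \\ \Phi^*(X_E)^{\top} & 0 \end{pmatrix}\begin{pmatrix} \hat\upsilon_E(\lambda) \\ \tilde\eta(\lambda) \end{pmatrix} = \begin{pmatrix} \lambda S_E - c_E \\ b \end{pmatrix},
\]
whose coefficient matrix $M$ no longer depends on $\lambda$ while the right-hand side is manifestly affine in $\lambda$. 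Inverting $M$ immediately expresses $\hat\upsilon_E(\lambda)$ and $\tilde\eta(\lambda)$ as vectors of the form $a_0+\lambda a_1$, i.e.\ affine in $\lambda$; undoing the substitution yields $\hat\eta(\lambda)=\tilde\eta(\lambda)/\lambda=a_1+a_0/\lambda$, affine in $1/\lambda$, which is exactly the claim.

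The main technical obstacle is establishing invertibility of the saddle-point matrix $M$ throughout the active interval. This reduces to two conditions: strict positive definiteness of $K_{EE}$, inherited from the chosen kernel, and full column rank of $\Phi^*(X_E)$, which follows from the paper's standing assumption that the columns of $\Phi^*(X)$ are linearly independent provided $|E|\geq d$ along the path. The latter is ensured by the algorithm's initialization at the largest $\lambda$ admitting a sufficiently populated elbow, together with the monotone growth of $E$ as $\lambda$ decreases. A minor side check is that continuity of the pinball optimum in $\lambda$ lets the affine expressions be extended from the open interval $(\lambda^{l+1},\lambda^l)$ to its closure, matching the statement of the proposition.
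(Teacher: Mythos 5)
Your proof is essentially correct and reaches the same conclusion, but it routes through the argument slightly differently than the paper does. Where the paper proves Proposition~1 by first applying the orthogonal projector $\mathbf\Pi_E = I - \mathbf\Phi^*_E(\mathbf\Phi_E^{*\top}\mathbf\Phi^*_E)^{-1}\mathbf\Phi_E^{*\top}$ to the elbow equations (eliminating $\hat\eta$), then solving for $\hat\upsilon_E$ via the Moore--Penrose pseudoinverse of $\mathbf\Pi_E\mathbf\Psi^*_{EE}\mathbf\Pi_E$, and finally back-substituting to recover $\hat\eta$, you assemble the elbow equations and the orthogonality constraints into a single saddle-point system with the rescaled variable $\tilde\eta=\lambda\hat\eta$ and a $\lambda$-independent coefficient matrix $M$. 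Your route is cleaner and connects more directly to the linear system the paper actually solves in its ``Update of $\lambda^l$'' subsection, so arguably you have identified the form that the proof ``wants'' to take. The trade-off is in the invertibility bookkeeping: the paper's projection-plus-pseudoinverse device is agnostic to possible rank deficiency in $\mathbf\Psi^*_{EE}$ restricted to $\mathrm{Im}(\mathbf\Pi_E)$, giving a minimum-norm solution when the kernel block is singular on that subspace; your block inverse requires $M$ to be nonsingular outright, which needs $\Phi^*_E$ to have full column rank \emph{and} $K_{EE}$ to be positive definite on $\ker(\Phi_E^{*\top})$, not positive definite everywhere as you stated. Both proofs implicitly assume $(\Phi_E^{*\top}\Phi^*_E)^{-1}$ exists, so you are not worse off on that front, but the pseudoinverse version degrades more gracefully.

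Two small corrections. First, the elbow set does not grow monotonically as $\lambda$ decreases: the event list explicitly includes points \emph{leaving} the elbow (when a coefficient in $E$ hits the bounds $-\alpha$ or $1-\alpha$), so you cannot invoke ``monotone growth of $E$'' to guarantee $|E|\ge d$ for all $\lambda$. Second, the condition $|E|\ge d$ alone is not sufficient for $\Phi^*_E$ to have full column rank; the paper finesses this by simply positing linear independence of the columns of $\Phi^*$ and implicitly restricting attention to configurations where $\Phi^*_E$ inherits it. Neither of these affects the core affine-structure argument, and neither is handled more rigorously in the paper, but you should not present them as settled.
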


The piecewise linearity allows us to track the whole $\lambda$ solution path, not just at the change points.

To select the optimal ($\gamma, \lambda$)-pair, we fix a grid of the kernel bandwidth values $\gamma$, and run the $\lambda$-path for each fixed $\gamma$. We then perform $k$-fold cross validation on $\mathcal{D}_{\mathrm{split}}$, independent of the calibration set, to choose the combination $(\gamma, \lambda)$ that minimizes the quantile loss.

It is worth noting that our coverage guarantee, established in Section~\ref{sec: theory}, holds for any $(\gamma, \lambda)$. However, our procedure chooses the combination that reflects an appropriate level of smoothness of $\hat{g}_S$, which leads to tighter prediction sets. We provide derivation of the $\lambda$ path in Appendix~\ref{sec: sqkr}, as well as empirical results on the effect of hyperparameter tuning in Appendix~\ref{appendix:calib},  and \ref{appendix: uniform_grid}.

\subsubsection{\texorpdfstring{$S$-path}{S-path} for Constructing Prediction Sets}

We proceed to construct prediction sets using $(\hat{\gamma}, \hat{\lambda})$ selected from the $\lambda$-path.

We use the original notations of the regression parameters, $\hat{\upsilon}_{S,i}$ and $\hat{\eta}_S$, since conditions \ref{eq:quantile reg on RKHS mixture}–\ref{eq:elbow} now depend on the imputed test score $S$. Recall that the prediction set is defined as a set of $y$ such that $S(X_{n+1},y) \leq \hat g_{S(X_{n+1},y)}(X_{n+1})$. By \eqref{eq:elbow}, this is equivalent to $\hat{\upsilon}_{S(X_{n+1},y),n+1} < 1-\alpha$. Moreover, the mapping $S \mapsto \hat{\upsilon}_{S}$ is nondecreasing (which we prove in Proposition~\ref{prop: nondecreasing} in Appendix~\ref{appsec: proof}). Thus, the problem reduces to finding the largest value $S^*(X_{n+1})$ such that $\hat{\upsilon}_{S^*(X_{n+1}),n+1} < 1-\alpha$ holds.

Conceptually, the $S$-path mirrors the $\lambda$-path: it traces the evolution of the test score cutoff $S$ through a sequence of events, where events are defined identically as before. The sets in \eqref{eq:elbow} now evolve with $S$. We initialize the $S$-path with the smallest $S^1$ such that the test point enters the elbow set (i.e., $\hat{\upsilon}_{S^1,n+1}\in (-\alpha, 1-\alpha)$) and then increment $S$ to the next value at which an event occurs while the test point is still in the elbow. We iterate until the test point exits the elbow and set the final $S$ as $S^*(X_{n+1})$. Similar to the $\lambda$-path, we prove that $\hat{\upsilon}_{S,i}$'s and $\hat{\eta}_S$ evolve as an affine function  of $S$ between any two change points:

\begin{proposition}\label{prop:S_path} Let $\{S^l\}_{l=1,2,3,\cdots}$ be the change points when an event occurs. For $S^{l} \leq S \leq S^{l+1}$, denote $\{\hat{\upsilon}_{S,i}\}_{i\in[n+1]}$ and $\hat{\eta}_S$ as the solution of \eqref{eq: opt obj}. Then, $\{\hat{\upsilon}_{S,i}\}_{i\in[n+1]}$ and $\hat{\eta}_S$ are affine in $S$.
\end{proposition}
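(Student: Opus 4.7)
The plan is to write out the KKT system of \eqref{eq: opt obj} on the open interval $(S^l, S^{l+1})$ and recognize it as a square linear system whose coefficient matrix is independent of $S$ and whose right-hand side depends on $S$ only through a single entry. By the very definition of a change point in Section~\ref{sec: algo}, the partition $(E, L, R)$ from \eqref{eq:elbow} is constant throughout $(S^l, S^{l+1})$. On this interval the multipliers outside the elbow are pinned, $\hat\upsilon_{S,i} = -\alpha$ for $i \in L$ and $\hat\upsilon_{S,i} = 1-\alpha$ for $i \in R$, so only $\hat\eta_S \in \mathbb{R}^d$ and $(\hat\upsilon_{S,i})_{i\in E}$ are free.

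Next I would assemble the reduced system explicitly. Stationarity of \eqref{eq: opt obj} in $\eta$ gives $\sum_{i\in[n+1]} \hat\upsilon_{S,i}\Phi^*(X_i) = 0$, and substituting the fixed values on $L\cup R$ yields
\begin{equation*}
\sum_{i\in E}\hat\upsilon_{S,i}\,\Phi^*(X_i) \;=\; \alpha\!\!\sum_{i\in L}\!\Phi^*(X_i) \;-\; (1-\alpha)\!\!\sum_{i\in R}\!\Phi^*(X_i).
\end{equation*}
The elbow equations $S_i = \hat g_S(X_i)$ for $i\in E$ become, using \eqref{eq:closed_form},
\begin{equation*}
\Phi^*(X_i)^\top \hat\eta_S + \frac{1}{\lambda}\sum_{j\in E}\hat\upsilon_{S,j}\,\psi^*(X_i,X_j) \;=\; \widetilde{S}_i \;-\; \frac{1}{\lambda}\Big[\!-\alpha\!\!\sum_{j\in L}\!\psi^*(X_i,X_j) + (1-\alpha)\!\!\sum_{j\in R}\!\psi^*(X_i,X_j)\Big],
\end{equation*}
where $\widetilde{S}_i = S_i$ for $i\neq n+1$ and $\widetilde{S}_{n+1} = S$. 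Stacking these into a single system gives $M\,z(S) = b_0 + S\, e_\star$, where $z(S) = (\hat\eta_S,(\hat\upsilon_{S,i})_{i\in E})$, the matrix $M$ depends only on $(\Phi^*, \psi^*, \lambda, E, L, R)$, and $e_\star$ is the indicator of the row corresponding to the test index (with $e_\star = 0$ if $n+1\notin E$). Once $M$ is inverted we obtain $z(S) = M^{-1}b_0 + S\, M^{-1}e_\star$, which is affine in $S$; the entries $\hat\upsilon_{S,i}$ for $i\in L\cup R$ are constant and hence trivially affine. Continuity of the solution path in $S$ extends the representation to the closed interval $[S^l, S^{l+1}]$ as in Proposition~\ref{prop:lambda_path}.

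The main obstacle will be establishing nonsingularity of $M$. The diagonal block coming from the $\eta$-stationarity is controlled by the paper's standing assumption that the columns of $\Phi^*(X)$ are linearly independent, and the kernel block indexed by $E$ inherits positive definiteness from $\psi^*$. The natural route is a Schur-complement argument: show that the kernel Gram submatrix restricted to $E$ is invertible on the orthogonal complement of the span of $\{\Phi^*(X_i)\}_{i\in E}$, which together with the linear-independence assumption yields invertibility of the full saddle-point matrix. A subsidiary issue is that at the change points themselves one index transitions in or out of $E$, but this only affects a measure-zero set and is absorbed by continuity, giving the claimed piecewise affine structure on each $[S^l, S^{l+1}]$.
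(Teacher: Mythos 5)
Your proposal follows essentially the same route as the paper: write down the KKT system, observe that the partition $(E,L,R)$ is fixed on $[S^l, S^{l+1}]$, and note that the resulting linear system in $(\hat\eta_S, \hat\upsilon_{S,E})$ has an $S$-independent coefficient matrix and an affine-in-$S$ right-hand side (entering only through the $(n+1)$-st component of $S_E$). The one bookkeeping difference worth flagging: the paper first eliminates $\hat\eta_S$ via the orthogonal projector $\Pi_E = I - \Phi^*_E(\Phi^{*\top}_E\Phi^*_E)^{-1}\Phi^{*\top}_E$ and works with the Moore--Penrose pseudoinverse of $\Pi_E\mathbf\Psi^*_{EE}\Pi_E$ (so it does not need the saddle-point matrix to be strictly invertible, only picks the minimum-norm solution), whereas you invert the joint block system directly, which is why your argument has to carry the nonsingularity of $M$ explicitly — the Schur-complement observation you sketch is the right way to relate the two.
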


As shown in Appendix Lemmas \ref{lemma: cc} and \ref{theorem: oracle mixture}, using the threshold $S^*(X_{n+1})$ can inflate the conditional coverage. To mitigate this, we instead prefer the randomized cutoff $ S^{rand}(X_{n+1}) = \sup\{S\mid \hat \upsilon_{S,n+1}< U\}$, where $1-\alpha$ is replaced by $U\sim Unif(-\alpha,1-\alpha)$. The final prediction set is then defined as:
\begin{align}\label{pred_rand}
    \hat{C}^*_{\mathrm{rand}}(X_{n+1}) = \{y: S(X_{n+1},y) \leq S^{\mathrm{rand}}(X_{n+1})\}.
\end{align}

The affine path structure in Propositions~\ref{prop:lambda_path}-\ref{prop:S_path} builds primarily on the RKHS quantile regression path framework of \citet{li2007quantile}, which is itself analogous to the classical LASSO solution path of \citet{efron2004least}. In this framework, the elbow set $E$ plays a role similar to the active set in LASSO, identifying observations
that determine the current affine segment of the solution path. Our contribution is to extend this pathwise idea to conformal prediction, where the path must also be traced over the imputed score $S$ in order to construct prediction sets.

\paragraph{Computational Complexity.} At each iteration of the $\lambda$- and $S$-paths, we solve the inverse of
$
\begin{pmatrix}
 \mathbf\Phi^*_{E} & \frac{1}{\lambda}\mathbf\Psi^*_{EE} \\
 \mathbf{0} & \mathbf\Phi_{E}^{*\top}\\
 \end{pmatrix}.$
Here, $\mathbf \Phi^*_E \in \mathbb{R}^{|E|\times d}$ and $\mathbf \Psi^*_{EE} \in \mathbb{R}^{|E|\times |E|}$ denote feature submatrix and  kernel submatrix of the current elbow set $E$, respectively. This requires inverting a $(|E|+d)\times(|E|+d)$ matrix at each iteration. While the worst-case complexity is $O((n+d)^3)$, in practice $|E| \ll n$, making our procedure more efficient than refitting the full RKHS quantile regression at every step. In Appendix~\ref{app:elbow-set-size}, we further show through experiments that the average elbow-set size is substantially smaller than $n$ across the solution path. We detail the initialization and update functions of the $\lambda$-, and $S$-paths as well as the proofs of Proposition 1,2 in Appendix~\ref{sec: sqkr}.

\begin{algorithm}[tb]

\caption{SpeedCP} \label{algo:1}
\begin{algorithmic}
\STATE \textbf{Input:} $\mathcal{D}_{\mathrm{train}}$, $\mathcal{D}_{\mathrm{split}}$, $\mathcal{D}_{\mathrm{calib}}$, $\mathcal{D}_{\mathrm{test}}$, latent map $\hat\pi : \mathcal{X} \to \mathbb{R}^K, \ (K \ll p) $, kernel bandwidth grid $\Gamma$, miscoverage level $\alpha$
\STATE \textbf{Output:} Conditionally calibrated prediction set
\STATE 1. Train $\hat{\mu}$ on $\mathcal{D}_{\mathrm{train}}$
\STATE 2. Get calibration scores: $S_{i}=S(X_i, Y_i), i \in \mathcal{D}_{\mathrm{calib}}$
\STATE 3. Get latent embeddings: \\ \quad $\hat{\pi}_{\mathrm{calib}}$, $\hat{\pi}_{\mathrm{split}}$, $\hat{\pi}_{\mathrm{test}}$
\STATE 4. Tune hyperparameters $(\hat{\gamma},\hat{ \lambda})$ using $(\hat{\pi}_{\mathrm{split}}, S_{\mathrm{split}})$
\FOR{$\gamma \in \Gamma$}
\FOR{$j=1,\cdots k$}
\STATE \textit{A. Compute $\lambda\text{-Path}$ on $\mathcal{D}_{\text{train}_j}= (\hat{\pi}, S)_{\mathrm{split}\setminus \mathrm{fold}_j}$}
\STATE $ \ \{ \hat{\mathbf{\upsilon}}^{\gamma}(\lambda^l), \hat{\mathbf{\eta}}^{\gamma}(\lambda^l) \}_{l \geq 1} \gets \lambda\text{-Path}\big( \mathcal{D}_{\text{train}_j}; \gamma \big)$
\STATE \textit{B. Define scoring function for each $\lambda^l$}
\STATE $\ \hat{g}^l(X_i) = \Phi^*(X_i)^\top \hat{\eta}^{\gamma}(\lambda^l) + \frac{1}{\lambda^l} \mathbf{\Psi}_{j,\mathcal{D}_{\text{train}_j}}\hat{\upsilon}_{\mathcal{D}_{\text{train}_j}}^{\gamma}(\lambda^l)$
\STATE \textit{C. Evaluate pinball loss on validation fold}
\STATE $\ \mathrm{CV}_j(\gamma, \lambda^l) \gets \sum_{i \in \text{fold}_j} l_{\alpha} \big( S_i - \hat{g}^l(X_i) \big)$

\ENDFOR
\STATE $\mathrm{CV}(\gamma, \lambda^l) = \frac{1}{k}\sum_{j=1}^k\mathrm{CV}_j(\gamma,\lambda^l)\text{ for } l=1,2,\cdots$
\ENDFOR
\STATE $(\hat{\gamma}, \hat{\lambda}) \gets \arg\min_{ \{\gamma, \lambda^l\}_{\gamma \in \Gamma, l\geq 1} }\mathrm{CV}(\gamma, \lambda^l)$
\STATE 5. For each test point $X_{n+1}$, find the maximum score $S^*$ such that $S^* \leq \hat{g}_{S^*}(X_{n+1})$. Use $U \sim \mathrm{Unif}[-\alpha,1-\alpha]$ to get the corresponding score $S^{\mathrm{rand}}$ for a randomized prediction set,
\FOR{$X_{n+1} \in \mathcal{D}_{\mathrm{test}}$}
\STATE $S^{\mathrm{rand}} = S\text{-Path}(X_{n+1}, \mathcal{D}_{\mathrm{calib}};\hat{\gamma},\hat{ \lambda},U)$
\STATE  $\hat{C}^*_{\mathrm{rand}}(X_{n+1}) = \{y\in\mathcal{Y}: S(X_{n+1},y)\leq S^{\mathrm{rand}}\}$
\ENDFOR
\end{algorithmic}
\end{algorithm}

\subsection{Coverage Under Covariate Shift}\label{sec: theory}
{  In our setting, covariate shift is encoded by a tilting function $f\in\mathcal F^*$ with $\mathbb{E}_P[f(X)]>0$, which reweights the original distribution $P$ to emphasize specific regions or subpopulations of the embedding space on which we seek to condition, $
dP_f(x)
\;=\;
\frac{f(x)}{\mathbb{E}_{P}[f(X)]}\,dP(x)$. Since the solution-path formulation allows us to fit the RKHS-based quantile regression model for any pre-selected $\lambda$ and $\gamma$, we can apply Theorem~3 of \citet{gibbs2023conformal} to obtain a conditional guarantee with respect to all such tilts $f\in\mathcal F^*$ under selected $(\hat\gamma, \hat\lambda)$ (as shown in Appendix \ref{appsec: proof}). Because $\mathcal{F}^*$ is defined in terms of an estimated low-rank projection $\hat\pi(\cdot)$ rather than the unknown true embedding of the covariates, the coverage validity is robust to errors in $\hat\pi(\cdot)$. Estimation error of $\hat\pi(\cdot)$ only impacts the effectiveness of prediction set size and the deviation from conditional guarantee given the true embedding $\pi(\cdot)$ directly. We illustrate this further via the following results.
To do so, we need the following assumptions:
}

\begin{assumption}\label{ass: data}
The pairs $\{(X_i,S_i)\}_{i\in[n+1]}$ are exchangeable. Conditional on $\{X_i\}_{i\in[n+1]}$, the responses $\{Y_i\}_{i\in[n+1]}$ are independent, with $Y_i\mid X_i\sim P_{Y\mid X=X_i}$ for each $i\in[n+1]$.
\end{assumption}
\begin{assumption}\label{ass: low-rank projection}
   The projection $\hat\pi(\cdot)$ is computed symmetrically with respect to the $n+1$ inputs.
\end{assumption}
Assumption~\ref{ass: data} relaxes the i.i.d.\ condition used in \citet{gibbs2023conformal} to exchangeability, which is standard in conformal inference and accommodates latent-variable generative structures (e.g., admixture models such as LDA \citep{blei2003latent}) that induce dependence among \(\{X_i\}\) while preserving exchangeability (see Theorem \ref{cor: mixture coverage} for details). Assumption~\ref{ass: low-rank projection} ensures the validity of the tilt function $f$ and exchangeability of $n+1$ samples under $P_f$.

To achieve a distribution-free guarantee for $\mathbb{P}(Y_{n+1} \in \hat{C}^*_{\mathrm{rand}}(X_{n+1}) | \hat\pi(X_{n+1}))$ without producing overly wide intervals, we consider one standard relaxation of conditional coverage using kernel reweighting
{such that the tilt   $f(x):=\psi^*(x, x')$ with a given fixed point $x'$, that emphasizes coverage in a neighborhood around the latent embeddings of $x'$. In this analysis, we focus purely on the RKHS component and set $\Phi^*(\cdot)\equiv 0$.}
\begin{theorem}\label{cor: local coverage}
Suppose $\{(X_i, S_i)\}_{i\in[n+1]}\overset{i.i.d}{\sim} P$ and Assumption \ref{ass: low-rank projection} holds. Assume there exists a density kernel $\psi^*_W(w, \cdot)$ on the latent space such that, for all $x_1, x_2\in\mathcal{X}$, $\psi^*_W(\hat\pi(x_1),\hat\pi(x_2))=\psi^*(x_1, x_2)$. Let $W'\mid X_{n+1}=x\sim \psi^*_W(\hat\pi(x), \cdot)$, then we have
\begin{equation}
    \label{eq: corollary 2 neighborhood}
\begin{aligned}
  &\bP(Y_{n+1}\in\hat C^{*}_{\mathrm{rand}}(X_{n+1})\mid W')\\=&1-\alpha-\frac{2\bE[\sum_{i\in[n+1]}\hat \upsilon_{S^{\mathrm{rand}}, i} \psi^*_W(W', \hat\pi(X_i))]}{\mathbb{E}[\psi^*_W( W', \hat\pi(X))]}.
\end{aligned}
\end{equation}
\end{theorem}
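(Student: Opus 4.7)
The plan is to combine two ingredients: a Bayes-rule expansion of the conditional probability, and the randomized RKHS coverage identity underlying the extension of Theorem 3 of \cite{gibbs2023conformal} invoked earlier in this section. The key structural observation is that, when $\Phi^*=0$, the class $\cF^*$ reduces to the RKHS $\cF_{\psi^*}$, which is precisely the pullback via $\hat\pi$ of the RKHS generated by the latent-space kernel $\psi^*_W$; in particular, every function $x\mapsto\psi^*_W(w',\hat\pi(x))$ for $w'$ in the latent space lies in $\cF_{\psi^*}$, since $\psi^*_W(w',\cdot)$ is the canonical feature at $w'$ in the latent-space RKHS and its pullback along $\hat\pi$ is $\cF_{\psi^*}$.

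First I would fix a realized value $w'$ and apply Bayes' rule using the prescribed conditional law $W'\mid X_{n+1}=x \sim \psi^*_W(\hat\pi(x),\cdot)$, together with symmetry of the kernel, to obtain
\begin{equation*}
\bP(Y_{n+1}\in\hat C^*_{rand}(X_{n+1})\mid W'=w') = \frac{\bE\bigl[\One\{Y_{n+1}\in\hat C^*_{rand}(X_{n+1})\}\,\psi^*_W(w',\hat\pi(X_{n+1}))\bigr]}{\bE\bigl[\psi^*_W(w',\hat\pi(X_{n+1}))\bigr]}.
\end{equation*}
This already matches the denominator in the theorem statement (under the exchangeable relabeling $X \leftrightarrow X_{n+1}$), so the task reduces to evaluating the numerator.

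Second, I would invoke the randomized RKHS coverage identity. Under the randomized cutoff $S^{rand}$ and $\Phi^*=0$, the KKT characterization of \eqref{eq: opt obj} together with the standard exchangeability/pivot argument for $S^{rand}$ yields, for every $f\in\cF_{\psi^*}$,
\begin{equation*}
\bE\bigl[f(X_{n+1})\bigl(\One\{Y_{n+1}\in\hat C^*_{rand}\} - (1-\alpha)\bigr)\bigr] = -\,\bE\!\Bigl[\sum_{i\in[n+1]} \hat\upsilon_{S^{rand},i}\,f(X_i)\Bigr].
\end{equation*}
Instantiating this identity with $f(x)=\psi^*_W(w',\hat\pi(x))$, dividing by $\bE[\psi^*_W(w',\hat\pi(X_{n+1}))]$, and substituting into the Bayes expression above produces the claimed formula after a simple rearrangement.

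The main obstacle I anticipate is rigorously justifying that the coverage identity applies to $f(x)=\psi^*_W(w',\hat\pi(x))$ for arbitrary $w'$, not only for $w'=\hat\pi(x_0)$ with $x_0$ in the observed sample, where $f(\cdot)=\psi^*(\cdot,x_0)$ is the canonical feature and the argument of \cite{gibbs2023conformal} applies verbatim. For general $w'$, either one extends that argument directly to the pullback RKHS via the representer theorem in the latent-space RKHS, or one approximates $w'$ by a sequence $\hat\pi(x_0^{(k)})\to w'$ and passes to the limit using boundedness of $\psi^*_W$ and dominated convergence; once this verification is in place, the remaining Bayes/KKT substitutions are routine.
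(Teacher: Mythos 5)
Your proposal takes essentially the same route as the paper: first apply Bayes' rule to the prescribed sampling law of $W'$ to write the conditional probability as a kernel-tilted expectation with denominator $\bE[\psi^*_W(w',\hat\pi(X))]$, then invoke the randomized coverage identity (the paper's Lemma~\ref{theorem: oracle mixture}, the randomized extension of Theorem 3 in \cite{gibbs2023conformal}) with the tilting function $f(x)=\psi^*_W(w',\hat\pi(x))$, and finally use the reproducing property to rewrite $\hat\lambda\,\langle \hat g_{S^{rand},\psi^*}, f_{\psi^*}\rangle_{\psi^*} = \sum_{i}\hat\upsilon_{S^{rand},i}\,\psi^*_W(\hat\pi(X_i), w')$. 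This is precisely the paper's chain of steps. The one place where you are more careful than the paper is in flagging that Lemma~\ref{theorem: oracle mixture} is nominally stated only for $f\in\cF^*$, i.e. with $f_{\psi^*}$ in the finite-dimensional span $\{\tfrac{1}{\hat\lambda}\sum_i\upsilon_i\psi^*(\cdot,X_i)\}$, whereas $x\mapsto\psi^*_W(w',\hat\pi(x))$ is, for a generic latent location $w'$, an element of the ambient pullback RKHS rather than of that finite span. The paper applies the lemma to such $f$ without comment. Your observation is correct and worth making; the cleanest resolution is the one you mention as the first alternative: the proof of the coverage identity only needs the reproducing property of $\psi^*$ applied to $\hat g_{S^{rand},\psi^*}$ (which \emph{is} in the finite span), paired against an arbitrary $f_{\psi^*}$ in the ambient RKHS, so the identity extends verbatim beyond $\cF_{\psi^*}$. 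The density-approximation/dominated-convergence fallback you propose would also work but is unnecessary.
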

This localized version of conformal prediction can be viewed as an approximation of conditional coverage on the event that $W'\approx \hat\pi(X_{n+1})$, with a kernel bandwidth
$\hat \gamma$ that governs the trade-off between conditional adaptivity and statistical stability.

Since our kernel takes the form
$\psi_W^\ast(w,w')=\exp\{-\gamma d_\pi(w,w')\}$, larger values of $\gamma$ induce stronger localization. This improves local adaptivity, but it can also reduce the effective local sample mass
$
p_\gamma(W') := \mathbb{E}\!\left[\psi_W^\ast(W',\widehat{\pi}(X))\right],
$
especially in sparse regions of the latent space, thereby weakening the coverage-gap bound. In contrast, smaller values of $\gamma$ average over a wider neighborhood, increasing $p_\gamma(W')$ and yielding a more stable, though less local, guarantee. Therefore, $\widehat{\gamma}$ is selected data-adaptively by our tuning procedure, together with $\widehat{\lambda}$, to balance locality and stability. Further discussion of this trade-off is provided in Lemma~\ref{lem:gap_order_sqrt_lambda}.

{ The coverage gap on the right-hand side of \eqref{eq: corollary 2 neighborhood} quantifies
how difficult it is to enforce conditional coverage in the neighborhood defined by the kernel, and decreases to at least $O(\sqrt{\hat\lambda})$ (Lemma \ref{lem:gap_order_sqrt_lambda} in Appendix~\ref{appsec: coverage gap}). When $W'$ (equivalently, $\hat\pi(X_{n+1})$) lies in a dense region of the embedding, nearby calibration points are abundant, and the coverage gap shrinks.
It requires a stronger i.i.d. assumption than exchangeability in Assumption \ref{ass: data} in order to give more relevance to data points closer to the test point in the latent space. Under this condition, the coverage gap term becomes more stable.
Rather than showing the gap is asymptotically zero \citep{guan2023localized} with strong distribution and model assumptions, our decomposition of the gap makes the source of deviation explicit and directly estimable (we provide gap estimation in Appendix~\ref{appsec: coverage gap})}

Note, however, that \eqref{eq: corollary 2 neighborhood} is stated for neighborhoods centered at the estimated embedding $\hat{\pi}(X_{n+1})$, not the true one. When $\hat{\pi}(\cdot)$ is a good approximation of the true embedding $\pi(\cdot)$, the guarantee in \eqref{eq: corollary 2 neighborhood} closely matches the conditional guarantee under the true latent embedding, which we show in Appendix \ref{appsec: embedding error local}.

Moreover, the conditional guarantee can be extended to any finite collection of groups encoded by the feature map \(\Phi^*(\cdot)\). In particular, when the covariates are generated from a mixture of latent clusters, \(\Phi^*(\cdot)\) can encode the cluster assignments using the mixture weights $\hat \pi(\cdot)$. In the oracle setting when $\pi(\cdot)$ is known, running the quantile regression in \eqref{eq:quantile reg on RKHS mixture} directly on $\pi(\cdot)$ yields exact conditional coverage for each latent cluster, as formalized below.

\begin{theorem}\label{cor: mixture coverage}
Fix $K\geq 2$ and consider the latent mixture weights $\{W_i\in \Delta^{K-1}\}_{i\in[n+1]}\overset{i.i.d}\sim P_{W}$\footnote{$\Delta^{K-1}=\{x\in \mathbb{R}^{K}: 0\leq x_k\leq 1, \sum_{k\in[K]}x_k =1\}$ is the $(K-1)$-dimensional simplex.}, and observations $\{X_i\mid W_i\}_{i\in[n+1]}\overset{i.i.d}\sim P_{X\mid W}$. Define the true embedding as $\pi(X):=\bE[W\mid X]\in\Delta^{K-1}$. Let
$$T(X):=\arg\max_{k\in[K]}\pi_{k}(X)$$ as the latent cluster of $X$. Let Assumptions \ref{ass: data}, \ref{ass: low-rank projection} hold and further assume $\bP( T(X)=k)>0$ for any $k\in[K]$.
{ Let $\hat C^{*}_{\mathrm{rand}}(\cdot)$ be the randomized prediction set calibrated with the linear term $\Phi^*(X)=(\One\{ T(X)=1\},\dots,\One\{ T(X)=K\})^\top$. Then for every $k\in[K]$,
\begin{equation}\label{eq:mixture-coverage}
\mathbb{P}\big(Y_{n+1}\in \hat C^{*}_{\mathrm{rand}}(X_{n+1}) \mid T(X_{n+1})=k\big)=1-\alpha.
\end{equation}
}
\end{theorem}

In practice, neither $W$ nor $\pi(\cdot)$ is observed and we condition on the estimated embedding $\hat\pi(\cdot)$ and its induced cluster assignment $\hat T(X)$. Our finding is that the finite-sample coverage guarantee with respect to  $\hat T(X_{n+1})$ also holds for any low-rank projection $\hat\pi(\cdot)$ (Corollary \ref{cor:learn_group_coverage} in Appendix~\ref{appsec:proof_mixture_corollary}). In the same section, we further quantify how the finite-sample guarantee based on the estimated embedding $\hat \pi(X)$ deviates from this oracle guarantee. The estimation error in $\hat \pi(X)$ does not compromise the coverage guarantee, but it can affect efficiency. In particular, a coarser embedding still maintains coverage but may yield wider prediction sets.

\section{Experiments}\label{sec: experiments}

In this section, we evaluate SpeedCP across four diverse settings: synthetic admixture data, molecular property prediction with GNNs, and brain tumor MRI analysis with a CNN. We also analyze citation-count prediction on the arXiv dataset using topic-modeling features (see Appendix~\ref{appendix: arxiv}). We summarize the results of SpeedCP and compare them with four other benchmarks: CondCP \citep{gibbs2023conformal}, PCP \citep{zhang2024posterior},  RLCP \citep{hore2023conformal} and SplitCP \citep{papadopoulos2002inductive}.

The main experiments in this section use the RBF kernel on the low-rank latent representation. We further assess whether the empirical performance of SpeedCP is sensitive to the kernel choice in Appendix \ref{app:kernel-family-sensitivity}.

\paragraph{Synthetic Experiments.}
We evaluate the performance of our method using synthetic datasets in the admixture setting where $X$ is generated from a mixture of $K=3$ latent distributions. We use the mixture proportion $\hat{\pi}(X)$ as an input to all CP methods. In this case, $\sum_{k=1}^K\hat\pi_k(X)=1$ and $\hat\pi_k(X)>0$, yielding the latent space as a simplex.

{  To test whether a method can effectively adapt to a covariate shift, we symmetrically sample the calibration mixture proportions over the simplex, but sample the test mixture proportions highly concentrated near one vertex (see the density plots in Figure \ref{fig:density}). We also consider two different predictors, a linear regression and a two-layer neural network (NN), to assess the model-agnostic behavior of the conditional conformal methods.}

\begin{figure*}[!ht]
    \centering
    \includegraphics[width=\textwidth]{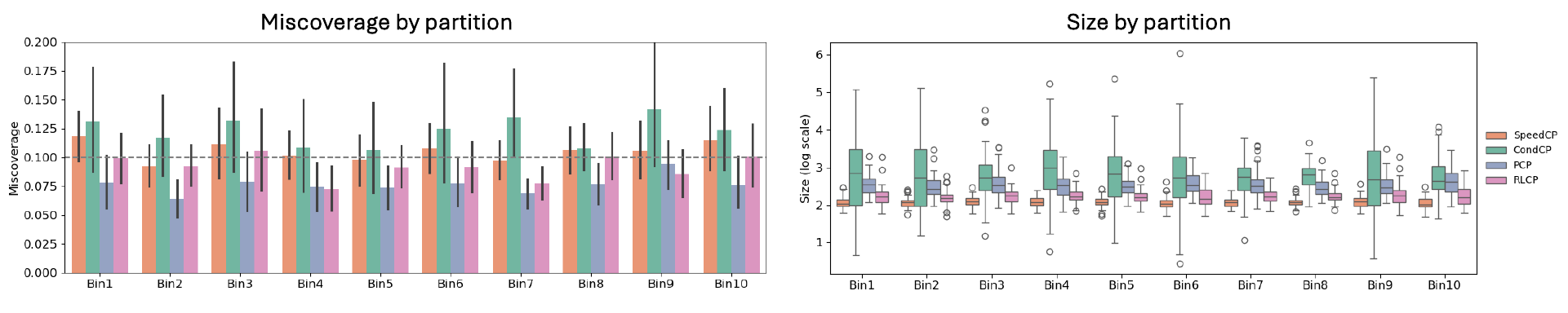}
    \vspace{-4mm}
    \caption{\textbf{SpeedCP} exhibits robust conditional coverage, maintaining a near-constant 0.1 miscoverage rate across the latent space partitions in the synthetic admixture setting. This result holds even with a low-quality predictor $\hat{\mu}$, given by linear regression. Furthermore, SpeedCP achieves the highest efficiency, consistently producing the smallest prediction sets across all bins. The binning scheme is shown in Figure~\ref{fig:density}.}

    \label{fig:bin2}
\end{figure*}

{
\begin{table*}[!ht]
\vspace{-1mm}
\caption{Marginal miscoverage, prediction set size, and computation time across $\hat \mu$ in the synthetic admixture setting.}
\vspace{-1.2mm}
  \centering
  \resizebox{0.78\textwidth}{!}{
    \begin{tabular}{l c c c c c}
      \toprule

      Method
        & \multicolumn{2}{c}{\textbf{Marginal miscoverage ($\alpha=0.1$)}}
        & \multicolumn{2}{c}{\textbf{Prediction set size}}
        & \multicolumn{1}{c}{\textbf{Time (seconds)}}\\
      \cmidrule(lr){2-3}
      \cmidrule(lr){4-5}
        & LR & NN & LR & NN & LR \\
      \midrule
      \textbf{SpeedCP}
        & \textbf{0.105 \,$\pm$\,0.07}
        & \textbf{0.098 \,$\pm$\,0.02}
        & \textbf{2.074 \,$\pm$\,0.14}
        & \textbf{0.804 \,$\pm$\,0.06}
        & 22.05 \,$\pm$\,6.22\\
      CondCP
        & 0.123 \,$\pm$\,0.13
        & 0.124 \,$\pm$\,0.05
        & 2.782 \,$\pm$\,0.72
        & 2.126 \,$\pm$\,0.31
        & 1332.67 \,$\pm$\,129.93 \\

      PCP
        & 0.076 \,$\pm$\,0.06
        & 0.088 \,$\pm$\,0.02
        & 2.535 \,$\pm$\,0.02
        & 0.910 \,$\pm$\,0.13
        & 141.64 \,$\pm$\,14.48 \\
      RLCP
        & 0.092 \,$\pm$\,0.07
        & 0.089 \,$\pm$\,0.02
        & 2.228 \,$\pm$\,0.22
        & 0.864 \,$\pm$\,0.07
        & 22.05 \,$\pm$\,0.07\\
      \bottomrule
    \end{tabular}
  }
  \label{tab:predictors}
   \vspace{-3mm}
\end{table*}
}

Figure~\ref{fig:bin2} shows that SpeedCP achieves miscoverage closest to the target level of 0.1 while producing the smallest prediction sets.
CondCP solves the same regularized quantile regression problem, but it fails to achieve reasonable coverage in several bins and produces overly wide intervals. This happens because the optimization solver it relies on does not return exact solutions and outputs conservative approximations.

In contrast, our path algorithm uses a stable piecewise-linear structure of the problem and tracks boundary events precisely, yielding tighter and more accurate prediction sets. Both PCP and RLCP tend to overcover in most bins and produce large prediction sets, as their performance is sensitive to the quality of the base predictor $\hat\mu(\cdot)$.

Appendix~\ref{appendix:sim} provides additional details on the experimental design, ablations over different sample sizes $n$, and additional conditional coverage results using a neural network predictor.

\begin{figure*}[!htbp]
    \centering
   \includegraphics[width=0.88\textwidth]{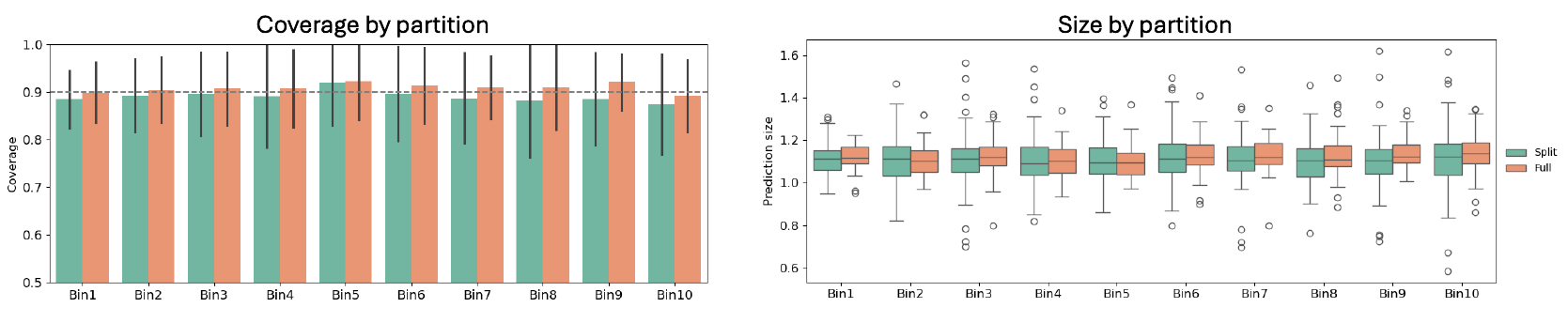}
    \caption{Conditional miscoverage and prediction set size for each fixed partition on the latent space in the synthetic admixture setting. We observe small overestimation of coverage when we use the full calibration set instead of setting aside a dataset for tuning $(\gamma, \lambda)$. Overall, the two methods are marginally different.}
    \label{fig:split}
    \vspace{-3mm}
\end{figure*}
\paragraph{Hyperparameter Tuning.}
In the theoretical results (e.g., Theorem~\ref{cor: local coverage}), we treat the hyperparameters $(\lambda,\gamma)$ as fixed to preserve exchangeability and avoid additional dependence in the conformal construction. In practice, however, we tune $(\lambda,\gamma)$ via cross-validation using the calibration data to avoid introducing an additional validation split, which can substantially increase variability in small-sample settings. In Figure \ref{fig:split}, we find empirically that running cross-validation on the calibration set, instead of using a held-out split set, does not significantly impact coverage. The resulting marginal and partition-wise coverages are all close to those obtained under a split strategy that reserves part of the calibration data exclusively for tuning (see Appendix~\ref{appendix:calib}). When sufficient data are available, we recommend using a held-out split set for hyperparameter selection, so that $(\hat\gamma,\hat\lambda)$ can be chosen independently of the conformal calibration step.

\paragraph{Molecule Graphs.} We evaluate our method on three molecular property prediction benchmarks: QM9, QM7b, and ESOL \citep{wu2018moleculenet}. For each dataset, we train a GNN to predict a molecular property: the HOMO--LUMO gap for QM9, polarizability for QM7b, and solubility for ESOL. We extract the last 64-dimensional graph embedding after pooling, and reduce it to 3 dimensions via PCA. Our objective is to achieve nominal $0.9$ coverage across this low-dimensional representation of the molecular graphs. To assess conditional coverage, we partition the PC space into 6--8 regions using Voronoi tessellation, and compute coverage within each region. We aggregate results over 50 random subsamples of 2000 graphs, and report the results in Figure~\ref{fig:molecule} and Table~\ref{tab:1}.

We observe that SpeedCP achieves nominal coverage consistently across all partitions, while achieving sharp prediction sets. In contrast, SplitCP undercovers in some regions of the latent feature space, since it relies on a single global threshold and therefore cannot adapt to local heterogeneity in the score distribution. The other baselines achieve more stable conditional coverage than SplitCP, but at the cost of larger prediction sets.
We provide additional results for each dataset in Appendix~\ref{appendix:real}.

\begin{figure*}[!htbp]
    \centering
    \includegraphics[width=0.79\textwidth]{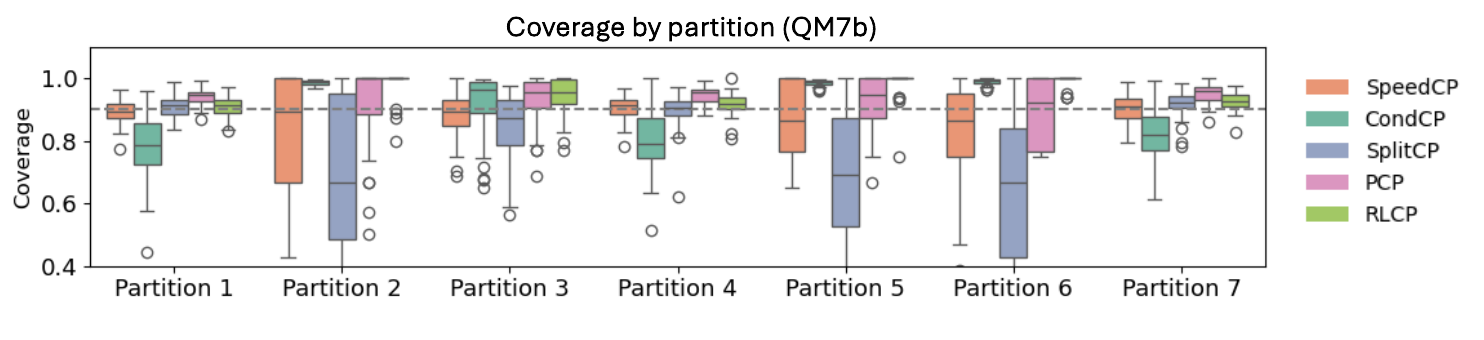}
    \vspace{-3mm}
    \caption{Coverage on fixed partitions of the PC space for QM7b. We use PCA on the last layer embeddings of GNN with $K=3$ dimensions. The dashed line denotes the target coverage rate $1-\alpha=0.9$.}
    \label{fig:molecule}
\end{figure*}

\begin{table*}[!ht]
\vspace{-1mm}
\caption{Mean prediction set size and computation time of molecule datasets.}
\vspace{-1.2mm}
  \centering
  \resizebox{0.89\textwidth}{!}{

      \begin{tabular}{l c c c c c c c}
        \toprule
        Method
          & \multicolumn{3}{c}{\textbf{Prediction set size}} & \multicolumn{3}{c}{\textbf{Time (seconds)}}\\
        \cmidrule(lr){2-4} \cmidrule(lr){5-7}

          & QM9 & QM7b & ESOL
          & QM9 & QM7b & ESOL \\
        \midrule
        \textbf{SpeedCP} & 1.135 \,$\pm$0.25 & \textbf{0.902\,$\pm$0.44} & \textbf{1.789\,$\pm$0.38}
                  & 31.061\,$\pm$2.94 & 33.056\,$\pm$7.23
                  & 15.442\,$\pm$1.55 \\
        CondCP      & 1.922\,$\pm$0.40 & 1.447\,$\pm$1.17 & 2.683\,$\pm$0.42
                  & 1531.15\,$\pm$195.60 & 1890.38\,$\pm$166.62
                  & 625.06\,$\pm$64.54 \\
        SplitCP & \textbf{1.122\,$\pm$0.122} & 0.999\,$\pm$0.37 & 1.800\,$\pm$0.17
                  & $< 0.01$ & $< 0.01$
                  &  $< 0.01$\\
        PCP & 1.530\,$\pm$0.87 & 1.303\,$\pm$1.07 & 2.261\,$\pm$1.00
                  & 38.018\,$\pm$3.48 & 47.218\,$\pm$6.50
                  & 21.659\,$\pm$2.36 \\
        RLCP & 1.554\,$\pm$0.89 & 1.286\,$\pm$1.04  & 2.248\,$\pm$1.02
                  & 1.157\,$\pm$0.02 & 1.148\,$\pm$0.01
                  & 0.668\,$\pm$0.00 \\
        \bottomrule
      \end{tabular}
      }
  \label{tab:1}\vspace{-3mm}
\end{table*}

\begin{table*}[!t]
  \centering
  \vspace{-1mm}
  \caption{Mean coverage, prediction set size, and computation time across predicted labels in the MRI dataset.}
  \vspace{-1.2mm}
  \resizebox{\textwidth}{!}{
      \begin{tabular}{l c c c c c c c c }
        \toprule
        Method
          & \multicolumn{3}{c}{\textbf{Target coverage ($1-\alpha=0.9$)}}
          & \multicolumn{3}{c}{\textbf{Prediction set size}} & \textbf{Time (seconds)}\\
        \cmidrule(lr){2-4} \cmidrule(lr){5-7}

          & Marginal & Healthy & Tumor & Marginal & Healthy & Tumor \\
        \midrule
        \textbf{SpeedCP($\One$)}\footnotemark &{0.910\,$\pm$0.01} & \textbf{0.902\,$\pm$0.02} & 0.914\,$\pm$0.02
                  & {0.262\,$\pm$0.09} & \textbf{0.250\,$\pm$0.09}
                  & 0.275\,$\pm$0.08 &  244.1\,$\pm$9.2\\
       \textbf{SpeedCP}($\Phi^*$)&{0.908\,$\pm$0.02} &\textbf{0.902\,$\pm$0.02} & \textbf{0.901\,$\pm$0.02 }
                  & {0.282\,$\pm$0.08}  & {0.266\,$\pm$0.08}
                  &{0.295\,$\pm$0.08}&  270.5\,$\pm$13.9\\
        SplitCP
        & \textbf{ 0.898\,$\pm$0.01} & 0.888\,$\pm$0.02 & 0.903\,$\pm$0.02
                  & 0.348\,$\pm$0.00 & 0.348\,$\pm$0.00
                  & 0.348\,$\pm$0.00  & $< 0.01$\\
        PCP
        & 0.918\,$\pm$0.01 & 0.945\,$\pm$0.02 & 0.902\,$\pm$0.02
                  &\textbf{ 0.231\,$\pm$0.27} & 0.281\,$\pm$0.26
                  & \textbf{0.201\,$\pm$0.28} & 162.1\,$\pm$ 13.9 \\
        RLCP
        & \textbf{ 0.898\,$\pm$0.01} & 0.888\,$\pm$0.02 & 0.903\,$\pm$0.02
                  & 0.348\,$\pm$0.00 & 0.348\,$\pm$0.00
                  & 0.348\,$\pm$0.00  & 3.48$\pm$ 0.08\\
        \bottomrule
      \end{tabular}
  }

  \vspace{-3mm}
  \label{tab:coverage for mri}
\end{table*}

\footnotetext{For the Brain Tumor MRI data, we use \textbf{SpeedCP$(\Phi^*)$} to denote calibration with a linear term that includes predicted labels, whereas \textbf{SpeedCP($\One$)} uses an intercept-only linear term with $\Phi^*(X)=1$.}

\paragraph{Brain Tumor MRI.} We evaluate on a brain--tumor MRI dataset from Kaggle\footnote{{https://www.kaggle.com/datasets/murtozalikhon/brain-tumor-multimodal-image-ct-and-mri}.} with labels $\{\texttt{healthy},\texttt{tumor}\}$. We train a CNN classifier $\hat\mu(\cdot) $ on 2{,}000 images and extract NN features from the last layer for calibration (training details in Appendix~\ref{appsec: brain mri}). Table~\ref{tab:coverage for mri} shows that even with intercept-only calibration ($\Phi^*(X)=1$), our RKHS component alone gives a good approximation for predicted-label coverage. When covariate shift aligns with label groups, adding linear terms for the predicted label, { $\Phi^*(X) = (\One\{\hat\mu(X)=\texttt{healthy}\},\ \One\{\hat\mu(X)=\texttt{tumor}\})^\top$}, provides better conditional coverage. In contrast, SplitCP achieves comparable coverage but requires more conservative sets than ours, while RLCP fails to exploit locality in the 256-dimensional feature space and effectively reduces to uniform weighting, thus converging to SplitCP.
PCP tends to overcover, especially for the healthy group, and its cutoffs are unstable with high variance and frequent near-zero values (see Appendix Table \ref{tab:cutoff-mri-raw}), thereby producing overly conservative conditional coverage.

\raggedbottom
\section{Limitations and Future Directions}\label{sec:limitation}

While we believe our algorithm can be broadly applicable in high-dimensional problems, especially when prior knowledge is limited, we highlight several limitations and directions for future work.

(1) We currently fix the miscoverage level $\alpha$ for all test points. However, $\alpha$ could be made adaptive based on latent structure or user-specified utility. For example, one might use a stricter $\alpha$ for subpopulations deemed more critical \citep{takeuchi2006entire,gauthier2025adaptive}, thereby allocating tighter guarantees where they matter most. (2) Incorporating weights into our quantile regression based on uncertainty or embeddings' importance could further refine coverage and interpretability \citep{jang2023tight}.  Although we focus on scalar regression tasks, the RKHS-based framework can be extended to structured prediction problems such as text generation \citep{sun2023conformal,farquhar2024detecting,su2024api,shahrokhi2025conformal}, image completion \citep{angelopoulos2020uncertainty,wieslander2020deep}, and other multivariate problems \citep{messoudi2021copula, johnstone2022exact, xu2024conformal} where uncertainty quantification over complex outputs is crucial.

\paragraph{Acknowledgements.}
C.D. was supported by the National Science Foundation under Award Number 2238616, as well as the resources provided by the University of Chicago's Research Computing Center. The authors are grateful to John Cherian for his valuable discussions and feedback on this manuscript.

\section*{Impact Statement}
This paper introduces SpeedCP, a method designed to improve the computational efficiency and conditional coverage of conformal prediction. Our research utilizes publicly available datasets and does not involve human subjects or sensitive personal data. We do not anticipate any significant risks of misuse or negative societal consequences arising from this methodology.

\bibliographystyle{plainnat}
\bibliography{biblio_mrlookup}

\clearpage
\appendix

\section*{Appendix}
\section{Notation and Related Work}
\subsection{Notation}\label{notation}

 For any set $\mathcal{G}$, let $|\mathcal{G}|$ denote its cardinality. Given a vector $\eta\in\bR^p$, we use $\eta(i)$ or $\eta_i$ to represent the $i$-th entry. For any $n \in \mathbb{N}$, let $[n]$ denote the index set $\{1, \dots, n\}$. Throughout this paper, we denote the sets of variables with simple bold letters (e.g. $\textbf{X}\in \bR^{n \times p}=(X_1, X_2, \dots, X_n)^\top$). Let capital letter $P$ denote the joint distribution and $P_X$ denote the marginal distribution of $X$.

Given a value $z$, let $[z]_{+}=\max(z, 0)$ and $[z]_{-}=\max(-z, 0)$. Let $\mathcal{P}_{\mathcal{B}_{n}}, \mathcal{P}_{\mathcal{B}_{\infty}}:\bR^{K}\to\bR^{K}$ denote the projection operators onto sets $\mathcal{B}_{n}, \mathcal{B}_{\infty}$, respectively. We use $Q_{1-\alpha}$ to denote the empirical $1-\alpha$ quantile of the conformal scores.

Let $a_n$ and $b_n$ be sequences of real-valued random variables or deterministic quantities indexed by $n \in \mathbb{N}$. We use the following asymptotic notation:$a_n = O(b_n)$ means there exists a constant $c> 0$ such that $|a_n| \leq c |b_n|$ for all sufficiently large $n$. $a_n = O_{\bP}(b_n)$ means that for any $\epsilon > 0$, there exists $c_\epsilon > 0$ and $N_\epsilon \in \mathbb{N}$ such that $\mathbb{P}(|a_n| > c_\epsilon |b_n|) < \epsilon$, for all $n \geq N_\epsilon$. We use small $c$ to represent a constant, which may vary line by line.

\subsection{Related Work on Conformal Prediction}\label{sec: background}

In standard split conformal prediction, the data is partitioned into three sets: the training set which is used to train a predictive model $\hat\mu(\cdot)$, the calibration set $\{X_i, Y_i\}_{i\in[n]}$ which is used to calibrate conformity scores, and finally, the test point $X_{n+1}$ with unknown response $Y_{n+1}$. Throughout this paper, we work with split conformal prediction, which generates the prediction interval for $Y_{n+1}$ as:
\begin{align}\label{eq: prediction set overall}
    \hat C(X_{n+1})=\{y: S(X_{n+1}, y)\leq q^*\},
\end{align}
where
$q^*$ is chosen as the $(1-\alpha)$-quantile of the set $\{S_i\}_{i\in [n+1]}$. The resulting prediction set contains all values $y$ for which the conformity score $S(X_{n+1}, y)$ is sufficiently small.

We demonstrate below how the various coverage can be achieved depending on the information available about the predictive model $\hat \mu(\cdot )$.

\paragraph{Marginal Coverage.}
Suppose we know that the predictive model performs equally well across the entire feature space, and the
$(n+1)$-th conformity score is drawn i.i.d. from the same distribution as the first $n$ scores. By the replacement lemma in \citet{angelopoulos2024theoretical}, the prediction set in \eqref{eq: prediction set overall} can be obtained by the threshold $q^0=Q_{1-\alpha}(\sum\nolimits_{i\in[n]}\frac{1}{n+1}\delta_{S_i}+\frac{1}{n+1}\delta_{+\infty})$. It is well known that the set $\hat C^0(X_{n+1})$ given by $q^0$ has marginal validity such that $\mathbb{P}(Y_{n+1} \in \hat C^0(X_{n+1}))\geq 1-\alpha$ \citep{papadopoulos2002inductive}. As an alternative strategy, \citet{gibbs2023conformal} proposed obtaining coverage threshold $q^0$ in \eqref{eq: prediction set overall} using an intercept-only quantile regression within the constant function class $\cF^0$. Let $S$ denote an imputed value for the unknown score $S_{n+1}$ and define the pinball loss for level $\alpha$ as $\ell_\alpha(z)=(1-\alpha)[z]_++\alpha[z]_{-}$. Then they fit
\begin{align}\label{eq: quantile reg on F0}
\hat q_S^{0}:=\arg\min_{q\in \cF^0}\frac{1}{n+1}\sum\nolimits_{i\in[n]}\ell_{\alpha}(S_i-q)+\frac{1}{n+1}\ell_{\alpha}(S-q),
\end{align}
and output the nonrandomized prediction set $ \hat C^{0}(X_{n+1})=\{y: S(X_{n+1}, y)\leq \hat q^{0}_{S(X_{n+1}, y)}\}$. They show that this procedure also satisfies the marginal validity guarantee.

Applying conformal prediction in settings with latent structure is nontrivial. There exist several challenges for conformal prediction with low-rank structure: (1) misspecification of $\hat\mu(\cdot)$ may prevent the latent structure of $X$ from being faithfully reflected in the distribution of $S\mid X$; (2) if the embedding $\hat\pi(\cdot)$ is inaccurate or incomplete so that there are few neighbors near the test point in the embedding space, prediction intervals can become overly conservative or excessively wide; and (3) an inappropriate choice of rank $K$ may undermine the conditional validity.

One prominent approach is Posterior Conformal Prediction (PCP) \citep{zhang2024posterior}, which has been detailed as follows.

\paragraph{Posterior Conformal Prediction.}
\citet{zhang2024posterior} proposed a posterior conformal prediction (PCP) framework under the assumption that $X$ exhibits a latent low-rank structure, and the predictive model $\hat\mu(\cdot)$ is well-specified. Specifically, they assume the conditional distribution of the conformity score $S\mid X$ follows a mixture model: \begin{align*}S_i\mid X_i\sim \sum\nolimits_{k\in[K]} \pi_k(X_i)   \zeta_k,\end{align*}where $ \zeta_1, \dots,  \zeta_{K}$ are distinct probability densities, and $ \pi_k(X_i)$ represent cluster membership probabilities. Adapting ideas from weighted conformal prediction, the prediction set is constructed as: \begin{align*}\hat C^{\text{PCP}}(X_{n+1})=\left\{y: S(X_{n+1}, y)\leq Q_{1-\alpha}\left(\sum\nolimits_{i\in[n]}w_i\delta_{S_i}+w_{n+1}\delta_{+\infty}\right)\right\}.\end{align*} where weights $\{w_i\}_{\in[n+1]}$ are determined by the similarity between latent structures. Let $m \hat \pi\sim \text{Multinomial}(m,  \pi(X_{n+1}))$. In the randomized setting, the weights $w_{i, rand}$ are proportional to $\exp\left\{ - \sum_{k=1}^{K} m\hat \pi_k\cdot \log \frac{\pi_k(X_{n+1})}{\pi_k(X_i)} \right\}$. In the nonrandomized setting, weights are proportional to $\exp\left\{ -m D_{\mathrm{KL}}\left( \pi(X_{n+1}) \,\|\, \pi(X_i) \right) \right\}$. Under the randomized setting, \citet{zhang2024posterior} show that PCP provides conservative conditional coverage guarantees.
\begin{align}\label{eq: pcp coverage}
   1-\alpha\leq \mathbb{P}\left( Y_{n+1} \in \hat{C}^{\text{PCP}}_{rand}(X_{n+1}) \mid  \hat{\pi} \right)\leq 1- \alpha + \mathbb{E} \left[ \max\nolimits_{i \in [n+1]} w_{i,rand} \mid \hat{\pi} \right].
\end{align}

This approach relies on the assumption that the predictive model $\hat\mu(\cdot)$ is well-specified, so that the latent structure of $\mathbf{X}$ can be faithfully reflected in the mixture structure of the conditional distribution of the scores given $X$. When $\hat\mu(\cdot)$ is inaccurate, the scores $S$ can exhibit higher variability, and the distribution of $S\mid X$ may not display a meaningful latent structure.

Instead of assuming latent structure in the noise model $S\mid X$, we directly leverage latent embeddings in the covariates $\mathbf{X}$. By calibrating conformity scores as a function of $\hat\pi(X)$ within an RKHS, rather than assuming their relationships a priori, our method remains robust under model misspecification and provides reliable uncertainty quantification.

\paragraph{Localized Conformal Prediction.}
Another related method is randomly-localized conformal prediction (RLCP) \citep{hore2023conformal}, which aims to capture heterogeneity in the conformity score by adjusting the distribution based on proximity to the test point $X_{n+1}$.
Specifically, LCP assigns higher weights, instead of $1/(n+1)$ for $q^0$ in marginal coverage, to data points closer to the test point $X_{n+1}$. These weights on $\delta_{S_i}$, for instance, are proportional to the kernel distance $\exp(-\gamma\|X_i-X_{n+1}\|^2)$ for a bandwidth parameter $\gamma>0$. While \citet{hore2023conformal} showed LCP achieves marginal validity under a randomization step, increasing the bandwidth parameter $\gamma$ can significantly widen the prediction interval, especially in high-dimensional settings.

To do the low-rank projection, RLCP applies a Gaussian reweighting to conformity scores based on distances in a latent embedding space between the test point and calibration points. This approach relies on carefully chosen embeddings that maximize the mutual information between conformity scores and covariates. When either $\hat\mu(\cdot)$ or $\hat\pi(\cdot)$ is inaccurate or incomplete so that there are few neighbors near the test point in the embedding space, RLCP often produces overly conservative or excessively wide prediction intervals by increasing $\gamma$.

In contrast, our method uses $\lambda$-path adapted to the local calibration density, allowing greater flexibility in sparse regions. This selects $(\gamma,\lambda)$ to leverage the global low-rank structure and produce more stable, calibrated prediction intervals (See Figure \ref{fig:ternary}).

\paragraph{Conditional Conformal.}
Suppose no prior information is available about the covariate shift, unlike the settings discussed in LCP and PCP. In this general setting, let $\psi: \mathcal{X}\times \mathcal{X}\to \bR$ be a positive definite kernel, and let $\cF_{\psi}$ denote the associated RKHS with an inner product $\langle\cdot, \cdot\rangle_{\psi}$ and a norm $\|\cdot \|_{\psi}$.

\citet{gibbs2023conformal} proposed the regularized kernel quantile regression for class $\cF^{RKHS}$ in \eqref{eq:rkhs} with a fixed hyperparameter $\lambda>0$:
\begin{align}\label{eq: CC kernel quantile}
\hat g^{CC}_S:=\arg\min_{g\in \cF^{RKHS}}\frac{1}{n+1}\sum\nolimits_{i\in[n]}\ell_{\alpha}(S_i-g(X_i))+\frac{1}{n+1}\ell_{\alpha}(S-g(X_{n+1}))+\lambda\|g_{\psi}\|_{\psi}^2.
\end{align}
They constructed the nonrandomized prediction set as $\hat C^{cc}(X_{n+1}):=\{y: S(X_{n+1}, y)\leq \hat g^{CC}_{S(X_{n+1}, y)}(X_{n+1})\}$
    \begin{lemma}[Theorem 3 in \citep{gibbs2023conformal}]\label{lemma: cc} Let $\psi: \mathcal{X} \times \mathcal{X}\to \bR$ be a positive definite kernel, and $\Phi:\mathcal{X}\to \mathbb{R}^{d}$ a finite
    dimensional feature map. Consider the RKHS-based function class $\cF^{RKHS}$ associated with $\psi$ and $\Phi$. Assume that $\{(X_{i}, S_i)\}_{i\in[n+1]}$ are exchangeable. Then for all $f\in \cF^{RKHS}$, we have
    \begin{align*}
        \bE\left[f(X_{n+1})\cdot\left(\One\{Y_{n+1}\in\hat C^{CC}(X_{n+1}) \}-(1-\alpha)\right)\right]=-2\lambda\bE\left[\langle\hat g^{CC}_{S_{n+1}, \psi}, f_{\psi}\rangle\right]+|\epsilon_{int}|,
    \end{align*}
    where the interpolation error $\epsilon_{int}$ satisfies $|\epsilon_{int}|\leq \bE\left[f(X_i)\One\{S_i=\hat g^{CC}_{S_{n+1}}(X_i)\}\right]$.
    \end{lemma}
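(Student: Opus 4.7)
The plan is to combine the first-order (subdifferential) optimality conditions for the regularized kernel quantile regression in \eqref{eq: CC kernel quantile} at the oracle instance $S=S_{n+1}$ with a symmetrization argument based on the exchangeability of $\{(X_i,S_i)\}_{i\in[n+1]}$. For each $i\in[n+1]$ set $r_i:=S_i-\hat g^{CC}_{S_{n+1}}(X_i)$ and pick a subgradient $v_i\in\partial\ell_\alpha(r_i)$, so that $v_i=1-\alpha$ if $r_i>0$, $v_i=-\alpha$ if $r_i<0$, and $v_i\in[-\alpha,1-\alpha]$ if $r_i=0$. Writing a test direction as $f=f_\psi+\Phi^\top\eta\in\cF^{RKHS}$ and differentiating the objective in direction $f$, I would merge the KKT conditions for the RKHS and linear components into the single identity
\begin{equation*}
\frac{1}{n+1}\sum_{i=1}^{n+1} v_i\, f(X_i) \;=\; 2\lambda\,\langle \hat g^{CC}_{S_{n+1},\psi},\, f_\psi\rangle_{\psi},
\end{equation*}
using the reproducing property and the first-order condition $\sum_i v_i \Phi(X_i)^\top\eta=0$ for every $\eta\in\bR^d$.

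Next, I would re-express the subgradients in terms of the coverage event. Outside the tie set $\{r_i=0\}$ the exact identity $v_i=(1-\alpha)-\One\{r_i\le 0\}$ holds, while on the tie set the two sides differ by some number in $[0,1]$. Substituting into the display above yields
\begin{equation*}
\frac{1}{n+1}\sum_{i=1}^{n+1} f(X_i)\bigl(\One\{r_i\le 0\}-(1-\alpha)\bigr) \;=\; -2\lambda\,\langle \hat g^{CC}_{S_{n+1},\psi},\, f_\psi\rangle_{\psi} + \epsilon_{int},
\end{equation*}
with $|\epsilon_{int}|\le \frac{1}{n+1}\sum_{i=1}^{n+1} |f(X_i)|\,\One\{S_i=\hat g^{CC}_{S_{n+1}}(X_i)\}$. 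Because $\hat g^{CC}_{S_{n+1}}$ arises from a permutation-invariant fit on the $n+1$ inputs, exchangeability implies that $(X_i,r_i)$ has the same joint distribution as $(X_{n+1},r_{n+1})$ for every $i$. Taking expectations, combined with the defining equivalence $\{Y_{n+1}\in\hat C^{CC}(X_{n+1})\}=\{r_{n+1}\le 0\}$, collapses the averaged left-hand side to a single copy centered at the test point and gives the stated identity, while a second application of exchangeability reduces the bound on $|\epsilon_{int}|$ to the form claimed in the lemma.

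The main obstacle is the handling of ties $\{r_i=0\}$, where the subgradient is not uniquely determined and the clean identity $v_i=(1-\alpha)-\One\{r_i\le 0\}$ fails; the value of $v_i$ delivered by the KKT system at the optimum is some specific point in $[-\alpha,1-\alpha]$, and I must absorb its deviation from the coverage indicator into $\epsilon_{int}$ without losing control of the bound. A secondary step is to justify the opening display rigorously: the RKHS part of the KKT system yields $\hat g^{CC}_{S_{n+1},\psi}=\frac{1}{2\lambda(n+1)}\sum_i v_i\,\psi(\cdot,X_i)$, and pairing with $f_\psi$ via $\langle\psi(\cdot,X_i),f_\psi\rangle_\psi=f_\psi(X_i)$ then produces the inner-product form used above.
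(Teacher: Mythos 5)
Your proposal is correct and follows essentially the same KKT/exchangeability argument as the cited Theorem~3 of \cite{gibbs2023conformal}; note that the paper itself does not re-derive Lemma~\ref{lemma: cc} but cites it, and the closest in-paper analogue is the proof of the randomized version, Lemma~\ref{theorem: oracle mixture}, which follows the same three ingredients you use — stationarity of the pinball loss giving $\frac{1}{n+1}\sum_i v_i f(X_i)=2\lambda\langle \hat g_{\psi},f_\psi\rangle_\psi$, the identification $v_i=(1-\alpha)-\One\{r_i\le 0\}$ off the tie set, and exchangeability to collapse the symmetric average to the test point — with randomization used there to eliminate $\epsilon_{\mathrm{int}}$ rather than bound it. One small point worth flagging: your bound $|\epsilon_{\mathrm{int}}|\le \bE[|f(X_i)|\One\{S_i=\hat g^{CC}_{S_{n+1}}(X_i)\}]$ carries the absolute value on $f$, which is the correct form; the statement as transcribed in the paper drops that absolute value, which is a minor typo.
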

The interpolation term $\epsilon_{int}$ can be removed when randomized prediction sets are used (see Lemma \ref{theorem: oracle mixture}).

Similar to the challenges faced in localized conformal prediction, solving the optimization problem \eqref{eq:quantile reg on RKHS mixture} using a kernel $\psi$ defined over the original high-dimensional feature space can lead to oversmoothing and wider prediction interval. In particular, when $p\gg n$
the RKHS norm $\|g_{\psi}\|_{\psi}$ becomes large unless regularization  $\lambda$ is increased significantly, which in turn flattens the estimated quantile function $\hat g_S(\cdot)$  As a result, the prediction set may have poor local adaptivity, leading to wider intervals and coverage gaps.

\section{Computational Details for SpeedCP}

\subsection{Low-Rank Projection Using Admixture Model}\label{sec: topic modeling}
In this work, we consider high-dimensional covariates ${X}\in \mathbb{R}^p$ with $p \gg n$ and denote their low-rank representation map as $\hat\pi : \mathcal{X} \to \bR^{K}$ with $K \ll p$. A simple choice of $\hat{\pi}(\cdot)$ is principal component analysis (PCA), where $\hat\pi({X})={X}^\top \mathbf{V}$, with $K$ principal directions $\mathbf{V} \in \mathbb{R}^{p\times K}$. Alternatively, probabilistic models such as latent Dirichlet allocation (LDA) \citep{blei2003latent} provide interpretable embeddings, representing each $X$ as a mixture of latent components $\{\zeta_k\}_{k\in[K]}$. In deep learning models, one can also consider applying low-rank projections on layer embeddings. For the simulation experiments and experiments with ArXiv abstracts, we consider the admixture model under the probabilistic Latent Semantic Indexing (pLSI) \citep{hofmann1999probabilistic},
\begin{align}\label{eq: multinomial}
    m X_i\mid W_i=w_i\sim \text{Multinomial}(m, \sum\nolimits_{k\in[K]}  w_i(k)\zeta_k)
\end{align}
where $W_i\in \Delta^{K-1}$ denotes the latent mixture proportions and $\zeta_k$ represents the latent distribution. $m$ denotes the document length. This shows $\mathbb{E}[X_i\mid W_i]=\bzeta^{\top}W_i$. However, this decomposition in general may not be unique, but under the
separability condition \citep{donoho2003does} or anchor word condition \citep{arora2012learning}, $\bzeta$ is identifiable.

When applying RKHS methods to compositional data such as mixture proportions $\hat \pi(X)$, it is essential to first transform the simplex into Euclidean space.
If we perform kernel regression or smoothing over $\hat\pi$ directly, the output might be outside the simplex. Suppose $\hat\pi(X_i)$ lies in the open simplex such that all entries are positive, then the log-ratio transformation (such as additive, centered, and isometric log-ratio transformations)\citep{aitchison1982statistical} can be used.
\paragraph{Centered Log-Ratio Transformation (clr).}
If $ \hat\pi_k(X_i) > 0$ for all $i, k$,
\begin{align*}
\hat\theta_{ik}:=
 \log \hat\pi_k(X_i) - \frac{1}{K} \sum_{j \in [K]} \log \hat\pi_j(X_i)\end{align*}

Given this transformation, we define the kernel similarity between points as:
\begin{align*}
&d_{\pi}(X_i, X_j):= \|\hat\theta_i - \hat\theta_j\|^2; \qquad
\psi^*(X_i, X_j) = \exp\left\{ -\gamma \|\hat\theta_i - \hat\theta_j\|^2   \right\}
\end{align*}

\paragraph{pLSI Using SVD.} Let $\mathbf{X}:=\mathbf{X}_{train}\cup \mathbf{X}_{calib}\cup \mathbf{X}_{test}\in \bR^{n_{all}\times p}$. Here, we present one of the algorithms used to estimate the latent embeddings $\bpi:=\bpi(\mathbf{X})=\mathbb{E}[\mathbf{W}\mid \mathbf{X}]$ from $\mathbf{X}$. When $m\to \infty$, the posterior mean $\mathbb{E}[W_i\mid {X}_i]$ concentrates around the true mixture proportion $w_i$. If $\bpi$ and $\bzeta$ are full-rank matrices and the $K$-th largest singular value satisfies $\lambda_K(\bpi\bzeta^{\top})>0$, we start with the singular value decomposition of the matrix $ \bpi \bzeta ^{\top}$:
\begin{align*}
    \bpi \bzeta ^{\top}=\boldsymbol{\Xi}\boldsymbol{\Lambda} \mathbf{V}^\top\implies   \boldsymbol{\Xi}=\bpi \bzeta ^{\top}\mathbf{V}\boldsymbol{\Lambda}^{-1}:=\bpi \mathbf{H}
\end{align*}
with some nonsingular matrix $H$. Notice that each row of $\bpi \in \bR^{n_{all}\times K}$ is a probability vector (i.e., nonnegative and sums to 1). Given this simplex structure, we can recover the matrix $\boldsymbol{H}$ from $\boldsymbol{\Xi}$ using nonnegative matrix factorization techniques. In particular, methods such as the \textit{Successive Projection Algorithm} (SPA) \citep{araujo2001successive, gillis2013fast} and \textit{Archetypal Analysis} \citep{javadi2020nonnegative} are effective in recovering the extreme points (vertices) of the convex hull.
\begin{algorithm}[ht]
\setstretch{1.35}
\caption{pLSI using SVD \citep{klopp2021assigning}} \label{algo:topic modeling}
\begin{algorithmic}
\STATE \textbf{Input:} $\mathbf{X}\in \bR^{n_{all}\times p}$, latent dimension $K$
\STATE \textbf{Output:} $\widehat{\bpi}_{train},\widehat{\bpi}_{calib},\widehat{\bpi}_{test} = \hat{\pi}(\mathbf{X},K)$
\STATE 1. Get the rank-$K$ SVD of $\mathbf{X}=\widehat{\boldsymbol{\Xi}}\widehat{\boldsymbol{\Lambda}}\widehat{\mathbf V}^\top$

\STATE 2. (Vertex hunting algorithm) Apply the vertex hunting algorithm on the rows of $\widehat{\boldsymbol{\Xi}}$ to get the vertices $\widehat{\mathbf H}$

\STATE 3. Set $\hat\pi(\mathbf{X})=\widehat{\boldsymbol{\Xi}} \widehat{\mathbf H}^{-1}$ and thus $\hat \pi(X_i)=(\widehat{\mathbf H}^{-1})^\top \widehat{\boldsymbol{\Xi}}_{i\cdot}$
\end{algorithmic}
\end{algorithm}

\subsection{\texorpdfstring{Derivation of $\lambda$-path and $S$-path}{Derivation of lambda-path and S-path}}\label{sec: sqkr}

In this section, we provide technical details on our path-tracing approaches of $\lambda$ and $S$. Recall the RKHS function class $\mathcal F^*$,
\begin{equation}\label{eq:fstar}
\begin{aligned} & \cF^{*}=\left\{f_{\psi^*}(\cdot)+\Phi^{*}(\cdot)^\top \eta:f_{\psi^*} \in \cF_{\psi^*}, \eta\in \bR^{d} \right\},
\end{aligned}\end{equation}
we begin with some preliminaries.

Denote $S_i=S(X_i, Y_i)$ as the score of the $i^{th}$ point in the calibration set for $i \in [n]$ and $S_{n+1}$ as the score of a test point. To decide the score cutoff we use for a prediction set, we proceed to fit a RKHS quantile regression on $n$ calibration points together with the test point. Since the true score of the test point, $S_{n+1}$ is unknown, we set the score of the test point, $S_{n+1}$, as an arbitrary value $S$. Let $\alpha\in (0,1)$ be a user-specified miscoverage level. The objective then becomes,
\begin{equation}\label{eq: opt obj1}
\hat g_S=\arg\min_{g\in \cF^{*}}\frac{1}{n+1}\sum_{i\in[n]}\ell_{\alpha}(S_i-g(X_i))+\frac{1}{n+1}\ell_{\alpha}(S-g(X_{n+1}))+\frac{\lambda}{2}\|g_{\psi^*}\|_{\psi^*}^2,
\end{equation}
with the known solution in finite form:
\begin{equation}
\label{eq:fit}
    \hat g_S(X) =\Phi^{*}(X)^\top\hat\eta_{S}+\frac{1}{\lambda}\sum_{i=1}^{n+1}\hat \upsilon_{ S, i}\psi^*(X,X_i),
\end{equation}
We define $\Phi^{*}(X) \in \mathbb{R}^{d}$ as any feature representation of $X$ and $\eta_{S,j}$ as the coefficient of $\Phi^{*}(X)_j$, $j\in [d]$. Plugging this into the objective, it becomes,
\begin{equation*}
    \min_{\eta_{S}, \upsilon_{S}} \sum_{i=1}^{n+1}\ell_{\alpha}\big(S_i-\Phi^{*}(X_i)^\top\eta_{S}-\frac{1}{\lambda}\sum_{i'=1}^{n+1} \upsilon_{S,i'}\psi^*(X_i,X_{i'})\big)+\frac{1}{2\lambda}\sum_{i, i'=1}^{n+1}\upsilon_{ S, i}\upsilon_{S, i'}\psi^*(X_i,X_{i'}).
\end{equation*}
 with the Lagrangian primal function as
 \begin{equation}\label{eq: lagrangian}
     \begin{split}
         L_p &= (1-\alpha)\sum_{i=1}^{n+1}p_i+\alpha\sum_{i=1}^{n+1}q_i+\frac{1}{2\lambda}\upsilon^{\top}_S \mathbf \Psi^* \upsilon_S\\
         &\qquad +\sum_{i=1}^{n+1} \sigma_i(S_i-g_S(X_i)-p_i)-\sum_{i=1}^{n+1}\tau_i(S_i-g_S(X_i)+q_i)\\
         &\qquad -\sum_{i=1}^{n+1}\kappa_ip_i-\sum_{i=1}^{n+1}\rho_iq_i,
     \end{split}
 \end{equation}
and $\sigma, \tau, \kappa,\rho$ are nonnegative Lagrangian multipliers. Here, $\Psi^*\in \mathbb{R}^{(n+1)\times(n+1)}$ denotes the kernel matrix where its $(i,i')$ element denotes $\psi^*(X_i, X_{i'})$. Setting the derivatives of $L_p$ at 0,
\begin{equation}\label{eq:deriv app1}
    \begin{split}
        \frac{\partial L_p}{\partial\upsilon_{S,i}} &: \upsilon_{S,i}=\sigma_i-\tau_i\\
        \frac{\partial L_p}{\partial\eta_{S,j}} &: \sum_{i=1}^{n+1}\sigma_i\Phi^*(X_i)_j=\sum_{i=1}^{n+1}\tau_i\Phi^*(X_i)_j, \ \ j \in [d]\\
        \frac{\partial L_p}{\partial p_i} &: \sigma_i=1-\alpha-\kappa_i\\
        \frac{\partial L_p}{\partial q_i} &: \tau_i=\alpha-\rho_i.\\
    \end{split}
\end{equation}

The Karush–Kuhn–Tucker (KKT) conditions give
\begin{equation}\label{eq:kkt1}
    \begin{split}
        \sigma_i(S_i-g_S(X_i)-p_i)&=0\\
        \tau_i(S_i-g_S(X_i)+q_i)&=0 \\
        \kappa_ip_i&=0\\
        \rho_iq_i&=0
    \end{split}
\end{equation}
Since Lagrangian multipliers are nonnegative, $0\leq \sigma_i\leq 1-\alpha$ and $0\leq\tau_i\leq \alpha$, combining \eqref{eq:deriv app1} and \eqref{eq:kkt1}, we can easily see that,
\begin{equation}\label{eq:before_elbow}
    \begin{split}
        S_i-g_S(X_i)>0 \ \ &\Rightarrow \ \ \  p_i>0,\ \kappa_i=0, \ \sigma_i=1-\alpha, \ \tau_i=0 \ \ \Rightarrow \ \ \upsilon_{S,i} =1-\alpha\\
        S_i-g_S(X_i)<0 \ \ &\Rightarrow \ \ \   q_i>0, \ \rho_i=0, \ \tau_i=\alpha, \ \sigma_i = 0 \ \ \Rightarrow \ \ \upsilon_{S,i} =- \alpha\\
        S_i-g_S(X_i)=0 \ \ &\Rightarrow \ \  p_i=q_i=0, \ \sigma_i\in(0,1-\alpha], \ \tau_i\in(0,\alpha] \ \ \Rightarrow \ \ \upsilon_{S,i} \in(-\alpha, 1-\alpha)\\
    \end{split}
\end{equation}

With $\widehat r_{S,i}:=S_i-\hat g_{S}(X_i)$,
the KKT conditions induce three index sets:
\begin{align} \label{eq:active_sets}
  E&:=\bigl\{i:\;\widehat r_{S,i}=0,\;\hat\upsilon_{S,i}\in(-\alpha, 1-\alpha)\bigr\},\\
  L&:=\bigl\{i:\;\widehat r_{S,i}<0,\;  \hat\upsilon_{S,i}=-\alpha\bigr\},\\
  R&:=\bigl\{i:\;\widehat r_{S,i}>0,\;  \hat\upsilon_{S,i}=1-\alpha\bigr\}.
\end{align}

\subsection{\texorpdfstring{Derivation of $\lambda$-path}{Derivation of lambda-path}}

We use $\lambda$-path to tune the regularization (or smoothness) parameter $\lambda$, which we combine with cross validation on the kernel bandwidth $\gamma$ to determine the optimal hyperparameter pair. The same \eqref{eq: opt obj1}-\eqref{eq:active_sets} hold, but the RKHS quantile regression is now estimated with { m separate points}. The motivation for this is to fix the hyperparameters before constructing prediction sets, which is necessary for our theoretical guarantees. The index sets $(E, L, R)$ evolve with different $\lambda$ values. We denote them as $(E(\lambda), L(\lambda), R(\lambda))$.

We start with a sufficiently large initial value $\lambda^1$ and decrease it toward 0. As $\lambda$ decreases, data points move from the left of the elbow, stay in the elbow, then move to the right of the elbow (or vice versa). Any change in the elbow set is denoted as an ``event". The next $\lambda$ is updated as the largest value where such event occurs. At each update, we calculate $\hat{\upsilon}_{i}$ for the points in $E(\lambda)$ since $\{\hat{\upsilon}_{i'}\}_{i'\in[m]}$ in $L(\lambda), R(\lambda)$ are fixed.

{ In this section, we assume the columns of the $m$-point projection matrix $\mathbf\Phi^{*}\in \mathbb{R}^{m \times d}$ are linearly independent. Denote $\mathbf \Phi^{*}_A$ as a submatrix of $\mathbf \Phi^{*}$ whose row indices are in  set $A$. Also denote $\mathbf{\Psi}^{*}_{AB}$ as a submatrix of $\mathbf{\Psi}^{*}\in \mathbb{R}^{m\times m}$ whose row indices are in set $A$ and column indices are in set $B$.}
\subsubsection{Proof of Proposition \ref{prop:lambda_path}}

We now prove Proposition \ref{prop:lambda_path}, which states an affine relationship of $\hat{\upsilon}_{i'}(\lambda)$'s and $\hat{\eta}(\lambda)$ on $\lambda$ between two change points of $\lambda$. If $\hat{\upsilon}_{i'}(\lambda)$'s and $\hat{\eta}(\lambda)$ are affine in $\lambda$ between any change points, then they are piecewise-linear in $\lambda$, which makes the solution path tractable for any $\lambda\leq \lambda^1$. We provide a more detailed version of the proof in Section~\ref{sec:prop_S}, which has identical steps as Proposition~\ref{prop:lambda_path}.

\begin{proof}
Let $\{\lambda^l\}_{l=1,2,3,\cdots}$ be the change points when an event occurs. Consider an interval $\lambda^{l+1} \leq \lambda \leq \lambda^l$ during which the sets stay the same, i.e., $(E(\lambda), L(\lambda), R(\lambda))=(E(\lambda^l), L(\lambda^l), R(\lambda^l))$. Denote $\hat{\upsilon}_{i'}(\lambda)$ and $\hat{\eta}(\lambda)$ as the solution of \eqref{eq: opt obj1} given $\lambda$. In this proof, denote $E=E(\lambda)=E(\lambda^l)$, $L=L(\lambda)=L(\lambda^l)$, and $R=R(\lambda)=R(\lambda^l)$. Define two quantities.
\[
d_E:=\frac{1}{\lambda}((-\alpha)\,\mathbf{\Psi}^{*}_{EL}\mathbf 1_L+(1-\alpha)\,\mathbf{\Psi}^{*}_{ER}\mathbf 1_R),
\qquad
\mathbf\Pi_E:=I_{|E|}-\mathbf\Phi^{*}_E\bigl(\mathbf\Phi_E^{*\top}\mathbf\Phi^{*}_E\bigr)^{-1}\mathbf\Phi_E^{*\top}.
\]
Let
\(S_E:=(S_{i'})_{i'\in E},
  d_E:=(d_{i'})_{i'\in E},
  \mathbf\Phi^{*}_E\in\mathbb R^{|E|\times d},
  \mathbf{\Psi}^{*}_{EE}\in\mathbb R^{|E|\times|E|}\). By the definition of the elbow set combined with \eqref{eq:fit},
\begin{equation}
  S_E=\mathbf\Phi^{*}_E\,\hat{\eta}(\lambda)+\frac{1}{\lambda}\,\mathbf{\Psi}^{*}_{EE}\,\hat{\upsilon}_{E}(\lambda)
       +d_E. \label{eq:lambda_equalities}
\end{equation}
Projecting with $\mathbf\Pi_E$ eliminates $\hat{\eta}(\lambda)$,
\begin{equation}
  \mathbf\Pi_E\mathbf{\Psi}^{*}_{EE}\,\hat{\upsilon}_{E}(\lambda)
  \;=\;\lambda\,\mathbf\Pi_E (S_E-d_E). \label{eq:proj_lambda}
\end{equation}
Moreover, the second KKT constraint in \eqref{eq:deriv app1} gives $\mathbf\Phi^{*\top}\hat{\upsilon}  = 0$. This is equivalent to,
$$ \mathbf\Phi_E^{*\top} \hat{\upsilon}_{E}(\lambda ) = \alpha \mathbf\Phi_L^{*\top} \mathbf{1}_L - (1-\alpha) \mathbf\Phi_R^{*\top} \mathbf{1}_R. $$
Define $\mathbf A:=\mathbf\Pi_E\mathbf{\Psi}^{*}_{EE}\mathbf\Pi_E$. Using its Moore–Penrose inverse (denoted by superscript \(\dagger\)),
\begin{equation}\label{eq:proj_lambda2}
\begin{split}
       \mathbf\Pi_E \hat{\upsilon}_{E}(\lambda)
     &=
    \lambda
    \mathbf A^{\dagger}
    \,\mathbf\Pi_E
    \bigl(S_E-d_E \bigr)
  \\
  &- \alpha\, \mathbf A^{\dagger}\mathbf\Pi_E\mathbf \Psi^{*}_{EE}  \mathbf \Phi_E^{*} (\mathbf\Phi_E ^{*\top} \mathbf \Phi^{*}_E)^{\dagger} \mathbf\Phi_L^{*\top}\mathbf 1_L\\
  &
  +(1-\alpha)\,\mathbf A^{\dagger}\mathbf \Pi_E\mathbf \Psi^{*}_{EE}  \mathbf \Phi_E^{*} (\mathbf\Phi_E ^{*\top} \mathbf \Phi^{*}_E)^{\dagger}\mathbf\Phi_R^{*\top}\mathbf 1_R.
   \end{split}
\end{equation}
 Thus, the minimum–norm solution on $\mathrm{Im}(\,\mathbf \Pi_E)$ is,
\begin{equation}\label{eq:v_affine_lambda}
\begin{split}
    \hat{\upsilon}_{E}(\lambda)
  &=\lambda \mathbf A^\dagger\mathbf \Pi_E\bigl(S_E -d_E\bigr)\\
  &+[I_{|E|}-\mathbf A^\dagger\mathbf \Pi_E\mathbf \Psi^{*}_{EE}]\mathbf \Phi^{*}_E\bigl(\mathbf \Phi_E^{*\top}\mathbf \Phi^{*}_E\bigr)^{\dagger}
    \bigl[\alpha\,\mathbf \Phi_L^{*\top}\mathbf 1_L-(1-\alpha)\,\mathbf \Phi_R^{*\top}\mathbf 1_R\bigr].
\end{split}
\end{equation}
Thus, $\hat{\upsilon}_{E}(\lambda)$ is \emph{affine in $\lambda$} on the interval.
From \eqref{eq:lambda_equalities},
\begin{equation}
  \hat{\eta}(\lambda)
  =(\mathbf \Phi_E^{*\top}\mathbf \Phi^{*}_E)^{-1}\mathbf \Phi_E^{*\top}
   \Bigl[S_E-d_E-\frac{1}{\lambda}\,\mathbf \Psi^*_{EE}\,\hat{\upsilon}_{E}(\lambda)\Bigr],
  \label{eq:eta_affine_invlam}
\end{equation}
hence $\hat{\eta}(\lambda)$ is \emph{affine in $1/\lambda$}. We have shown that,

\begin{equation}\label{affine_lambda}
    \hat{\upsilon}_{E}(\lambda) = a+\lambda a, \qquad \hat{\eta}(\lambda) = a^{(1)}+\frac{a^{(1)}}{\lambda}.
\end{equation}

with \( a, b \in\mathbb R^{|E|},  a^{(1)}, b^{(1)}\in\mathbb R^{d}\) constant on the segment. For $i\in L(\lambda), R(\lambda)$, $\hat{\upsilon}_{i'}$ is constant, making it affine in $\lambda$ as well. Finally, for any $i' \in [m]$,
\begin{equation}
\begin{split}
    \hat{g}(X_{i'})
 &= \mathbf\Phi^{*}_{i'\cdot}(a^{(1)}+\frac{a^{(1)}}{\lambda}) + \frac{1}{\lambda}\mathbf\Psi^*_{i',E}(a+\lambda a)+d_{i'}\\
 &= \frac{1}{\lambda}(\mathbf\Phi^{*}_{i'\cdot}a^{(1)}+\mathbf\Psi^*_{i',E}a) +\mathbf\Phi^{*}_{i'\cdot}a^{(1)} + \mathbf\Psi^*_{i',E}a+d_{i'},
\end{split}
\end{equation}

which makes the residual $r_{i'}(\lambda)=S_{i'}-\hat{g}(X_{i'})$ affine again in $1/\lambda$ for $i'\in[m]$ on the interval.
\end{proof}

\subsubsection{\texorpdfstring{Update of $\lambda^l$}{Update of lambdal}}

Let $\lambda^l$ denote the value after the $l^{th}$ event. The elbow set $E(\lambda^l)$ is updated when one of the following events occurs,
\begin{itemize}
    \item A point $i$ in either $L(\lambda^l)$ or $R(\lambda^l)$ enters the elbow set (residual $S_i - \hat{g} (X_i)$ becomes 0).

\item A point $i$ in $E(\lambda^l)$ leaves to the left or right set ($\hat{\upsilon}_{i}(\lambda), i \in E(\lambda^l)$ becomes $-\alpha$ or $1-\alpha$).
\end{itemize}

We take $\lambda^{l+1}$ as the largest $\lambda \leq \lambda^{l}$ that triggers one of the events and update $(E, L, R)$ accordingly. Here, let $E=E(\lambda)=E(\lambda^l)$. Denote the linear parameter $\hat{\eta}^{\lambda}_{j}=\lambda\hat{\eta}_{j}(\lambda)$ for $j \in [d]$. From \eqref{eq:fit}, for $\lambda^{l+1}\leq \lambda \leq \lambda^{l}$, the fit at $\lambda$ is,
 \begin{equation*}
 \begin{split}
     \hat{g}(X_{i'}) &= \mathbf\Phi^{*}_{i'\cdot}\hat{\eta}(\lambda)+\frac{1}{\lambda}\mathbf\Psi^*_{i',\cdot}\hat{\upsilon}(\lambda) \\
     &= \mathbf\Phi^{*}_{i'\cdot}\hat{\eta}(\lambda)+\frac{1}{\lambda}\big(\mathbf\Psi^*_{i',E}\hat{\upsilon}_E(\lambda)-\alpha\mathbf\Psi^*_{i',L}\mathbf{1}_L+(1-\alpha)\mathbf\Psi^*_{i',R}\mathbf{1}_R\big)\\
     &= \frac{1}{\lambda}\big(\mathbf\Phi^{*}_{i'\cdot}\hat{\eta}^{\lambda}+\mathbf\Psi^*_{i',E}\hat{\upsilon}_E(\lambda)+d_{i'}\big) ,
     \end{split}
 \end{equation*}

where
 \begin{equation*}
      d_{i'}:= -\alpha\mathbf\Psi^*_{i',L}\mathbf{1}_L+(1-\alpha)\mathbf\Psi^*_{i',R}\mathbf{1}_R.
 \end{equation*}

Let $\hat{g}^l(X_{i'})$ be the estimated function with $\lambda^l$. Now, we can express $\hat{g}(X_{i'})$ with $\lambda^l$ and $\hat{g}^l(X_{i'})$,
 \begin{equation}\label{eq:fit_l}
     \begin{split}
         \hat{g}(X_{i'}) &=\hat{g}(X_{i'})-\frac{\lambda^l}{\lambda}\hat{g}^{l}(X_{i'})+\frac{\lambda^l}{\lambda}\hat{g}^{l}(X_{i'})\\
         &=\frac{1}{\lambda}\big[\mathbf\Phi_{i'\cdot}^{*}(\hat{\eta}^{\lambda}-\hat{\eta}^{\lambda^l}) +\mathbf\Psi^*_{i',E}(\hat{\upsilon}(\lambda)-\hat{\upsilon}^l(\lambda))+d_{i'}-d_{i'}+\lambda^l\hat{g}^l(X_{i'})\big]\\
         &=\frac{1}{\lambda}\big[\mathbf\Phi_{i'\cdot}^{*}(\hat{\eta}^{\lambda}-\hat{\eta}^{\lambda^l}) +\mathbf\Psi^*_{i',E}(\hat{\upsilon}(\lambda)-\hat{\upsilon}^l(\lambda))+\lambda^l\hat{g}^l(X_{i'})\big].
     \end{split}
 \end{equation}

 Recall from the second KKT condition \eqref{eq:deriv app1}, we have $\upsilon_{i'}=\sigma_{i'}-\tau_{i'}$ and $\sum_{i'=1}^{m}(\sigma_{i'}-\tau_i)\mathbf\Phi^{*}_{i',j}=\sum_{i'=1}^{m}\upsilon_{i'}\mathbf\Phi^{*}_{i',j}=0$ for $j=1,\cdots,d$.

 Component-wise,

 \begin{equation*}
     \mathbf\Phi_{E}^{*\top}\hat{\upsilon}_E(\lambda)
   -\alpha \mathbf\Phi_{L}^{*\top}\mathbf{1}_L
   + (1-\alpha)\mathbf\Phi_{R}^{*\top}\mathbf{1}_R=0,
 \end{equation*}
 and

 \begin{equation*}
      \mathbf\Phi_{E}^{*\top}\hat{\upsilon}_E(\lambda^l)
   -\alpha \mathbf\Phi_{L}^{*\top}\mathbf{1}_L
   + (1-\alpha)\mathbf\Phi_{R}^{*\top}\mathbf{1}_R=0,
 \end{equation*}

 leading to

 \begin{equation}\label{eq:kkt_constraint}
    \mathbf\Phi_{E}^{*\top}(\hat{\upsilon}_E(\lambda)-\hat{\upsilon}_E(\lambda^l))=0.
 \end{equation}

Denote $\bar{\upsilon}_{i'}=\hat{\upsilon}_{i'}(\lambda)-\hat{\upsilon}_{i'}(\lambda^l)$ for $i' \in E$ and $\bar{\eta}_j=\hat{\eta}^{\lambda}_{j}-\hat{\eta}^{\lambda^l}_{j}$ for $j \in [d]$. For any $i \in E^l$, $\hat{g}(X_{i})=S_{i}$. Let $S_{E}$ be the stacked scores for $E$. Then, \eqref{eq:fit_l} becomes,
 \begin{equation*}
  \mathbf\Phi^{*}_{E}\bar{\mathbf{\eta}}+ \mathbf \Psi^*_{EE}\bar{\upsilon}=(\lambda-\lambda^l)S_{E}
 \end{equation*}
 Combining with \eqref{eq:kkt_constraint} and representing in a matrix form,
 \begin{equation*}
     \begin{split}
         \begin{pmatrix}
 \mathbf\Phi^{*}_{E} & \mathbf \Psi^*_{EE} \\
 \mathbf{0} & \mathbf\Phi_{E}^{*\top}\\
 \end{pmatrix} \begin{pmatrix}
 \bar{\eta} \\ \bar{\upsilon}
 \end{pmatrix} &=(\lambda-\lambda^l)\begin{pmatrix}
 S_E \\\mathbf{0}
 \end{pmatrix}\\
 \mathbf A^l\mathbf{\beta}&=(\lambda-\lambda^l)\mathbf{S}_0\\
 a &=(\mathbf A^l)^{-1}\mathbf{S}_0,
     \end{split}
 \end{equation*}

where $a=\beta/(\lambda-\lambda^l)$. Let $a_u=\bar{\eta}/(\lambda-\lambda^l)$ and $a_v=\bar{\upsilon}/(\lambda-\lambda^l)$. Plugging $a_u, a_v$ back to \eqref{eq:fit_l}, we reexpress the estimated function as a function of $a$,
\begin{equation*}
    \hat{g}(X_{i'}) = \frac{\lambda^l}{\lambda}[\hat{g}^l(X_{i'})-h^l(X_{i'})] +h^l(X_{i'})
\end{equation*}
where
\begin{equation*}
    h^l(X_{i'}) = \mathbf\Phi^{*}_{i'\cdot}a_u+\mathbf\Psi^*_{i',E}a_{v}
\end{equation*}
for $i' \in E$. Finally, to decide $\lambda^{l+1}$, we choose which event (whether a point enters or exits the elbow set). The first event will happen for $\lambda$ such that a point in $L(\lambda^l)$ or $R(\lambda^l)$ set satisfies $\hat{g}(X_{i'})=S_{i'}$, leading to,
\begin{equation*}
    \lambda^{l+1, hit} = \max_{i'\in L(\lambda^l),R(\lambda^l)}\lambda^{l}\frac{\hat{g}^l(X_{i'})-h^l(X_{i'})}{S_{i'}-h^l(X_{i'})}\mathbf{1}\Big\{\frac{\hat{g}^l(X_{i'})-h^l(X_{i'})}{S_{i'}-h^l(X_{i'})}\leq 1\Big\}.
\end{equation*}

Here, the indicator is to ensure that the updated $\lambda$ is smaller than $\lambda^l$ so that the path is monotonically decreasing. To find $\lambda$ such that a point leaves $E(\lambda^l)$,
\begin{equation*}
    \lambda^{l+1,leave}=\lambda^l+\max_{i' \in E(\lambda^l)}\{x \in\{\frac{-\alpha-\hat{\upsilon}_{i'}(\lambda^l)}{a_{v,i'}},\frac{1-\alpha-\hat{\upsilon}_{i'}(\lambda^l)}{a_{v,i'}}\} \ | \ x \leq 0\big\}
\end{equation*}

We then take $\lambda^{l+1} = \max\big\{\lambda^{l+1,hit}, \lambda^{l+1,leave}\big\}$. We also update $(E,L,R)=(E(\lambda^{l+1}), L(\lambda^{l+1}), R(\lambda^{l+1}))$ accordingly based on which event occurred. Finally, parameters $\hat{\upsilon}_{i'}(\lambda^{l+1})$'s, $\hat{\eta}(\lambda^{l+1})$ can be updated by solving for the new elbow,

\begin{equation}
     \begin{pmatrix}
         \frac{1}{\lambda} \mathbf\Psi^{\star}_{EE}  & \mathbf\Phi^{*}_{E} \\
        \mathbf\Phi_{E}^{*\top} &0
     \end{pmatrix}  \begin{pmatrix}
         \upsilon \\ \eta
     \end{pmatrix}=  \begin{pmatrix}
         S_{E}  - \frac{1}{\lambda}( -\alpha \mathbf\Psi^{\star}_{EL}\mathbf{1}_{L}+(1-\alpha) \mathbf\Psi^{\star}_{ER}\mathbf{1}_{R})\\
         \alpha \mathbf \Phi_L^{*\top}\mathbf{1}_L-(1-\alpha)\mathbf \Phi_R^{*\top}\mathbf{1}_R
     \end{pmatrix}
 \end{equation}

\subsubsection{\texorpdfstring{Initialization of $\lambda$}{Initialization of lambda}}

We describe our strategy for selecting a sufficiently large  initial value $\lambda^1$.
At $\lambda^0=\infty$, from \eqref{eq:fit}, we can see that $\hat{g}(X_{i'}) = \mathbf\Phi_{i',\cdot}^*\hat{\eta}$. In this case, we have only one point in the elbow, which we denote as $i^0$, that satisfies $S_{i^0}=\hat{g}(X_{i^0})=\mathbf\Phi^{*}_{i^0\cdot}\hat{\eta}$. We choose $i^0$ as the $(1-\alpha)$th quantile of scores, i.e. $S_{i^0} = S_{\left\lceil (m)(1-\alpha) \right\rceil}$. Then, points that satisfy $S_{i'} < S_{i^0}$ are in $L(\lambda^0)$, and points such that $S_{i'}> S_{i^0}$ are in $R(\lambda^0)$.

To make the parameters identifiable, we set $\hat{\eta}_{j^*}(\lambda^0)=S_{i^0}/\mathbf\Phi^{*}_{i^0,j^*}$ for one $j^*\in [d]$ and set other parameters $\hat{\eta}_j(\lambda^0), \ j\neq j^*$ to 0. When $\mathbf\Phi^{*}_{i^0\cdot}$ is one-hot encoded, $j^*$ is any index such that $\mathbf\Phi^{*}_{i^0,j^*}=1$. If $\mathbf\Phi^{*}_{i^0\cdot}$ is continuous, we choose $j^*$ to be any arbitrary index. From \eqref{eq:deriv app1}, we have the condition  $\sum_{i'=1}^{m}\hat{\upsilon}_{i'}\mathbf\Phi^{*}_{i',j}=0$ for $j \in [d]$. Since $i^0$ is the only point in $E(\lambda^0)$. This leads to,

\begin{equation}\label{eq:infty_upsilon}
    \hat{\upsilon}_{i^0}(\lambda^0) = \frac{\alpha\sum_{i \in L(\lambda^0), R(\lambda^0)} \mathbf\Phi^{*}_{i,j^*}-\sum_{i \in R(\lambda^0)} \mathbf\Phi^{*}_{i,j^*}}{\mathbf\Phi^{*}_{i^0,j^*}}
\end{equation}

Next, we find the next $\lambda^1$, which will be the initial value of our solution path. This will be the largest $\lambda<\infty$ such that another point from either $L(\lambda^0), R(\lambda^0)$ enters the elbow. Let $i^1$ be the new point entering the elbow. Then, $i^1$ satisfies,
\begin{equation*}
\begin{split}
     S_{i^1} &= \mathbf\Phi^{*}_{i^1,j^*}\hat{\eta}_{j^*}(\lambda^0)+\frac{1}{\lambda^1}\Big(\mathbf\Psi^*_{i^1,i^0}\hat{\upsilon}_{i^0}(\lambda^0)-\alpha\mathbf\Psi^*_{i^1, L(\lambda^0)}\mathbf{1}_{L(\lambda^0)}+(1-\alpha) \mathbf\Psi^*_{i^1, R(\lambda^0)}\mathbf{1}_{R(\lambda^0)}\Big)\\
     &=  \mathbf\Phi^{*}_{i^1,j^*}\hat{\eta}_{j^*}(\lambda^0)+\frac{1}{\lambda^1}f(X_{i^1})
\end{split}
\end{equation*}

Since $i^0$ is still in the elbow set, it should also satisfy,
\begin{equation*}
\begin{split}
     S_{i^0} &= \mathbf\Phi^{*}_{i^0,j^*}\hat{\eta}_{j^*}(\lambda^0)+\frac{1}{\lambda^1}\Big(\mathbf\Psi^*_{i^0,i^0}\hat{\upsilon}_{i^0}(\lambda^0)-\alpha\mathbf\Psi^*_{i^0, L(\lambda^0)}\mathbf{1}_{L(\lambda^0)}+(1-\alpha) \mathbf\Psi^*_{i^0, R(\lambda^0)}\mathbf{1}_{R(\lambda^0)}\Big)\\
     &=  \mathbf\Phi^{*}_{i^0,j^*}\hat{\eta}_{j^*}(\lambda^0)+\frac{1}{\lambda^1}f(X_{i^0})
\end{split}
\end{equation*}
Putting it all together, we can choose $\lambda^1$ as,

\begin{equation}\label{eq:init_lambda}
    \lambda^1 = \max_{i' \neq i^0, i'\in [m]} \frac{f(X_{i'})-(\mathbf\Phi^{*}_{i',j^*}/\mathbf\Phi^{*}_{i^0,j^*})f(X_{i^0})}{S_{i'}-(\mathbf\Phi^{*}_{i',j^*}/\mathbf \Phi^{*}_{i^0,j^*})S_{i^0}}
\end{equation}
and the corresponding $i'$ that maximizes \eqref{eq:init_lambda} becomes $i^1$. We proceed with the same $\hat{\upsilon}(\lambda^0)$, $\hat{\eta}(\lambda^0)$ as our initial parameters and our initial elbow set as $E(\lambda^1) =\{i^0, i^1\}$.

\subsection{\texorpdfstring{Derivation of $S$-path}{Derivation of S-path}}

We fix the hyperparameters $\hat{\gamma}, \hat{\lambda}$ selected by the $\lambda$-path. Conceptually, the $S$-path mirrors the $\lambda$-path, and the conditions \ref{eq: opt obj1}–\ref{eq:active_sets} apply. Now recall the prediction set we defined for a test point $X_{n+1}$,

\begin{align*}
    \hat{C}^*(X_{n+1}) = \{y: S(X_{n+1},y) \leq \hat g_{S(X_{n+1},y)}(X_{n+1})\}.
\end{align*}

By \eqref{eq:active_sets}, this is equivalent to,

\begin{align*}
    \hat{C}^*(X_{n+1}) = \{y: \hat{\upsilon}_{S(X_{n+1},y),n+1} < 1-\alpha\}.
\end{align*}

The problem reduces to finding the largest test score $S^*(X_{n+1})$ such that $\hat{\upsilon}_{S^*(X_{n+1}),n+1} < 1-\alpha$. By Proposition~\ref{prop: nondecreasing}, the mapping $S \mapsto \hat{\upsilon}_{S}$ is monotone, which allows us to recover the prediction set as,

\begin{align*}
    \hat{C}^*(X_{n+1}) = \{y:  S(X_{n+1},y)\leq S^*(X_{n+1})\}.
\end{align*}

It remains to find the maximum $S^*(X_{n+1})$, the test score cutoff, such that $\hat{\upsilon}_{S^*(X_{n+1}),n+1} < 1-\alpha$ holds, i.e., $ S^{*}(X_{n+1}) = \sup\{S\mid \hat \upsilon_{S,n+1}< 1-\alpha\}$ which is the role of $S$-path. Denote the index sets,

\begin{align}
  E(S)&:=\bigl\{i:\;\widehat r_{S,i}=0,\;\hat\upsilon_{S,i}\in(-\alpha,1-\alpha)\bigr\},\\
  L(S)&:=\bigl\{i:\;\widehat r_{S,i}<0,\;  \hat\upsilon_{S,i}=-\alpha\bigr\},\\
  R(S)&:=\bigl\{i:\;\widehat r_{S,i}>0,\;  \hat\upsilon_{S,i}=1-\alpha\bigr\}.
\end{align}
These sets now \emph{evolve with~$S$}. We initialize $S$-path with the smallest $S^1$ such that the test point is in the elbow set (i.e., $S^1 = \hat{g}_{S^1}(X_{n+1})$) and find the smallest increment to the next $S$ such that an event occurs while the test point is still in the elbow. We use the same notion of an ``event" as before---any change in the elbow set. We iterate until the test point exits the elbow and use the final $S$ as $S^*(X_{n+1})$.

{ In this section, we assume the columns of $\mathbf\Phi^*\in \mathbb{R}^{(n+1)\times d}$ are linearly independent. The dimension of $\mathbf \Psi^*$ is now $\mathbb{R}^{(n+1)\times (n+1)}$.}
\subsubsection{Proof of Proposition \ref{prop:S_path}}\label{sec:prop_S}
\begin{proof}
Let $\{S^l\}_{l=1,2,3,\cdots}$ be the change points when an event occurs. Consider an interval $S^{l} \leq S \leq S^{l+1}$ during which the sets stay the same, i.e., $(E(S), L(S), R(S))=(E(S^l), L(S^l), R(S^l))$. Denote $\hat{\upsilon}_{S,i}$ and $\hat\eta_S$ as the solution of \eqref{eq: opt obj1} given $S$. In this proof, denote $E=E(S)=E(S^l)$, $L=L(S)=L(S^l)$, and $R=R(S)=R(S^l)$. Here, $\lambda$ is fixed as the selected hyperparameter from the previous step.

For every index \(i\) we have,

\begin{equation}\label{eq:gXi}
	 \begin{split}
	     \hat{g}_S(X_i) &= \mathbf\Phi^*_{i\cdot}\hat{\eta}_S+\frac{1}{\lambda}\mathbf\Psi^*_{i,\cdot}\hat{\upsilon}_S \\
	     &= \mathbf\Phi^*_{i\cdot}\hat{\eta}_S+\frac{1}{\lambda}\big(\mathbf\Psi^*_{i,E}\hat{\upsilon}_{S,E}-\alpha\mathbf\Psi^*_{i,L}\mathbf{1}_L+(1-\alpha)\mathbf\Psi^*_{i,R}\mathbf{1}_R\big)\\
      &= \mathbf\Phi^*_{i\cdot}\hat{\eta}_S+\frac{1}{\lambda}\mathbf\Psi^*_{i,E}\hat{\upsilon}_{S,E}+d_i ,
     \end{split}
 \end{equation}

where
 \begin{equation*}
      d_i:= \frac{1}{\lambda}(-\alpha\mathbf\Psi^*_{i,L}\mathbf{1}_L+(1-\alpha)\mathbf\Psi^*_{i,R}\mathbf{1}_R).
 \end{equation*}

From the second KKT condition in \eqref{eq:deriv app1}, we have $\upsilon_{i}=\sigma_i-\tau_i$ and $\sum_{i=1}^{n+1}(\sigma_i-\tau_i)\mathbf\Phi^*_{i,j}=\sum_{i=1}^{n+1}\upsilon_i\mathbf\Phi^*_{i,j}=0$ for $j=1,\cdots,d$. In compact form,
\[
  \mathbf\Phi_E^{*\top}\hat{\upsilon}_{S,E}
   =
  \alpha\,\mathbf\Phi_L^{*\top}\mathbf 1_L
  -(1-\alpha)\,\mathbf\Phi_R^{*\top}\mathbf 1_R .
\]
This means that,
\[
 \mathbf\Phi_E^{*} (\mathbf\Phi_E^{*\top} \mathbf\Phi_E^{*})^{\dagger} \mathbf\Phi_E^{*\top}\hat{\upsilon}_{S,E}
   =
  \alpha\, \mathbf\Phi^*_E (\mathbf\Phi_E^{*\top} \mathbf\Phi_E^*)^{\dagger} \mathbf\Phi_L^{*\top}\mathbf 1_L
  -(1-\alpha)\, \mathbf\Phi^*_E (\mathbf\Phi_E^{*\top} \mathbf\Phi^*_E)^{\dagger}\mathbf\Phi_R^{*\top}\mathbf 1_R .
\]

Let
\(S_E:=(S_i)_{i\in E},
  d_E:=(d_i)_{i\in E},
  \mathbf\Phi_E^*\in\mathbb R^{|E|\times d},
 \mathbf \Psi^{*}_{EE}\in\mathbb R^{|E|\times|E|}\).
Equation~\ref{eq:gXi} for \(i\in E\) becomes,
\begin{equation}\label{eq:SE_full}
  S_E
   =
  \mathbf\Phi_E^*\,\hat{\eta}_S
  +\frac{1}{\lambda}\,\mathbf\Psi^{*}_{EE}\,\hat{\upsilon}_{S,E}
  + d_E .
\end{equation}

Define the orthogonal projector $\mathbf\Pi_E
  := I_{|E|}
     -\mathbf\Phi^*_E\bigl(\mathbf\Phi_E^{*\top}\mathbf\Phi^*_E\bigr)^{\dagger}\mathbf\Phi_E^{*\top}
$.
Because \(\mathbf\Pi_E\mathbf\Phi^*_E=0\), multiplying \eqref{eq:SE_full} by \(\mathbf\Pi_E\) gives,
\begin{equation}\label{eq:ProjEq}
  \mathbf\Pi_E S_E
   =
  \frac{1}{\lambda}\,\mathbf\Pi_E\mathbf\Psi^{\star}_{EE}\,\hat{\upsilon}_{S,E}
  + \mathbf\Pi_E d_E .
\end{equation}

Write \(S_E = S_E^{\text{fixed}} + S\,\mathbf{e}_{n+1}\),
where \(S_E^{\text{fixed}}\) has a zero in the \((n\!+\!1)\)-st row and
\(\mathbf{e}_{n+1}\) selects that row.
Equation \ref{eq:ProjEq} becomes,
\begin{equation}\label{eq:affine}
  \mathbf\Pi_E\mathbf\Psi^{\star}_{EE}\,\hat{\upsilon}_{S,E}
   =
  \lambda\,\mathbf\Pi_E\!\bigl(S_E^{\text{fixed}}-d_E\bigr)
  +\lambda\,S\,\mathbf\Pi_E\mathbf{e}_{n+1}.
\end{equation}

Since $I_E = \mathbf\Phi_E^*\bigl(\mathbf\Phi_E^{*\top}\mathbf\Phi^*_E\bigr)^{\dagger}\mathbf\Phi_E^{*\top} + \mathbf\Pi_E$, the previous equation yields,
\begin{equation}\label{eq:affine2}
  \mathbf\Pi_E\mathbf\Psi^{\star}_{EE}\,  \mathbf\Pi_E \hat{\upsilon}_{S,E}
   =
  -\mathbf\Pi_E\mathbf\Psi^{\star}_{EE} \mathbf\Phi^*_E(\mathbf\Phi_E ^{*\top} \mathbf\Phi^*_E)^{\dagger}\mathbf\Phi_E^{*\top}  \hat{\upsilon}_{S,E} + \lambda\,\mathbf\Pi_E\!\bigl(S_E^{\text{fixed}}-d_E\bigr)
  +\lambda\,S\,\mathbf\Pi_E\mathbf{e}_{n+1}.
\end{equation}

Now, we know that:

\[
\mathbf\Pi_E\mathbf\Psi^{\star}_{EE}  \mathbf\Phi^*_E (\mathbf\Phi_E ^{*\top} \mathbf\Phi_E^*)^{\dagger} \mathbf\Phi_E^{*\top}\hat{\upsilon}_{S,E}
   =
  \alpha\, \mathbf\Pi_E\mathbf\Psi^{\star}_{EE}  \mathbf\Phi^*_E (\mathbf\Phi_E^{*\top} \mathbf\Phi^*_E)^{\dagger} \mathbf\Phi_L^{*\top}\mathbf 1_L
  -(1-\alpha)\,\mathbf\Pi_E\mathbf\Psi^{\star}_{EE}  \mathbf\Phi^*_E (\mathbf\Phi_E ^{*\top} \mathbf\Phi_E^*)^{\dagger}\mathbf\Phi_R^{*\top}\mathbf 1_R .
\]

Because \(\mathbf\Pi_E\) is an orthogonal projector (\(\mathbf\Pi_E^2=\mathbf\Pi_E\)), the matrix
\(\mathbf\Pi_E\mathbf\Psi^{\star}_{EE}\mathbf\Pi_E\) is positive definite on the image of \(\mathbf\Pi_E\).
Using its Moore–Penrose inverse (denoted by superscript \(\dagger\)) gives the
unique minimum‑norm solution,
\begin{equation}\label{eq:ups_S}
   \begin{split}
       \mathbf\Pi_E \hat{\upsilon}_{S,E}
     &=
    \lambda\,
    \bigl(\mathbf\Pi_E\mathbf\Psi^{\star}_{EE}\mathbf\Pi_E\bigr)^{\dagger}
    \,\mathbf\Pi_E
    \bigl(S_E^{\text{fixed}}-d_E + S\,\mathbf{e}_{n+1}\bigr)
  \\
  &- \alpha\, \bigl(\mathbf\Pi_E\mathbf\Psi^{\star}_{EE}\mathbf\Pi_E\bigr)^{\dagger}\mathbf\Pi_E\mathbf\Psi^{\star}_{EE}  \mathbf\Phi_E^* (\mathbf\Phi_E ^{*\top} \mathbf\Phi^*_E)^{\dagger} \mathbf\Phi_L^{*\top}\mathbf 1_L\\
  &
  +(1-\alpha)\,\bigl(\mathbf\Pi_E\mathbf\Psi^{\star}_{EE}\mathbf\Pi_E\bigr)^{\dagger}\mathbf\Pi_E\mathbf\Psi^{\star}_{EE}  \mathbf\Phi_E^* (\mathbf\Phi_E ^{*\top} \mathbf\Phi^*_E)^{\dagger}\mathbf\Phi_R^{*\top}\mathbf 1_R
   \end{split}
\end{equation}

Therefore, since $\hat{\upsilon}_{S,E} =  \mathbf\Pi_E \hat{\upsilon}_{S,E} + \mathbf\Phi_E^* (\mathbf\Phi_E ^{*\top} \mathbf\Phi^*_E)^{\dagger}\mathbf\Phi_E^{*\top} \hat{\upsilon}_{S,E}$:

\begin{equation}\label{eq:ups_S2}
   \begin{split}
\hat{\upsilon}_{S,E} &=
   \lambda\,
    \bigl(\mathbf\Pi_E\mathbf\Psi^{\star}_{EE}\mathbf\Pi_E\bigr)^{\dagger}
    \,\mathbf\Pi_E
    \bigl(S_E^{\text{fixed}}-d_E + S\,\mathbf{e}_{n+1}\bigr)
  \\
&  \qquad -\alpha\, \bigl(\mathbf\Pi_E\mathbf\Psi^{\star}_{EE}\mathbf\Pi_E\bigr)^{\dagger}\mathbf\Pi_E\mathbf\Psi^{\star}_{EE}  \mathbf\Phi_E^* (\mathbf\Phi_E ^{*\top} \mathbf\Phi_E^*)^{\dagger} \mathbf\Phi_L^{*\top}\mathbf 1_L\\
&  \qquad
  +(1-\alpha)\,\bigl(\mathbf\Pi_E\mathbf\Psi^{\star}_{EE}\mathbf\Pi_E\bigr)^{\dagger}\mathbf\Pi_E\mathbf\Psi^{\star}_{EE} \mathbf\Phi_E^* (\mathbf\Phi_E ^{*\top} \mathbf\Phi_E^*)^{\dagger}\mathbf\Phi_R^{*\top}\mathbf 1_R  \\
  &  \qquad +\alpha\, \mathbf\Phi_E^* (\mathbf\Phi_E ^{*\top} \mathbf\Phi_E^*)^{\dagger} \mathbf\Phi_L^{*\top}\mathbf 1_L\\
    &    \qquad
  -(1-\alpha)\, \mathbf\Phi_E^* (\mathbf\Phi_E ^{*\top} \mathbf\Phi_E^*)^{\dagger}\mathbf\Phi_R^{*\top}\mathbf 1_R \\
  &=
   \lambda\,
    \bigl(\mathbf\Pi_E\mathbf\Psi^{\star}_{EE}\mathbf\Pi_E\bigr)^{\dagger}
    \,\mathbf\Pi_E
    \bigl(S_E^{\text{fixed}}-d_E + S\,\mathbf{e}_{n+1}\bigr)
  \\
&  \qquad +\alpha\, [\,I_{|E|}- (\mathbf\Pi_E\mathbf\Psi^\star_{EE}\mathbf\Pi_E)^{\dagger}\mathbf\Pi_E\mathbf\Psi^\star_{EE}\,]
\mathbf\Phi_E^* (\mathbf\Phi_E ^{*\top} \mathbf\Phi_E^*)^{\dagger} \mathbf\Phi_L^{*\top}\mathbf 1_L\\
&  \qquad
  -(1-\alpha)\,[\,I_{|E|}- (\mathbf\Pi_E\mathbf\Psi^\star_{EE}\mathbf\Pi_E)^{\dagger}\mathbf\Pi_E\mathbf\Psi^\star_{EE}\,]
\mathbf\Phi_E^* (\mathbf\Phi_E ^{*\top} \mathbf\Phi_E^*)^{\dagger}\mathbf\Phi_R^{*\top}\mathbf 1_R  \\
   \end{split}
\end{equation}

In particular, the kernel parameter of the test point,
\(\hat{\upsilon}_{S,n+1}\),
is affine in \(S\) on every segment where the index
sets \((E,L,R)\) stay unchanged.

Likewise, the linear coefficient satisfies,
\begin{equation}
  \label{eq:eta_affine0}
  \begin{split}
    S_E & = \mathbf\Phi_E^* \hat\eta_S + \frac{1}{\lambda} \mathbf\Psi^\star_{EE} \hat\upsilon_{S,E} + d_E\\
   \mathbf\Phi_E ^{*\top} S_E & = \mathbf\Phi_E ^{*\top} \mathbf\Phi_E ^*\hat\eta_S + \frac{1}{\lambda} \mathbf\Phi_E ^{*\top} \mathbf\Psi^\star_{EE} \hat\upsilon_{S,E} + \mathbf\Phi_E^{*\top}d_E\\
   \implies  \hat\eta_S & =  (\mathbf\Phi_E ^{*\top} \mathbf\Phi^*_E)^{\dagger}\mathbf\Phi_E^{*\top} S_E - \frac{1}{\lambda} (\mathbf\Phi_E ^{*\top} \mathbf\Phi^*_E)^{\dagger}\mathbf\Phi_E^{*\top} \mathbf\Psi^\star_{EE} \hat\upsilon_{S,E} -(\mathbf\Phi_E ^{*\top} \mathbf\Phi_E^* )^{\dagger} \mathbf\Phi_E^{*\top}d_E\\
      \implies  \hat{\eta}_S & =  (\mathbf\Phi_E ^{*\top} \mathbf\Phi^*_E)^{\dagger}\mathbf\Phi_E^{*\top} S_E^{\text{fixed}} + S (\mathbf\Phi_E ^{*\top} \mathbf\Phi^*_E)^{\dagger}\mathbf\Phi_E^{*\top} \mathbf{e}_{n+1} \\
      &\quad\quad- \frac{1}{\lambda} (\mathbf\Phi_E ^{*\top} \mathbf\Phi^*_E)^{\dagger}\mathbf\Phi_E^{*\top} \mathbf\Psi^\star_{EE} \hat\upsilon_{S,E} -(\mathbf\Phi_E ^{*\top} \mathbf\Phi_E^* )^{\dagger} \mathbf\Phi_E^{*\top}d_E\\
      &=(\mathbf\Phi_E^{*\top}\mathbf\Phi^*_E)^{\dagger}\mathbf\Phi_E^{*\top}
   \Bigl[S_E^{\text{fixed}}+S\mathbf{e}_{n+1}-d_E-\frac{1}{\lambda}\,\mathbf\Psi^*_{EE}\,\hat\upsilon_{S,E}\Bigr],\\
  \end{split}
\end{equation}

and thus, we have shown that,

\begin{equation}\label{affine_S}
    \hat{\upsilon}_{S,E} = c+Sd, \qquad \hat{\eta}_S = c^{(1)}+Sd^{(1)}.
\end{equation}

with \(c, d \in\mathbb R^{|E|},  c^{(1)}, d^{(1)}\in\mathbb R^{d}\) constant on the segment $S^l \leq S \leq S^{l+1}$.

Insert \eqref{affine_S} back to \eqref{eq:gXi}. For any $i \in [n+1]$,
\begin{align}
  \hat g_S(X_i)
  &=
\mathbf\Phi^*_{i\cdot}\,\hat\eta_S
     + \frac1\lambda\,\mathbf\Psi_{ iE}^{\star}\,\hat\upsilon_{S,E}
     + d_i\\
  &= \mathbf\Phi^*_{i\cdot}\bigl( c^{(1)}+ \,S d^{(1)}\bigr)
     +\frac1\lambda
    \mathbf \Psi_{ iE}^{\star}\bigl( c+ \,S d\bigr)
     + d_i \\[3pt]
  &= \underbrace{\Bigl(\mathbf\Phi^*_{i\cdot}c^{(1)}
                       +\frac1\lambda\mathbf\Psi_{ iE}^{\star} c
                       +d_i\Bigr)}_{=:g_i^{(0)}}
      +
     S \underbrace{\Bigl(\,\mathbf\Phi^*_{i\cdot} d^{(1)}
                          + \frac{1}{\lambda}\mathbf\Psi_{ iE}^{\star} d\Bigr)}_{=:g_i^{(1)}}.
  \label{eq:g_affine}
\end{align}
Thus \(g_S(X_i)=g_i^{(0)}+S\,g_i^{(1)}\) is affine in \(S\). There are two cases for the residual $r_i(S) = S_i - \hat g_S(X_i)$:

\begin{enumerate}
    \item {\it Calibration index \(i\le n\).}
  The score \(S_i\) is fixed, hence
  \[
    r_i(S)  =  \bigl[S_i-g_i^{(0)}\bigr]
                  -  S\,g_i^{(1)}.
  \]
  Both $S_i-g_i^{(0)}$ and $g_i^{(1)}$ are constants on the segment.

\item {\it Test index \(i=n+1\).}
  Here \(S_{n+1}=S\), so
  \[
    r_{n+1}(S)  =
    S - g_{n+1}^{(0)} - S\,g_{n+1}^{(1)}
     =
    \bigl[\,1-g_{n+1}^{(1)}\bigr]\,S - g_{n+1}^{(0)},
  \]
  which is again affine in \(S\).
\end{enumerate}

Because every \(r_i(S)\) is an affine function,
each index outside the elbow can cross the zero‑residual line
at most once on the segment.
Likewise each \(\upsilon_{i}(S)\) in \eqref{affine_S}
can hit the bounds \(1-\alpha\) or \(-\alpha\) at most once.
Hence the overall solution path is {\it piecewise affine}
with break‑points occurring exactly when  either (a) a coefficient in \(E\) hits its bound, or  (b) a residual for an index in \(L\cup R\) reaches zero,
completing the argument used by the S‑path algorithm.

\end{proof}

\subsubsection{\texorpdfstring{Update of $S^l$}{Update of Sl}}

Let $S^l$ be the value after the $l^{th}$ event. The elbow set $E(S^l)$ is updated when one of the following events occurs:

\begin{itemize}
    \item A point $i$ in either $L(S^l)$ or $R(S^l)$ enters the elbow set (residual $S_i - \hat{g}_{S^l} (X_i)$ becomes 0).

\item A point $i$ in $E(S^l)$ leaves to the left or right set ($\hat{\upsilon}_{S^l,i}$ becomes $-\alpha$ or $1-\alpha$).
\end{itemize}

For the first event, note that $\frac{\partial r_i(S)}{\partial S} = -(\mathbf\Phi^*_{i\cdot} d^{(1)}
                          + \frac{1}{\lambda}\mathbf\Psi_{ iE}^{\star} d)$ in \eqref{eq:g_affine}. Then we have,

\begin{equation*}
    S^{l+1,hit}=S^l + \min_{i\in L(S^l)\cup R(S^l)} r_i(S^l)/\big(\mathbf\Phi^*_{i\cdot} d^{(1)}
                          + \frac{1}{\lambda}\mathbf\Psi_{ iE}^{\star} d\big)\mathbf{1}\Big(r_i(S^l)/\big(\mathbf\Phi^*_{i\cdot} d^{(1)}
                          + \frac{1}{\lambda}\mathbf\Psi_{ iE}^{\star} d\big)\geq 0\Big)
\end{equation*}

To find $S$ such that a point leaves $E(S^l)$, recall $\frac{\partial \hat{\upsilon}_{S,i}}{\partial S}={d}_i$ for $i \in E(S^l)$ (\eqref{eq:g_affine}).

\begin{equation*}
    S^{l+1,leave}=S^l+\min_{i \in E(S^l)} \big\{x \in\{\frac{-\alpha-\hat{\upsilon}_{S^l,i}}{{d}_i},\frac{1-\alpha-\hat{\upsilon}_{S^l,i}}{{d}_i}\} \ | \ x \leq 0\big\}
\end{equation*}

We then take $S^{l+1} = \min\big\{S^{l+1,hit}, S^{l+1,leave}\big\}$. We also update $(E,L,R)=(E(S^{l+1}), L(S^{l+1}), R(S^{l+1}))$ accordingly based on which event occurred. Parameters $\hat{\upsilon}_{S^{l+1},i}$'s, $\hat{\eta}_{S^{l+1}}$ can be updated by solving for the new elbow,

\begin{equation}
     \begin{pmatrix}
         \frac{1}{\lambda} \mathbf\Psi^{\star}_{EE}  & \mathbf\Phi^*_{E} \\
        \mathbf\Phi_{E}^{*\top} &0
     \end{pmatrix}  \begin{pmatrix}
         \upsilon \\ \eta
     \end{pmatrix}=  \begin{pmatrix}
         S_{E}  - \frac{1}{\lambda}( -\alpha \mathbf\Psi^{\star}_{EL}\mathbf{1}_{L}+(1-\alpha) \mathbf\Psi^{\star}_{ER}\mathbf{1}_{R})\\
         \alpha \mathbf\Phi_L^{*\top}\mathbf{1}_L-(1-\alpha)\mathbf\Phi_R^{*\top}\mathbf{1}_R
     \end{pmatrix}
 \end{equation}

\subsubsection{\texorpdfstring{Initialization of $S$}{Initialization of S}}

Let us first assume that the imputed test score $S$ is small enough so that $S < \hat{g}_S(X_{n+1})$. Then, the test point ${n+1} \in L(S)$.
We use the notation $\hat\upsilon_{\text{small}}$ and $\hat\eta_{\text{small}}$ (instead of $\hat\upsilon_S$ and $\hat\eta_S$), to denote the regression parameters. In this case, the residual $r_{n+1}(S) = S - \hat{g}_{\text{small}}(X_{n+1}) = S -\mathbf\Phi^*_{n+1, \cdot} \hat\eta_{\text{small}} - \frac{1}{\lambda}\mathbf\Psi^\star_{n+1, \cdot} \hat \upsilon_{\text{small}}$ is linear in $S$. We can therefore track the moment when it enters the elbow set $E(S)$. This happens as soon as:
$$ S = \mathbf\Phi^*_{n+1, \cdot} \hat\eta_{\text{small}} + \frac{1}{\lambda}\mathbf\Psi^\star_{n+1, \cdot} \hat \upsilon_{\text{small}}.$$

We thus solve for $\hat \upsilon_{\text{small}}$ and $\hat \eta_{\text{small}}$ with $\upsilon_{\text{small,} n+1} =-\alpha$ (as the test point is in the left set). Let $\upsilon^{\text{fixed}}\in \mathbb{R}^{n+1}$ be the vector defined as equal to $\upsilon_S$ on all entries except the $(n+1)^{th}$, where it is set to 0:

$$ \upsilon_{i}^{\text{fixed}} = \begin{cases}
\upsilon_{S,i} \text{ if } i \neq n+1\\
0 \text{ if }  i =n+1.
\end{cases}$$
This allows us to write,
$$ \upsilon_{\text{small}} = \upsilon^{\text{fixed} } -\alpha \mathbf{e}_{n+1},$$
where $\mathbf{e}_{n+1}$ is the indicator vector of the $(n+1)^{th}$ coefficient,
$$ \mathbf{e}_{n+1,i } =0 \text{ if } i\neq n+1, \quad \mathbf{e}_{n+1,n+1}  =1. $$
The problem then becomes,

 \begin{equation}\label{eq: lagrangian in Spath}
     \begin{split}
         L_p &= (1-\alpha)\sum_{i=1}^{n+1}p_i+\alpha\sum_{i=1}^{n+1}q_i+\frac{1}{2\lambda}\upsilon^{\top}_{\text{small}}  \mathbf\Psi^{\star}\upsilon_{\text{small}} \\
         &\qquad +\sum_{i=1}^{n+1} \sigma_i(S_i-g_{\text{small}} (X_i)-p_i)-\sum_{i=1}^{n+1}\tau_i(S_i-g_{\text{small}} (X_i)+q_i)\\
         &\qquad -\sum_{i=1}^{n+1}\kappa_ip_i-\sum_{i=1}^{n+1}\rho_iq_i\\
          &= \sum_{i=1}^{n+1}\sigma_i p_i+\sum_{i=1}^{n+1}\tau_i q_i+\frac{1}{2\lambda}\upsilon^{\top}_{\text{small}}  \mathbf\Psi^{\star}\upsilon_{\text{small}} \\
         &\qquad +\sum_{i=1}^{n+1}( \sigma_i -\tau_i)(S_i-\mathbf\Phi^*_{i,\cdot} \eta_{\text{small}}   - \frac{1}{\lambda}\mathbf\Psi^\star_{i\cdot} \upsilon_{\text{small}} )  -\sum_{i=1}^{n+1}( \sigma_i p_i +\tau_i q_i)\\
          &= \frac{1}{2\lambda} (\upsilon^{\text{fixed}}_{\text{small}}   -\alpha\mathbf{e}_{n+1})^\top\mathbf\Psi^{\star}(\upsilon_{\text{small}} ^{\text{fixed}} -\alpha \mathbf{e}_{n+1})\\
          &\qquad+\sum_{i=1}^{n}( \sigma_i -\tau_i)(S_i-\mathbf\Phi^*_{i,\cdot} \eta_{\text{small}}   -  \frac{1}{\lambda}\mathbf\Psi^{\star}_{i\cdot} (\upsilon_{\text{small}} ^{\text{fixed}} -\alpha\mathbf{e}_{n+1}))\\
          &\qquad-\alpha(S-\mathbf\Phi^*_{n+1, \cdot} \eta_{\text{small}}   -  \frac{1}{\lambda}\mathbf\Psi^{\star}_{n+1, 1:n} (\upsilon_{\text{small}} ^{\text{fixed}}  -\alpha\mathbf{e}_{n+1}))\\
         &=\frac{1}{2\lambda}\delta^{\top} \mathbf\Psi_{1:n, 1:n}^{\star} \delta -\frac{\alpha}{\lambda} \mathbf\Psi^{\star}_{n+1, 1:n} \delta + \frac{\alpha^2}{2\lambda} \mathbf\Psi^{\star}_{n+1, n+1}   \\
         & \qquad  + \delta^T (S_{1:n}  - \mathbf\Phi^*_{1:n,\cdot} \eta_{\text{small}}  -  \frac{1}{\lambda}\mathbf\Psi^{\star}_{1:n,1:n} \delta + \frac{\alpha}{\lambda} \mathbf\Psi^{\star}_{1:n, n+1}  ) \\
            & \qquad -\alpha(S-\mathbf\Phi^*_{n+1, \cdot} \eta_{\text{small}}   -  \frac{1}{\lambda}\Psi^{\star}_{n+1, 1:n} \delta +  \frac{\alpha}{\lambda}\mathbf\Psi^{\star}_{n+1,n+1})),
     \end{split}
 \end{equation}
with $ \delta \in \mathbb{R}^n$ the vector whose entries are defined as:   $\delta_i = \sigma_i - \tau_i =  \upsilon_{\text{small},i}  $.

We also know that $(\sigma -\tau)^\top \mathbf\Phi^* = 0 \implies \forall j \in [d], \quad \sum_{i=1}^{n} (\sigma_{i} -\tau_i) \Phi^*_{i, j} = -(\sigma_{n+1} -\tau_{n+1}) \Phi^*_{n+1, j} =\alpha \Phi^*_{n+1, j}. $

 Therefore, taking derivatives with respect to $\delta$ yields the following system:

 \begin{equation}
     \begin{pmatrix}
         \frac{1}{\lambda} \mathbf\Psi^{\star}_{EE}  & \mathbf\Phi^*_{E} \\
        \mathbf\Phi_{E}^{*\top} &0
     \end{pmatrix}  \begin{pmatrix}
         \delta_E \\ \eta_{\text{small}}
     \end{pmatrix}=  \begin{pmatrix}
         S_{E}  -\frac{1}{\lambda}(-\alpha \mathbf\Psi^{\star}_{E,n+1}-\alpha\mathbf\Psi^{\star}_{EL}\mathbf{1}_L+(1-\alpha)\mathbf\Psi^{\star}_{ER}\mathbf{1}_R) \\
         \alpha \mathbf\Phi_{n+1}^{*\top}+\alpha \mathbf\Phi_L^{*\top}\mathbf{1}_L-(1-\alpha)\mathbf\Phi_R^{*\top}\mathbf{1}_R
     \end{pmatrix}
 \end{equation}

We can therefore solve for both $\delta_E$ and $\eta_{\text{small}}$ by inverting the previous system of equations.

\paragraph{When $S> g(X_{n+1})$.} Similarly, when we start with a large $S$ so that the test point is in the right set, we can derive the coefficients for both $\upsilon$ and $\eta$ by the same derivations as for the small case:

 \begin{equation}
     \begin{pmatrix}
         \frac{1}{\lambda} \mathbf\Psi^{\star}_{EE}  & \mathbf\Phi^*_{E} \\
        \mathbf\Phi_{E}^{*\top} &0
     \end{pmatrix}  \begin{pmatrix}
         \delta_E \\ \eta_{\text{small}}
     \end{pmatrix}=  \begin{pmatrix}
         S_{E}  -\frac{1}{\lambda}((1-\alpha) \mathbf\Psi^{\star}_{E,n+1}-\alpha\mathbf\Psi^{\star}_{EL}\mathbf{1}_L+(1-\alpha)\mathbf\Psi^{\star}_{ER}\mathbf{1}_R) \\
         (\alpha-1) \mathbf\Phi_{n+1}^{*\top}+\alpha \mathbf\Phi_L^{*\top}\mathbf{1}_L-(1-\alpha)\mathbf\Phi_R^{*\top}\mathbf{1}_R
     \end{pmatrix}
 \end{equation}

\subsubsection{Computational Complexity}
At each breakpoint $l$ of $\lambda$- and $S$-path, the computational complexity includes (1) computing the projection matrix $\mathbf\Pi_{E^l}$; (2) computing the Moore-Penrose inverse of $\mathbf\Pi_{E^l}\mathbf\Psi^\star_{E^lE^l}\mathbf\Pi_{E^l}$; (3) solving $(|E^l|+d)\times (|E^l|+d)$ linear system; (4) evaluating $h^l(X_i)$ and $\hat g^l(X_i)$ for $i \in [n+1]$. Letting $n_{E^l}:=|E^l|$, the per-step complexity is $O(n_{E^l}^3+(n_{E^l}+d)^3+n(d+n_{E^l}))$.

A provable polynomial bound on the number of breakpoints is not available for general kernel quantile regression---this remains an open question even for the classical LASSO. Empirically, \citet{li2007quantile} report that the number of breakpoints is on average a small multiple of $n$. In our framework, Propositions~\ref{prop:lambda_path}, \ref{prop:S_path} establish that the kernel parameters $\hat{\upsilon}_{S,i}$ are affine in both $\lambda$ and $S$. Since an affine function crosses zero at most once, each point can either enter or exit the elbow set at most once,  suggesting that the number of breakpoints is at most $2(n+1)$ in our setting.

Let $\bar{m}$ denote the average of $n_E^l$ across breakpoints. The total complexity of our method is approximately $O(n(\bar m+d)^3)$. This compares favorably to the cross-validation approach, which costs $O(N_{\lambda} n^3)$ where $N_{\lambda}$ is the size of the $\lambda$ grid. Since a fine grid from $\lambda_{\mathrm{max}}$ to 0 requires $N_{\lambda} = O(n)$ points to resolve all meaningful changes in the solution, the complexity is effectively $O(n^4)$, matching our worst case. In practice, $\bar m^3 \ll n^3$ makes our proposed algorithm much faster, which we also show in Appendix~\ref{app:elbow-set-size}.

\paragraph{Practical Consequences.}

\begin{itemize}
  \item {\bfseries Threshold evaluation}
        Because $S\mapsto\upsilon_{S,n+1}$ is affine on each segment,
        the conformal threshold
        $\hat g(X_{n+1})=\sup\{S:\upsilon_{S,n+1}<1-\alpha\}$
        is found by a single root computation, \emph{not} by binary search.
  \item {\bfseries Randomization}
      By Lemma~\ref{theorem: oracle mixture}, using the final update of $S$-path, $S^*(X_{n+1})$, can inflate the conditional coverage. To mitigate this, we can use the randomized cutoff $ S^{rand}(X_{n+1}) = \sup\{S\mid \hat \upsilon_{S,n+1}< U\}$, for $U\sim Unif(-\alpha,1-\alpha)$. The procedure of $S$-path stays the same but we stop the algorithm as soon as $\hat{\upsilon}_{S,n+1}\geq U$.

\end{itemize}

\paragraph{Summary of Modifications Versus the $\lambda$-path.}

\begin{center}
\begin{tabular}{@{}lcc@{}}
\toprule
\textbf{Component} & \textbf{$\lambda$–path} & \textbf{S–path (fixed $\lambda$)}\\
\midrule
Moving parameter        & $\lambda\downarrow 0$ & $S\uparrow$ \\
Active sets             & $E(\lambda),L(\lambda),R(\lambda)$ & $E(S),L(S),R(S)$ \\
Triggering event        & Elbow set changes & Elbow set changes \\
Segment law             & $\lambda\mapsto\hat\upsilon(\lambda)$ affine       & $S\mapsto\hat\upsilon_S$ affine \\
Break‑points            & critical values of $\lambda$          & critical values of $S$ \\
Output                  & $\lambda\!\mapsto\!(\upsilon,\eta)$   & $S\!\mapsto\!(\upsilon,\eta)$ \\
\bottomrule
\end{tabular}
\end{center}

The resulting algorithm furnishes an explicit, efficient \emph{score‑path}
for any fixed~$\lambda$, enabling local density–adaptive conformal prediction
and other post‑hoc analyses.

{
\subsection{Cross Validation on \texorpdfstring{$\gamma$}{gamma}}

In this section, we provide additional details on the selection of $(\hat{\gamma}, \hat{\lambda})$, described in Algorithm~\ref{algo:1}.

We first select a fixed grid of $\gamma$'s, denoted as $\Gamma$. For each $\gamma \in \Gamma$, we measure k-fold validation error. This is done by running a $\lambda$-path for each $\gamma$ that yields a sequence of $\lambda$ values, $(\lambda^1, \lambda^2,\cdots)$. The error of $(\gamma, \lambda^l)$ for the $j$-th fold is then defined as the quantile regression loss,

\begin{equation*}
    \mathrm{CV}_j(\gamma,\lambda^l) =\sum_{i^{'} \in fold_j}\big((1-\alpha)[S_{i^{'}}-\hat{g}^l(X_i)]_++\alpha[S_{i^{'}}-\hat{g}^l(X_i)]_-\big)
\end{equation*}

A common practice is to get the mean of $\mathrm{CV}_j(\gamma,\lambda^l)$ over $k$ folds and choose the pair that minimizes it. However, running a $\lambda$-path for each $\gamma$ does not guarantee that the $\lambda$ values on the path will be all equivalent. To avoid this issue, we proceed with a two-step approach that first aggregates the error over all $\lambda$'s to get the optimal $\hat{\gamma}$,

\begin{equation*}
\begin{split}
    \mathrm{CV}_j(\gamma) &= \min_{\lambda^l}\mathrm{CV}_j(\gamma,\lambda^l) \\
    \hat{\gamma} &= \arg \min_{\gamma}\mathrm{CV}(\gamma)=\arg \min_{\gamma}\sum_{j=1}^k\mathrm{CV}_j(\gamma)
\end{split}
\end{equation*}

This selects $\hat{\gamma}$ that reflects the best smoothness of the function overall. Once we select $\hat{\gamma}$, we run the $\lambda$-path again on the full dataset ($m$ points) and choose $\hat{\lambda}$ that minimizes the Schwarz information criterion (SIC) \citep{schwarz1978estimating}, which is a commonly used alternative to cross validation error in kernel quantile regression for model selection \citep{li2007quantile}.}

\section{Theoretical Proof}\label{appsec: proof}

\subsection{Guarantee for Randomized Interval} \label{appsec: proof of Theorem randomized}
To incorporate the structured RKHS-based function in~\eqref{eq:closed_form} into the conformal calibration framework in \citet{gibbs2023conformal}, we need to show two propositions. Firstly, we show the monotonicity of the solution path for $S$. Namely, the mapping $S\mapsto \upsilon_{S, n+1}$ is nondecreasing in $S$. Second, we require the low-rank projection $\hat\pi(\cdot)$ to be trained symmetrically across the input data.  With these properties established, we are able to prove that our path algorithm satisfies exactly the same type of results as \citet{gibbs2023conformal}:

\begin{lemma}\label{theorem: oracle mixture} Consider the function class $\cF^{*}$ in \eqref{eq:fstar}, where RKHS component is given with the optimal $\hat \lambda$ such that $\cF_{\psi^*}=\{f_{\psi^*}(x)=\frac{1}{\hat \lambda}\sum_{i\in[n+1]}\upsilon_i \psi^*(x, X_i), \upsilon\in \mathbb{R}^{n+1}\}$. Suppose assumptions \ref{ass: data} and \ref{ass: low-rank projection} are both satisfied. Then, for all $f\in \cF^*$, SpeedCP gives
  \begin{align*}
        \bE\left[f(X_{n+1})\cdot\left(\One\{Y_{n+1}\in\hat C^{*}_{rand}(X_{n+1}) \}-(1-\alpha)\right)\right]=-2\hat\lambda\bE\left[\langle \hat g_{S^{rand}, \psi^*}, f_{\psi^*}\rangle\right],
    \end{align*}
    where $\hat g_{S^{rand}, \psi^*}(X)=\frac{1}{\hat\lambda}\sum_{i\in[n+1]}\hat\upsilon_{S^{rand}, i}\psi^*(X, X_i)$.
\end{lemma}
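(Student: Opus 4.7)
The plan is to adapt the proof of Lemma~\ref{lemma: cc} of \cite{gibbs2023conformal} to the embedding-aware class $\mathcal{F}^{*}$ in the randomized regime, where the continuous randomizer $U\sim\mathrm{Unif}(-\alpha,1-\alpha)$ eliminates the interpolation-error term $\epsilon_{int}$. Two ingredients must be added relative to Lemma~\ref{lemma: cc}: the kernel $\psi^{*}$ is built on the estimated embedding $\hat\pi$, so its symmetry in the $n{+}1$ inputs comes from Assumption~\ref{ass: low-rank projection}, and exchangeability (Assumption~\ref{ass: data}) replaces the i.i.d.\ condition used in \cite{gibbs2023conformal}. The monotone, piecewise-linear $S$-path (Proposition~\ref{prop:S_path}) together with the monotonicity of $S\mapsto\hat\upsilon_{S,n+1}$ (Proposition~\ref{prop: nondecreasing}) provides the structural backbone that converts the coverage event into a dual-variable comparison.

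The first step is to rewrite the coverage indicator in terms of the dual variable. Monotonicity of $S\mapsto\hat\upsilon_{S,n+1}$ and the definition $S^{rand}(X_{n+1})=\sup\{S:\hat\upsilon_{S,n+1}<U\}$ give $\{Y_{n+1}\in\hat C^{*}_{rand}(X_{n+1})\}=\{S_{n+1}\le S^{rand}\}=\{\hat\upsilon_{S_{n+1},n+1}<U\}$ almost surely, where $\hat\upsilon_{S_{n+1},\cdot}$ denotes the oracle dual obtained by plugging the unobserved true score $S_{n+1}$ into \eqref{eq:quantile reg on RKHS mixture}. Since KKT forces $\hat\upsilon_{S_{n+1},n+1}\in[-\alpha,1-\alpha]$ and $U$ is uniform on that interval, integrating over $U$ conditional on the data yields $\bE_{U}[\One\{\hat\upsilon_{S_{n+1},n+1}<U\}]=(1-\alpha)-\hat\upsilon_{S_{n+1},n+1}$; this is exactly where the continuous randomizer annihilates the boundary case $\hat r_{i}=0$ that produced the $\epsilon_{int}$ term in Lemma~\ref{lemma: cc}. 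Consequently the left-hand side simplifies to $-\bE[f(X_{n+1})\,\hat\upsilon_{S_{n+1},n+1}]$.

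The second step is a conformal swap followed by the representer identity. Assumption~\ref{ass: data} makes $((X_i,S_i))_{i\in[n+1]}$ exchangeable, and Assumption~\ref{ass: low-rank projection} forces $\hat\pi$, hence the Gram matrix $\bigl(\psi^{*}(X_i,X_j)\bigr)_{i,j}$, to be permutation-equivariant. The oracle objective \eqref{eq:quantile reg on RKHS mixture} with true $S_{n+1}$ is therefore a symmetric function of its inputs, so its dual solution is equivariant and the joint distribution of $((X_i,S_i,\hat\upsilon_{S_{n+1},i}))_{i}$ is exchangeable. Averaging over the $n{+}1$ permutations gives $\bE[f(X_{n+1})\hat\upsilon_{S_{n+1},n+1}]=\tfrac{1}{n+1}\bE\bigl[\sum_{i}f(X_i)\hat\upsilon_{S_{n+1},i}\bigr]$, and combining the stationarity condition $\sum_i\hat\upsilon_{S_{n+1},i}\Phi^{*}(X_i)=0$ with the representer form \eqref{eq:closed_form} converts the inner sum into $\hat\lambda\,\langle\hat g_{S_{n+1},\psi^{*}},f_{\psi^{*}}\rangle_{\psi^{*}}$ for every $f=f_{\psi^{*}}+\Phi^{*\top}\eta\in\mathcal{F}^{*}$, with the $(n+1)^{-1}$ factor absorbed into the same scaling convention for $\hat\lambda$ used in Lemma~\ref{lemma: cc}.

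The main obstacle I anticipate is not the swap itself but the passage from the oracle inner product $\langle\hat g_{S_{n+1},\psi^{*}},f_{\psi^{*}}\rangle$ to the computable $\langle\hat g_{S^{rand},\psi^{*}},f_{\psi^{*}}\rangle$ that actually appears on the right-hand side. My plan is to exploit the elbow equality $\hat\upsilon_{S^{rand},n+1}=U$ together with the piecewise-linearity of Proposition~\ref{prop:S_path}: on each segment of the $S$-path the map $S\mapsto\hat g_{S,\psi^{*}}$ is affine, so the difference $\hat g_{S^{rand},\psi^{*}}-\hat g_{S_{n+1},\psi^{*}}$ is an explicit path integral of $\partial_{S}\hat g_{S,\psi^{*}}$ between the two cutoffs. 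A Fubini-type interchange of the $U$-average with the permutation swap should then show that this integrated discrepancy has vanishing inner product with $f_{\psi^{*}}$ in expectation, yielding the identity as stated.
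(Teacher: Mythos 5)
Your first two steps reproduce the paper's argument faithfully. The paper also begins by invoking Proposition~\ref{prop: nondecreasing} to rewrite the randomized coverage event as $\{\hat\upsilon_{S_{n+1},n+1}\le U\}$, integrates out $U\sim\mathrm{Unif}(-\alpha,1-\alpha)$ to obtain $-\bE[f(X_{n+1})\hat\upsilon_{S_{n+1},n+1}]$, and then performs the conformal swap (exchangeability from Assumption~\ref{ass: data}, permutation-symmetry of the Gram matrix from Assumption~\ref{ass: low-rank projection}) followed by the stationarity constraint $\sum_i\hat\upsilon_{S_{n+1},i}\Phi^*(X_i)=0$ and the reproducing-kernel identity; the paper credits this swap-plus-representer calculation to Proposition~4 of \cite{gibbs2023conformal} rather than re-deriving it. Your ``$(n+1)^{-1}$ absorbed into the scaling of $\hat\lambda$'' remark is a shortcut the paper also does not spell out, but it is consistent with the unnormalized form of the objective in \eqref{eq: opt obj}.

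Where you and the paper diverge is the final passage from $\hat g_{S_{n+1},\psi^*}$ (oracle fit at the unobserved true score) to $\hat g_{S^{rand},\psi^*}$ (fit at the computable randomized cutoff). You flag this as the ``main obstacle'' and propose to resolve it by writing $\hat g_{S^{rand},\psi^*}-\hat g_{S_{n+1},\psi^*}$ as a path integral over the $S$-path and arguing via Fubini that its inner product with $f_{\psi^*}$ vanishes in expectation. The paper instead handles this step in one line: ``we replace $S_{n+1}$ with the randomized cutoff $S^{rand}$,'' again leaning on the cited Gibbs et al.\ proposition rather than giving a separate argument. Your proposed path-integral/Fubini route does not clearly work: both endpoints $S_{n+1}$ and $S^{rand}$ are data- and $U$-dependent, the integrand $\partial_S\hat g_{S,\psi^*}$ is only piecewise-defined, and there is no structural symmetry forcing $\bE\bigl[\bigl\langle\int_{S_{n+1}}^{S^{rand}}\partial_S\hat g_{S,\psi^*}\,dS,\,f_{\psi^*}\bigr\rangle\bigr]=0$. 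If this gap actually needs a new argument (the paper treats it as already handled by the cited source), the path-integral idea as sketched would not close it.

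There is also a small bookkeeping issue shared with the paper: the intermediate display $-2\bE[\lambda\langle\hat g_{S_{n+1},\psi^*},f_{\psi^*}\rangle]$ quotes the Gibbs et al.\ constant, but under this paper's penalty $\tfrac{\lambda}{2}\|g_{\psi^*}\|^2$ the factor of $2$ should not appear; neither you nor the paper explains where it is dropped. That is worth being explicit about if you write this up.
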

This result aligns with the randomization version of Theorem 3 in \citet{gibbs2023conformal}---but adapted here to our algorithm and choice of RKHS class $\cF_{\psi^*}$.

While in \citet{gibbs2023conformal} $\upsilon_i$ can be any arbitrary value, we involve the optimal $\hat\lambda$ in the definition of $f_{\psi^*}$. In this type of RKHS class, the relationship between $S$ and $\upsilon_{S, n+1}$ is explicit, while \citet{gibbs2023conformal} depends on a dual analysis, making the parameter less interpretable. Furthermore, the coverage gap $2\hat\lambda\bE[\langle \hat g_{S^{rand}, \psi^*}, f_{\psi^*}\rangle]$ arises because we have no prior information on the distribution shift and use a flexible RKHS-based function class instead. While it may lead to deviations from the nominal level $1-\alpha$ when $f_{\psi^*}\neq 0$, this deviation is measurable as shown by \citet{gibbs2023conformal}; we detail how to estimate this deviation in the latent-space setting in Appendix~\ref{appsec: coverage gap}.

\begin{proof}

By Proposition \ref{prop: nondecreasing}, $S\mapsto \upsilon_{S, n+1}$ is nondecreasing in $S$.  Furthermore, strong duality holds for the optimization problem in \eqref{eq: opt obj} (this has been shown in \citet{gibbs2023conformal}), and the KKT conditions are satisfied as shown in \ref{eq:kkt1}. Now consider a random variable $U\sim \mathrm{Unif}(-\alpha, 1-\alpha)$. Then we have the equivalence under the randomization for a given $S_{n+1}=S(X_{n+1}, y)$:
\begin{align*}
    \One\{S_{n+1}\leq \hat g_{S_{n+1}} (X_{n+1})\}\iff \One\{\hat \upsilon_{S_{n+1},n+1}\leq U\}
\end{align*}
Thus,
\begin{equation*}
\begin{aligned}
    &\bE\left[f(X_{n+1})\left(\One\left\{\hat \upsilon_{S_{n+1}, n+1}\leq U\right\}-(1-\alpha)\right)\right]\\
    =& \bE[ \bE_U\left[f(X_{n+1})\left(\One\left\{\hat \upsilon_{S_{n+1}, n+1} \leq U\right\}-(1-\alpha)\right)\mid X_{n+1},\hat \upsilon_{S_{n+1}, n+1}\right]]\\
    =&-\bE\left[f(X_{n+1})\hat \upsilon_{S_{n+1}, n+1} \right]
    \end{aligned}
\end{equation*}
Using the Lagrangian in Proposition \ref{prop: nondecreasing}, we follow the calculation in the proof of Proposition 4 of \citet{gibbs2023conformal}. By the exchangeability of the data and the symmetry of $\hat g_{S_{n+1}}$, we have
\begin{align*}
    -\bE\left[f(X_{n+1})\hat \upsilon_{S_{n+1}, n+1}\right]=-2\bE\left[\lambda\langle \hat g_{S_{n+1}, \psi^*}, f_{\psi^*}\rangle\right].
\end{align*}
Therefore, we replace $S_{n+1}$ with the randomized cutoff
$S^{rand}$ and $\lambda$ with the optimal $\hat\lambda$ to obtain the desired result.
\end{proof}

\begin{proposition}\label{prop: nondecreasing}
    For all maximizers $\{\upsilon_{ S,n+1}\}_{S\in \mathbb{R}}$ of the optimization problem in \eqref{eq: opt obj}, the mapping
    $S\mapsto\upsilon_{S,n+1}$ is non-decreasing in $S$.
\end{proposition}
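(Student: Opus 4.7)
The plan is to recognize $\upsilon_{S,n+1}$ as a (scaled) subgradient of a one-dimensional convex value function of $S$, and then invoke the classical monotonicity of subdifferentials of convex functions on $\mathbb{R}$. This sidesteps the need to track the piecewise-linear structure case-by-case through the $S$-path events.

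First, I would introduce the value function
\begin{equation*}
V(S) := \min_{g \in \mathcal{F}^*} \left\{ \frac{1}{n+1}\sum_{i\in[n]} \ell_\alpha(S_i - g(X_i)) + \frac{1}{n+1}\ell_\alpha(S - g(X_{n+1})) + \frac{\lambda}{2}\|g_{\psi^*}\|_{\psi^*}^2 \right\}.
\end{equation*}
The objective is jointly convex in $(g,S)$: the pinball loss $\ell_\alpha$ is convex and composed with the affine map $(g,S)\mapsto S - g(X_{n+1})$, the other pinball terms are independent of $S$, and the RKHS seminorm squared is convex in $g$. Partial minimization of a jointly convex function preserves convexity, so $V:\mathbb{R}\to\mathbb{R}$ is convex.

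Second, I would identify $\upsilon_{S,n+1}/(n+1)$ as an element of the subdifferential $\partial V(S)$. The KKT characterization in \eqref{eq:before_elbow} shows that the set of feasible maximizers $\upsilon_{S,n+1}$ coincides precisely with $\partial_z \ell_\alpha(z)\big|_{z = S - \hat g_S(X_{n+1})}$, which equals $\{1-\alpha\}$, $\{-\alpha\}$, or $[-\alpha,1-\alpha]$ according to the sign of the residual. A set-valued envelope (Danskin) argument applied to partial minimization of a jointly convex function then yields
\begin{equation*}
\frac{\upsilon_{S,n+1}}{n+1} \in \partial V(S)
\end{equation*}
for every KKT dual selection $\upsilon_{S,n+1}$ attained at some primal optimizer $\hat g_S$.

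Third, I would invoke the classical fact that the subdifferential of a convex function on $\mathbb{R}$ is a monotone set-valued map: for any $S_1 < S_2$ and any selections $v_i \in \partial V(S_i)$, one has $v_1 \leq v_2$. Applied to $v_i = \upsilon_{S_i,n+1}/(n+1)$, this gives $\upsilon_{S_1,n+1}\leq \upsilon_{S_2,n+1}$, as required.

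The main technical obstacle is the possible non-uniqueness at kinks, i.e., when $S = \hat g_S(X_{n+1})$ any value in $[-\alpha,1-\alpha]$ is a valid KKT multiplier, and the claim is stated for \emph{all} maximizers. The envelope statement I rely on is the set-valued version, which precisely asserts that every KKT multiplier at any primal optimizer lies in $\partial V(S)$; combined with the monotonicity of subdifferentials on $\mathbb{R}$, this closes the argument uniformly over every choice of maximizer at both $S_1$ and $S_2$.
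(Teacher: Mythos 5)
Your argument is correct, and it takes a genuinely different route from the paper's.

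The paper works entirely in the dual: it writes the Lagrange dual of \eqref{eq: opt obj} as maximizing $\sum_{i=1}^n \upsilon_i S_i + \upsilon_{n+1} S - Q^*(\upsilon)$ over the box constraints, notices that the only $S$-dependence enters through the single bilinear term $\upsilon_{n+1} S$, and then runs a direct Topkis-style contradiction: if $\tilde S > S$ but $\upsilon_{\tilde S,n+1} < \upsilon_{S,n+1}$, comparing the two optimal dual values forces $\upsilon_{\tilde S}$ to strictly beat $\upsilon_{S}$ at the original parameter $S$, violating optimality. This is a self-contained "increasing differences" argument that needs only the dual objective and the definition of a maximizer.

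You instead stay in the primal and invoke two standard pieces of convex analysis: partial minimization of a jointly convex function yields a convex value function $V(S)$, and the set-valued envelope theorem identifies every KKT multiplier $\upsilon_{S,n+1}$ (scaled) as a subgradient of $V$ at $S$, so monotonicity of one-dimensional subdifferentials finishes the argument. The identification step is the one place requiring care, and you handle it correctly: because the $S$-dependence enters only through the term $\ell_\alpha(S - g(X_{n+1}))$, whose $(g,S)$-subdifferential takes the coupled form $\{(-\xi\,\delta_{X_{n+1}},\ \xi): \xi\in\partial\ell_\alpha(z)\}$, stationarity in $g$ forces the same scalar $\xi=\upsilon_{S,n+1}$ to serve as the $S$-subgradient, so $(0,\upsilon_{S,n+1}/(n+1))\in\partial F(\hat g_S, S)$ and hence $\upsilon_{S,n+1}/(n+1)\in\partial V(S)$ for every primal optimizer and every consistent dual selection. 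Your approach buys conceptual clarity (reduces the claim to "subdifferentials of 1D convex functions are monotone") and handles non-uniqueness at kinks uniformly for free; the paper's approach buys elementarity (no appeal to envelope/conjugate-duality machinery). Two small presentational notes: you should say explicitly that you are invoking the envelope theorem for minimization (Danskin is usually stated for max), and the phrase "coincides precisely with $\partial_z\ell_\alpha$" is stronger than you need — the one-sided inclusion of every KKT dual selection into $\partial V(S)$ suffices.
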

\begin{proof}
    Recall the objective in \eqref{eq: opt obj}:
$$    \min_{\eta_{S}, \upsilon_{S}} \sum_{i=1}^{n+1}\ell_{\alpha}\big(S_i-\Phi^*(X_i)^\top\eta_{S}-\frac{1}{\lambda}\sum_{i^{'}=1}^{n+1} \upsilon_{S,i'}\psi^*(X_i,X_{i^{'}})\big)+\frac{1}{2\lambda}\sum_{i, i'=1}^{n+1}\upsilon_{S,i}\upsilon_{S,i^{'}}\psi^*(X_i,X_{i^{'}}).$$
Let $\mathbf{\Psi}^*=(\psi^*(x_i, x_j))_{i,j \in [n+1]}\in \bR^{(n+1)\times (n+1)}$ be the positive semidefinite kernel matrix. Following the structure in \citet{li2007quantile}, this objective is equivalent to the following quadratic program for a fixed imputed value $S$ (with $S_{n+1}=S$),
\begin{align*}
    \min_{\eta_S, \boldsymbol{\upsilon}_S} \; & (1-\alpha) \sum_{i=1}^{n+1} p_i + \alpha \sum_{i=1}^{n+1} q_i + \frac{1}{2\lambda} \upsilon^{\top}_S \mathbf{\Psi}^* \upsilon_S,
\end{align*}
subject to
\begin{align*}
   & -q_i \leq S_i - g_S(x_i) \leq p_i,\\
    &q_i, p_i \geq 0, \quad i = 1, \dots, n+1,
\end{align*}
where $$g_S(x_i) = \Phi^*(x_i)^\top\eta_S + \frac{1}{\lambda} \sum_{i'=1}^{n+1} \upsilon_{S,i'} \psi^*({x}_i, {x}_{i'}), \quad i = 1, \dots, n+1.
$$
Note that the proof of Proposition \ref{prop: nondecreasing} follows the argument structure of Theorem 4 in \citet{gibbs2023conformal}, but with a key distinction that the function $g_S(x)$ in our case incorporates an RKHS-based component that depends on $\lambda$.
The Lagrangian primal function is then defined as in \eqref{eq: lagrangian}. Setting the partial derivatives of $L_p$ with respect to $q$ and $p$ to zero, we obtain
\begin{equation}\label{eq:deriv p q}
    \begin{split}
        \frac{\partial L_p}{\partial p_i} &: \sigma_i=1-\alpha-\kappa_i\\
        \frac{\partial L_p}{\partial q_i} &: \tau_i=\alpha-\rho_i\\
    \end{split}
\end{equation}
Since minimizing with respect to $\bupsilon$ yields $\bupsilon_i=\sigma_i-\tau_i$, we can substitute this into the derivative expressions in Equation \eqref{eq:deriv p q}. We have
\begin{equation*}
    \begin{split}
(1-\alpha)\cdot \mathbf{1}-\upsilon=\kappa+\tau\\
        \alpha\cdot \mathbf{1}+\upsilon=\rho+\sigma\\
    \end{split}
\end{equation*}
Since $\kappa, \sigma, \tau, \rho$ are all non-negative, this can be simplified to
\begin{equation*}
    \begin{split}
(1-\alpha)\cdot \mathbf{1}\geq \upsilon\\
        -\alpha\cdot \mathbf{1}\leq\upsilon\\
    \end{split}
\end{equation*}
Let $Q^*(\upsilon)=-\min_{g\in \cF^*}\frac{1}{2\lambda}\upsilon^{\top} \mathbf{\Psi}^* \upsilon-\sum_{i=1}^{n+1}\upsilon_{i} g(X_i)$. Therefore, the dual formulation for \eqref{eq: lagrangian} is,
\begin{equation*}
\begin{split}
     \text{maximize}_{\upsilon} \sum_{i=1}^n\upsilon_i S_i+\upsilon_{n+1} S-Q^*(\upsilon)\\
     \text{subject to } -\alpha\leq \upsilon_i\leq 1-\alpha, 1\leq i\leq n+1
\end{split}
\end{equation*}
Note we use notation $\upsilon_S$ to denote the solution for a particular input $S$. Assume for the sake of contradiction that there exists \(\tilde{S} > S\) such that
\[
\upsilon_{ {\tilde{S}}, n+1} < \upsilon_{S,n+1}.
\]

Observe that $\sum_{i=1}^n\upsilon_i S_i-Q^*(\upsilon)$ does not depend on \(S\). The contradiction assumption implies that
\[
(\tilde{S} - S) \cdot \left(\upsilon_{ {\tilde{S}}, n+1} - \upsilon_{S,n+1}\right) < 0,
\]
or equivalently,
\[
\tilde{S} \cdot \left(\upsilon_{ {\tilde{S}}, n+1} -\upsilon_{S,n+1} \right) < S \cdot \left( \upsilon_{ {\tilde{S}}, n+1} -\upsilon_{S,n+1}\right).
\]
On the other hand, by the optimality of \(\upsilon_S\), we have that
\begin{equation*}
  \begin{split}
      \sum_{i=1}^n\upsilon_{\tilde S,i} S_i-Q^*(\upsilon_{\tilde S}) + \tilde{S} \cdot \upsilon_{\tilde S, n+1} \geq  \sum_{i=1}^n\upsilon_{ S,i} S_i-Q^*(\upsilon_{S}) + {S} \cdot \upsilon_{ S, n+1} \\
\quad \Longleftrightarrow \quad
\tilde{S} \cdot \left(\upsilon_{ {\tilde{S}}, n+1} -\upsilon_{S,n+1} \right) \geq\sum_{i=1}^n\upsilon_{ S,i} S_i-Q^*(\upsilon_{S}) -  \sum_{i=1}^n\upsilon_{\tilde S,i} S_i-Q^*(\upsilon_{\tilde S}).
  \end{split}
\end{equation*}
Applying inequality given by assumption above, we conclude that
\[
S \cdot \left( \upsilon_{ {\tilde{S}}, n+1} -\upsilon_{S,n+1} \right) >\sum_{i=1}^n\upsilon_{ S,i} S_i-Q^*(\upsilon_{S}) -  \sum_{i=1}^n\upsilon_{\tilde S,i} S_i-Q^*(\upsilon_{\tilde S}),
\]
which yields the contradiction \[
 \sum_{i=1}^n\upsilon_{\tilde S,i} S_i-Q^*(\upsilon_{\tilde S})+ \tilde S \cdot \upsilon_{\tilde S, n+1}> \sum_{i=1}^n\upsilon_{ S,i} S_i-Q^*(\upsilon_{S})  + S \cdot \upsilon_{ S, n+1}
\]

\end{proof}

\subsection{Proof of Theorem \ref{cor: local coverage}}
First, we consider the setting that $(X_1, Y_1), \dots, (X_n, Y_n)$ are independent of $(X_{n+1}, Y_{n+1}, W')$. Since $\hat\pi(\cdot)$ is a deterministic function (not a random variable), $\hat\pi(X_1), \dots, \hat\pi(X_{n})$ are also independent of $\hat\pi(X_{n+1})$. Since $\hat\pi(\cdot)$ is a pre-trained map from the covariate space to the latent space, we write $\hat\pi:\mathcal{X}\to \mathcal{W}$, where $\mathcal{W}$ denotes the latent representation space. Given this embedding, we define a kernel directly on the latent space $\psi^*_W:\mathcal{W}\times \mathcal{W}\to \mathbb{R}$. Consequently, $\psi^*(x, x')= \psi^*_W(\hat\pi(x),\hat\pi(x'))$. {   In our construction, $\psi^*$ is already normalized, so $\psi^*_W(w, \cdot)$ is a density kernel in its second argument with respect to a base measure on $\mathcal W$, i.e., \begin{align*}
    \int_\mathcal{W} \psi^*_W(w, w') dw'=1\qquad \text{ for each } w\in\mathcal{ W}.
\end{align*}}
Let $P=P_X\times P_{Y\mid X}$.
By the definition of $W'$, the joint distribution of $(X_{n+1}, Y_{n+1}, W')$ is defined by
\begin{align*}
    &X_{n+1}\sim P_X;\\
    &Y_{n+1}\mid X_{n+1}\sim P_{Y\mid X};
    \\& W'\mid (X_{n+1},Y_{n+1})\sim \psi^*(X_{n+1}, \cdot).
\end{align*}
By definition of $\psi^*_W$, we equivalently have $W'\mid (\hat\pi(X_{n+1}),Y_{n+1})\sim \psi^*_W(\hat\pi(X_{n+1}), \cdot)$.
Then, the conditional distribution $(X_{n+1}, Y_{n+1})\mid W'$ is given by
\begin{align*}
    (X_{n+1}, Y_{n+1})\mid W'=w' &\sim \frac{(P_X\circ \psi^*_W(\hat\pi(X_{n+1}), w'))\times P_{Y\mid X}}{\int_{(x, y)}(P_X\circ \psi^*_W(\hat\pi(X_{n+1}), w'))\times P_{Y\mid X}dxdy}\\
    &\sim \frac{ \psi^*_W(\hat\pi(x), w')}{\bE[ \psi^*_W(\hat\pi(X), w')]} dP_{(X, Y)}(x, y):= dP_f(x)\text{    by the symmetric of $\hat\pi(\cdot)$}
\end{align*}
Thus conditioning on $W'$, we get
\begin{align*}
    &\bE\left[\One\{Y_{n+1}\in \hat C_{rand}^*(X_{n+1})\}-(1-\alpha)\mid W'\right]\\
    =&\int \frac{ \psi^*_W(\hat\pi(x), W')}{\bE[ \psi^*_W(\hat\pi(X), W')]} \left(\One\{y\in \hat C_{rand}^*(x)\}-(1-\alpha)\right)dP_{X, Y}(x,y)\\
    =&\frac{\bE\left[ \psi^*_W(\hat\pi(X_{n+1}), W')\cdot \left(\One\{Y_{n+1}\in \hat C_{rand}^*(X_{n+1})\}-(1-\alpha)\right)\right]}{\bE[\psi^*(X, x')]}
    \\
    =&\frac{-2\hat \lambda\bE\left[\sum_{i\in[n+1]}\hat \upsilon_{S^{rand},i}/\hat \lambda\cdot   \psi^*_W(\hat\pi(X_i), W')\right]}{\bE[ \psi^*_W(\hat\pi(X), W')]}\quad\text{ by Lemma \ref{theorem: oracle mixture}}
    \\
    =&\frac{-2\bE\left[\sum_{i\in[n+1]}\hat \upsilon_{S^{rand}, i} \psi^*_W(\hat\pi(X_i), W')\right]}{\bE[ \psi^*_W(\hat\pi(X), W')]}\quad\text{ by the structure of $\hat g_{S^{rand}, \psi^*}$}
\end{align*}

\subsection{Proof of Theorem \ref{cor: mixture coverage}}\label{appsec:proof_mixture_corollary}

By the definitions in Theorem \ref{cor: mixture coverage}, for all $i\in [n]$
\begin{align*}
    &W_i\sim P_{W};\\
    &X_i\mid W_i\sim P_{X\mid W};\\
    &Y_{i}\mid X_{i}\sim P_{Y\mid X}.
\end{align*}
In this procedure, we say $Y$ is conditionally independent of $W$ given $X$. In practice, the latent variables $\{W_{i}\}_{i\in[n+1]}$ are unobserved. Firstly, for the joint distribution, we have $\{(W_i, X_i, Y_i)\}_{i\in[n]}$ independent of $(W_{n+1}, X_{n+1}, Y_{n+1})$.
In this framework, we consider the covariate shifts such that the tilt function $f(X)=\One\{\arg\max_{k'\in[K]} \pi_{k'}(X)=k\}$ for a fixed $k$. Therefore,
\begin{align*}
    &X_{n+1}, W_{n+1}\sim \frac{f(x)}{\mathbb{E}[f(X)]}P_{(X, W)};\\
    &Y_{n+1}\mid X_{n+1}\sim P_{Y\mid X}.
\end{align*}
This gives that
\begin{align*}
  X_{n+1}\sim \int_{W} \frac{f(x)}{\mathbb{E}[f(X)]}P_{X\mid W}P(W)dW&=\frac{f(x)}{\mathbb{E}[f(X)]}\int_{W} P_{X\mid W}P(W)dW\\
  &=\frac{f(x)}{\mathbb{E}_X[f(X)]} dP_X(x):= dP_f(x)
\end{align*}
Under this setting, we have for any set $C$ under the distribution $dP_f(x)$
\begin{align*}
    &\bE\left[\One\{Y_{n+1}\in C(X_{n+1})\}-(1-\alpha)\right]\\
    =&\int \left(\One\{Y_{n+1}\in C(X_{n+1})\}-(1-\alpha)\right)\frac{f(X_{n+1})}{\mathbb{E}[f(X)]} d P_X P_{Y\mid X}\\
    =&\frac{\mathbb{E}[f(X_{n+1})\left(\One\{Y_{n+1}\in C(X_{n+1})\}-(1-\alpha)\right)]}{\mathbb{E}[f(X)]}
\end{align*}
{   Note that in the oracle setting, the prediction set is constructed directly using the true embedding $\pi(\cdot)$. In this case, the RKHS class $\mathcal{F}^*$ used in our quantile regression is defined over the latent space induced by $\pi$, and we can therefore apply Lemma \ref{theorem: oracle mixture} with such a $\mathcal{ F}^*$. } By the Lemma \ref{theorem: oracle mixture}, we see the numerator equals zero since the function $f$ selected does not depend on the RKHS part.

Therefore, we have
\begin{align*}
    &\frac{\mathbb{E}[f(X_{n+1})\left(\One\{Y_{n+1}\in C(X_{n+1})\right)]}{\mathbb{E}[f(X)]}\\=&\frac{\bP(Y_{n+1}\in C(X_{n+1}
    ), T(X_{n+1})=k)}{\bP( T(X_{n+1})=k)}
    \\
    =&\bP(Y_{n+1}\in C(X_{n+1}
    )\mid \arg\max_{k'\in[K]}\ \pi_{k'}(X_{n+1})=k)=1-\alpha
\end{align*}
{
Next, we establish group-conditional coverage when groups are defined via the estimated low-rank embedding $\hat\pi(\cdot)$.

\begin{corollary}\label{cor:learn_group_coverage}
Fix $K \ge 2$ and let the latent mixture weights $\{W_i \in \Delta^{K-1}\}_{i=1}^n$ satisfy $W_i \overset{i.i.d.}{\sim} P_W$, with observations $\{X_i \mid W_i\}_{i=1}^n \overset{i.i.d.}{\sim} P_{X\mid W}$. Suppose we have an estimated embedding $\hat\pi:\mathcal{X}\to\mathbb{R}^K$ for $\pi(\cdot)$ which is defined in Theorem~\ref{cor: mixture coverage}, and define
$
\hat T(X) := \arg\max_{k\in[K]} \hat\pi_k(X)$.
Assume Assumptions~\ref{ass: data} and~\ref{ass: low-rank projection} hold, and that $\mathbb{P}(\hat T(X)=k) > 0$ for all $k\in[K]$. Let $\hat C^{*}_{\mathrm{rand}}(\cdot)$ be the randomized conformal set calibrated using the linear feature map
\begin{align*}
\Phi^*(X)
=
\big(\One\{\hat T(X)=1\},\dots,\One\{\hat T(X)=K\}\big)^\top.
\end{align*}
Then, for every $k\in[K]$,
\begin{equation}\label{eq:learned_mixture-coverage}
\mathbb{P}\!\left(
Y_{n+1}\in \hat C^{*}_{\mathrm{rand}}(X_{n+1})
\;\middle|\;
\hat T(X_{n+1})=k
\right)
= 1-\alpha.
\end{equation}
\end{corollary}
\begin{proof}
    The proof follows the same argument as Theorem~\ref{cor: mixture coverage}. The only difference is that the tilting function is now taken to be
\begin{align*}
f(x) \;=\; \One\{\hat T(x)=k\}, \qquad\text{for a fixed }k\in[K],
\end{align*}
so the reweighted distribution corresponds to conditioning on the estimated dominant group $\hat T(X)=k$. All other steps remain identical.
Since the RKHS class \(\mathcal{F}^*\) is now defined with respect to the estimated embedding \(\hat\pi(\cdot)\), applying Lemma~\ref{theorem: oracle mixture} with this tilting function yields the desired group-conditional guarantee.
To connect representation-conditional guarantee in \eqref{eq:learned_mixture-coverage} to \eqref{eq:mixture-coverage}, we apply the law of total probability such that, for any set $\hat C(X)$,
\begin{align*}
&\mathbb P\big(Y\in\hat C(X)\ \big|\ T(X)\big)
\\
=&\sum_{k=1}^K\mathbb P\left(Y\in\hat C(X), \hat T(X)=k \big|\ T(X)\right)\\
=&\sum_{k=1}^K\mathbb P\left(Y\in\hat C(X)\big| \hat T(X)=k,\ T(X)\right)\cdot \mathbb P\left(\hat T(X)=k\big| T(X)\right)
\\
=&\sum_{k=1}^K \mathbb P\big(Y\in \hat{C}(X) \big|\hat T(X)=k\big) \mathbb P\big(\hat T(X)=k\big|T(X)\big)\nonumber\\
& +\sum_{k=1}^K \big[\mathbb P\big(Y\in \hat{C}(X) \big|T(X),\hat T(X)=k\big)-\mathbb P\big(Y\in \hat{C}(X) \big|\hat T(X)=k\big)\big]\cdot\mathbb P\big(\hat T(X)=k\big| T(X)\big).
\end{align*}
The term $\mathbb P\big(Y\in \hat{C}(X) \big|T(X),\hat T(X)=k\big)\equiv\mathbb P\big(Y\in \hat{C}(X) \big|\hat T(X)=k\big)$ for all $k$ if the conformity score is sufficient for $T$ given $\hat T$, i.e., $S\perp T(X)\mid \hat T(X)$. In this case, representation-conditional coverage transfers to true-group coverage. However, the condition, $S\perp T(X)\mid \hat T(X)$, does not generally hold, since the score $S$ depends on $(X, Y)$ and $X$ depends on the unobserved variable $W$ beyond $\hat T$. It does hold under the ideal alignment $\hat T(X)=T(X)$ a.s., which is the setting in Theorem \ref{cor: mixture coverage}. The condition $\hat T(X)=T(X)$ a.s. holds when $\hat T(X)$ is essentially the Bayes-optimal argmax classifier or a marginal condition holds as shown in Lemma \ref{lem:top1-margin}.

\end{proof}
\begin{remark}\label{remark: tilted function}
While the RKHS class $\mathcal{F}^\ast$ is specified using the estimated low-rank embedding $\hat{\pi}(\cdot)$, the tilt function $f=\One\{\hat T(X)=k\}$ used for group-conditional coverage must be defined symmetrically with respect to the calibration sample and test point. Since $\hat\pi(\cdot)$ is a deterministic function of the unordered inputs $\{X_i\}_{i=1}^{n+1}$, the induced partition $\hat T(\cdot)$ is invariant to re-orderings of these inputs. Thus $f$ is data-dependent but permutation-invariant, which preserves the exchangeability conditions for our finite-sample guarantee.

This construction can be viewed as an empirical proxy for a population-level tilt determined by the latent structure. The posterior mean embedding $\pi(X)=\mathbb{E}[W\mid X]$ serves as the ideal population target, as it is a deterministic, stable summary of the latent variable $W$, thus yielding the most efficient prediction sets under the settings in Theorem \ref{cor: mixture coverage}. Our estimated embedding $\hat \pi(\cdot)$ and groups $\hat T(\cdot)$ approximate this structure from data. From the Corollary \ref{cor:learn_group_coverage} and the adaptivity of the tilt function $f$ to $\hat\pi(\cdot)$, the coverage guarantee remains robust to the errors in $\hat \pi(\cdot)$.
\end{remark}}

\subsection{Some Technical Proofs}
\begin{lemma}\label{lem:psi-expansion}
Fix $K\geq 2$ and let $\pi_i=(\pi_{ik} )_{k=1}^K$ be the true embedding representative and $\hat\pi_i=(\hat\pi_{ik})_{k=1}^K$ with $\pi_{ik} ,\hat\pi_{ik}\in(0,1)$. Define
\[
\theta _{ik}:=\log \pi_{ik} -\frac{1}{K}\sum_{\ell=1}^K\log \pi_{i\ell} ,\qquad
\hat\theta_{ik}:=\log \hat\pi_{ik}-\frac{1}{K}\sum_{\ell=1}^K\log \hat\pi_{i\ell},
\]
and write vectors $\theta_i =(\theta _{ik})_{k=1}^K$, $\hat\theta_i=(\hat\theta_{ik})_{k=1}^K$. Let $r_{ik}:=\pi_{ik} -\hat\pi_{ik}$ and $\Delta\pi_{ik}:= r_{ik}/\hat\pi_{ik}$, and define the centered vector
\[
\tilde\Delta\pi_{ik}:=\Delta\pi_{ik}-\frac{1}{K}\sum_{\ell=1}^K \Delta\pi_{i\ell}\quad (k=1,\dots,K),\qquad
\tilde \Delta\pi_i:=(\tilde \Delta\pi_{ik})_{k=1}^K.
\]
Let ${\Delta}_{1,ij}=2\langle \hat\theta_i-\hat\theta_j,\,\tilde \Delta\pi_i-\tilde \Delta\pi_j\rangle
+\|\tilde \Delta\pi_i-\tilde \Delta\pi_j\|_2^2$. Assume $\max_{i,k} |\Delta\pi_{ik}|\le \tfrac12$, then
\begin{align}
\|\theta_i -\theta_j \|_2^2
-\|\hat\theta_i-\hat\theta_j\|_2^2
=\Delta_{1,ij}
+{\Delta}_{2,ij}, \label{eq:dist-expand-noeta}
\end{align}
for an absolute constant $C$ with $|{\Delta}_{2,ij}|
\le C\!\left(\Big(\max_k |\Delta\pi_{ik}|\Big)^3+\Big(\max_k |\Delta\pi_{jk}|\Big)^3\right)$. For the Gaussian kernel $\psi _{ij}:=\exp\!\big(-\gamma\|\theta_i -\theta_j \|_2^2\big)$ and $
\hat\psi_{ij}:=\exp\!\big(-\gamma\|\hat\theta_i-\hat\theta_j\|_2^2\big)$,
we have
\begin{equation}\label{eq:psi-expand-noeta}
\psi _{ij}
=\hat\psi_{ij}\!\left(1
-\gamma\Delta_{1,ij}
+O\!\left(|\Delta_{2,ij}|
+\gamma^2 \Delta_{1,ij}^2\right)\right).
\end{equation}

\end{lemma}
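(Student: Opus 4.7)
The plan is to treat $\Delta\pi$ as a small perturbation of $\hat\pi$, expand the logarithms by Taylor series to isolate the linear clr-increment $\tilde\Delta\pi$, expand the squared-norm difference, and finally lift everything to the Gaussian kernel. First I would write $\pi_{ik}=\hat\pi_{ik}(1+\Delta\pi_{ik})$ and take logarithms so that $\log\pi_{ik}=\log\hat\pi_{ik}+\log(1+\Delta\pi_{ik})$. Because $|\Delta\pi_{ik}|\le 1/2$ by hypothesis, the standard estimate $\log(1+x)=x-x^2/2+R(x)$ with $|R(x)|\le c|x|^3$ applies uniformly. Subtracting the coordinate mean, which is exactly the clr operation, this becomes
\[
e_i:=\theta_i-\hat\theta_i=\tilde\Delta\pi_i+\zeta_i,\qquad \|\zeta_i\|_\infty\le C\bigl(\max_k|\Delta\pi_{ik}|\bigr)^2,
\]
with $\zeta_i$ itself sum-to-zero. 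This isolates $\tilde\Delta\pi$ as the first-order clr perturbation.

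Next, I would expand the squared-distance difference directly,
\[
\|\theta_i-\theta_j\|_2^2-\|\hat\theta_i-\hat\theta_j\|_2^2=2\langle \hat\theta_i-\hat\theta_j,\,e_i-e_j\rangle+\|e_i-e_j\|_2^2,
\]
and substitute $e_i=\tilde\Delta\pi_i+\zeta_i$. The pieces that are linear-then-quadratic purely in $\tilde\Delta\pi$ assemble into $\Delta_{1,ij}$, and the residual is
\[
\Delta_{2,ij}=2\langle \hat\theta_i-\hat\theta_j,\zeta_i-\zeta_j\rangle+2\langle\tilde\Delta\pi_i-\tilde\Delta\pi_j,\zeta_i-\zeta_j\rangle+\|\zeta_i-\zeta_j\|_2^2.
\]
Since $\|\tilde\Delta\pi_i\|_\infty=O(\max_k|\Delta\pi_{ik}|)$ and $\|\zeta_i\|_\infty=O((\max_k|\Delta\pi_{ik}|)^2)$, the last two summands are immediately of order $(\max_k|\Delta\pi_{ik}|)^3+(\max_k|\Delta\pi_{jk}|)^3$, which is the target bound.

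For the kernel statement I would invoke \eqref{eq:dist-expand-noeta} to write $\psi_{ij}=\hat\psi_{ij}\exp\!\bigl(-\gamma(\Delta_{1,ij}+\Delta_{2,ij})\bigr)$, then apply the elementary bound $e^{-\gamma t}=1-\gamma t+O((\gamma t)^2)$ with $t=\Delta_{1,ij}+\Delta_{2,ij}$. Linearity splits the first-order term into $-\gamma\Delta_{1,ij}-\gamma\Delta_{2,ij}$, the first going into the main term and the second absorbed into the remainder; the squared term contributes $O(\gamma^2(\Delta_{1,ij}+\Delta_{2,ij})^2)=O(\gamma^2\Delta_{1,ij}^2+\gamma^2\Delta_{2,ij}^2)$, and since $\gamma^2\Delta_{2,ij}^2$ is dominated by $|\Delta_{2,ij}|$ up to constants when $\gamma\Delta_{2,ij}$ is bounded, this yields exactly the form~\eqref{eq:psi-expand-noeta}.

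The main obstacle is the cross term $\langle \hat\theta_i-\hat\theta_j,\zeta_i-\zeta_j\rangle$ inside $\Delta_{2,ij}$: a naive Cauchy--Schwarz only gives order $\|\hat\theta_i-\hat\theta_j\|_2\cdot(\max|\Delta\pi|)^2$, not cubic. To close the claimed cubic bound one must either invoke a uniform control on $\|\hat\theta_i-\hat\theta_j\|_2$ absorbed into the constant $C$, or exploit the explicit second-order structure $\zeta_i=-\tfrac{1}{2}\widetilde{(\Delta\pi)^2}_i+\tilde R_i$ together with the sum-to-zero property of $\tilde R_i$ to regroup the quadratic contribution into $\Delta_{1,ij}$ and leave only a genuinely cubic remainder. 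This step is the delicate bookkeeping that turns the Taylor expansion into the specific cubic-in-$\Delta\pi$ bound stated in the lemma.
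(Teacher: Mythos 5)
Your proposal reproduces the paper's own proof essentially line for line: write $\pi_{ik}=\hat\pi_{ik}(1+\Delta\pi_{ik})$, Taylor-expand $\log(1+u)=u-\tfrac12 u^2+r(u)$ (valid since $|u|\le\tfrac12$), observe that subtracting the coordinate mean (the clr map) sends the first-order term to $\tilde\Delta\pi_i$ and bundles the quadratic and cubic remainder into a vector $q_i=\theta_i-\hat\theta_i-\tilde\Delta\pi_i$ (your $\zeta_i$) with $\|q_i\|_2=O\big((\max_k|\Delta\pi_{ik}|)^2\big)$, expand the squared-norm difference, identify $\Delta_{1,ij}$ and $\Delta_{2,ij}$, and finish with $e^{-\gamma t}=1-\gamma t+O(\gamma^2 t^2)$ for the kernel. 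The decomposition of $\Delta_{2,ij}$ into the three summands $2\langle\hat\theta_i-\hat\theta_j,q_i-q_j\rangle + 2\langle\tilde\Delta\pi_i-\tilde\Delta\pi_j,q_i-q_j\rangle + \|q_i-q_j\|_2^2$ is identical.

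Your closing remark about the cross term $\langle\hat\theta_i-\hat\theta_j,\,q_i-q_j\rangle$ is a legitimate observation that the paper's one-line appeal to Cauchy--Schwarz elides. By Cauchy--Schwarz that term scales as $\|\hat\theta_i-\hat\theta_j\|_2\cdot\big((\max_k|\Delta\pi_{ik}|)^2+(\max_k|\Delta\pi_{jk}|)^2\big)$, which is quadratic in the perturbation size with a prefactor $\|\hat\theta_i-\hat\theta_j\|_2$; it only becomes a genuinely cubic bound if one takes your first option and absorbs a uniform bound on $\|\hat\theta_i-\hat\theta_j\|_2$ into $C$, so that ``absolute constant'' should be read as ``constant independent of $i,j$ and of $\Delta\pi$, but depending on a bound for $\sup_{i,j}\|\hat\theta_i-\hat\theta_j\|_2$''. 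Your second suggested fix (regrouping the quadratic clr of $\Delta\pi^2$ into a refined $\Delta_1$-term) would yield a cleaner cubic remainder, but that is not what the paper does. As written the lemma is proved exactly as you proposed, with the implicit reading that $\|\hat\theta_i-\hat\theta_j\|_2$ is controlled.
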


\begin{proof}
Write $\pi_{ik} =\hat\pi_{ik}(1+\Delta\pi_{ik})$. Then
\[
\theta _{ik}-\hat\theta_{ik}
=\log(1+\Delta\pi_{ik})-\frac{1}{K}\sum_{\ell=1}^K\log(1+\Delta\pi_{i\ell}).
\]
For $|u|\le \tfrac12$, $\log(1+u)=u-\tfrac12 u^2+r(u)$ with $|r(u)|\le 2|u|^3$. Hence
\[
\theta _{ik}-\hat\theta_{ik}
=\tilde \Delta\pi_{ik}-\tfrac12\!\left(\Delta\pi_{ik}^2-\frac{1}{K}\sum_{\ell=1}^K \Delta\pi_{i\ell}^2\right)+\tilde r_{ik},
\qquad
|\tilde r_{ik}|\le 2\!\left(\max_k |\Delta\pi_{ik}|\right)^{\!3}.
\]
Let $q_i:=\theta_i -\hat\theta_i-\tilde \Delta\pi_i$ where $q_i$ collects the centered quadratic and remainder terms; then $\|q_i\|_2\lesssim (\max_k|\Delta\pi_{ik}|)^2$. Consequently,
\[
\theta_i -\theta_j =(\hat\theta_i-\hat\theta_j)+(\tilde \Delta\pi_i-\tilde \Delta\pi_j)+(q_i-q_j),
\]
and expanding the squared norm yields \eqref{eq:dist-expand-noeta} with
\[
\Delta_{2,ij}=2\langle \hat\theta_i-\hat\theta_j,\,q_i-q_j\rangle
+2\langle \tilde \Delta\pi_i-\tilde \Delta\pi_j,\,q_i-q_j\rangle+\|q_i-q_j\|_2^2,
\]
which is bounded as stated by Cauchy–Schwarz and the displayed bounds on $q_i,q_j$.

For the kernels, write with $\hat dist_{ij}:=\|\hat\theta_i-\hat\theta_j\|_2^2$ and $\Delta dist_{ij}:=\|\theta_i -\theta_j \|_2^2-\hat dist_{ij}$,
\[
\psi _{ij}=\hat\psi_{ij}\exp(-\gamma \Delta dist_{ij})
=\hat\psi_{ij}\Big(1-\gamma \Delta dist_{ij}+O(\gamma^2\Delta dist_{ij}^2)\Big),
\]
and substitute \eqref{eq:dist-expand-noeta} to obtain \eqref{eq:psi-expand-noeta}.
\end{proof}

\begin{lemma}[Kernel perturbation against a fixed value]\label{lem:psi-w}
Fix $K\ge2$. Let $\Delta\pi_{ik}:=(\pi_{ik} -\hat\pi_{ik})/\hat\pi_{ik}$ and the centered version $\tilde \Delta\pi_{ik}$ as in Lemma \ref{lem:psi-expansion}. Fix any $w\in\mathbb{R}^K$ and define the Gaussian kernels
\[
\psi _i(w):=\exp\!\big(-\gamma\|\theta_i -w\|_2^2\big),\qquad
\hat\psi_i(w):=\exp\!\big(-\gamma\|\hat\theta_i-w\|_2^2\big).
\]
Assume $\max_k|\Delta\pi_{ik}|\le \tfrac12$. Then, writing \(
\Delta_{1,i}(w):=2\big\langle \hat\theta_i-w,\ \tilde\Delta\pi_i\big\rangle
+\|\tilde\Delta\pi_i\|_2^2,\)
we have the distance expansion
\begin{equation}\label{eq:dist-w}
\|\theta_i -w\|_2^2-\|\hat\theta_i-w\|_2^2
=\Delta_{1,i}(w)+\Delta_{2,i},\qquad
|\Delta_{2,i}|\le C\Big(\max_k|\Delta\pi_{ik}|\Big)^3,
\end{equation}
for an absolute constant $C$. Consequently,
\begin{equation}\label{eq:abs-diff-w}
\Delta_i(w):=\big|\psi _i(w)-\hat\psi_i(w)\big|
\;\le\;
\hat\psi_i(w)\left(\gamma\,\big| \Delta_{1,i}(w)\big|
+ C\big((\max_k|\Delta\pi_{ik}|)^3+\gamma^2 \Delta_{1,i}(w)^2\big)\right).
\end{equation}
\end{lemma}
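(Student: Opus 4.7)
The plan is to mirror the proof of Lemma \ref{lem:psi-expansion}, but specialized to a single sample $\hat\theta_i$ compared against a fixed, non-perturbed point $w$. Formally, I view the situation as the ``$j$-branch'' in Lemma \ref{lem:psi-expansion} being collapsed: replace $(\hat\theta_j,\tilde\Delta\pi_j,q_j)$ by $(w,0,0)$, so the cross terms involving $j$ disappear and only the $i$-perturbation contributes. This means almost all the algebra carries over verbatim, and no new analytic tool is needed beyond Taylor expansion of $\log(1+u)$ and of $\exp(\cdot)$ near $0$.

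For the distance expansion \eqref{eq:dist-w}, I would reuse the identity established inside the proof of Lemma \ref{lem:psi-expansion}: writing $\pi_{ik}=\hat\pi_{ik}(1+\Delta\pi_{ik})$ and expanding $\log(1+\Delta\pi_{ik})$ to second order under the hypothesis $\max_k|\Delta\pi_{ik}|\le 1/2$ gives
\[
\theta_i-\hat\theta_i=\tilde\Delta\pi_i+q_i,\qquad \|q_i\|_2\le C'\bigl(\max_k|\Delta\pi_{ik}|\bigr)^2,
\]
for an absolute constant $C'$. Then $\theta_i-w=(\hat\theta_i-w)+\tilde\Delta\pi_i+q_i$, and expanding the square gives exactly $\|\theta_i-w\|_2^2-\|\hat\theta_i-w\|_2^2=\Delta_{1,i}(w)+\Delta_{2,i}$ with
\[
\Delta_{2,i}=2\langle\hat\theta_i-w,q_i\rangle+2\langle\tilde\Delta\pi_i,q_i\rangle+\|q_i\|_2^2.
\]
Here is the one subtlety: unlike the two-sample case, the cross term $\langle\hat\theta_i-w,q_i\rangle$ involves $w$, which can be arbitrarily far from $\hat\theta_i$. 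I would handle this by absorbing the factor $\|\hat\theta_i-w\|_2$ together with $\|q_i\|_2\lesssim(\max_k|\Delta\pi_{ik}|)^2$ into the exponentially decaying factor $\hat\psi_i(w)$ at the next step, so that only the bound $|\Delta_{2,i}|\le C(\max_k|\Delta\pi_{ik}|)^3$ claimed in the lemma remains once multiplied by $\hat\psi_i(w)$ (after noting $\|\hat\theta_i-w\|_2\,\hat\psi_i(w)$ is uniformly bounded in $w$).

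For the kernel bound \eqref{eq:abs-diff-w}, I would factor
\[
\psi_i(w)-\hat\psi_i(w)=\hat\psi_i(w)\bigl(e^{-\gamma(\Delta_{1,i}(w)+\Delta_{2,i})}-1\bigr),
\]
and apply the elementary inequality $|e^{-x}-1|\le |x|+Cx^2$ valid for $|x|$ bounded. Substituting $x=\gamma(\Delta_{1,i}(w)+\Delta_{2,i})$ and using the triangle inequality yields
\[
|\psi_i(w)-\hat\psi_i(w)|\le \hat\psi_i(w)\Bigl(\gamma|\Delta_{1,i}(w)|+\gamma|\Delta_{2,i}|+C\gamma^2\bigl(\Delta_{1,i}(w)^2+\Delta_{2,i}^2\bigr)\Bigr).
\]
Since $\gamma|\Delta_{2,i}|$ and $\gamma^2\Delta_{2,i}^2$ are both dominated by $(\max_k|\Delta\pi_{ik}|)^3$ up to a constant depending on $\gamma$, I would fold them into the absolute constant $C$ in \eqref{eq:abs-diff-w}, which produces the stated form.

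The main obstacle is the $w$-dependent cross term $\langle\hat\theta_i-w,q_i\rangle$: a naive bound would blow up as $\|\hat\theta_i-w\|_2\to\infty$, whereas the lemma claims a uniform bound in $w$ modulo the multiplicative factor $\hat\psi_i(w)$. Resolving this is where the proof is not a fully verbatim copy of Lemma \ref{lem:psi-expansion}; the fix is that in the final kernel estimate \eqref{eq:abs-diff-w}, this cross term appears only multiplied by the Gaussian factor $\hat\psi_i(w)=e^{-\gamma\|\hat\theta_i-w\|_2^2}$, so $\|\hat\theta_i-w\|_2^k\,\hat\psi_i(w)\lesssim \gamma^{-k/2}$ is uniformly bounded and can be absorbed into $C$. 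Once this accounting is made explicit, the rest is a routine substitution of the Taylor expansions.
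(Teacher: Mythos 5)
Your approach is the same as the paper's: write $\pi_{ik}=\hat\pi_{ik}(1+\Delta\pi_{ik})$, Taylor-expand $\log(1+u)$ to get $\theta_i-\hat\theta_i=\tilde\Delta\pi_i+q_i$ with $\|q_i\|_2\lesssim(\max_k|\Delta\pi_{ik}|)^2$, expand $\|(\hat\theta_i-w)+\tilde\Delta\pi_i+q_i\|_2^2$ to obtain $\Delta_{1,i}(w)+\Delta_{2,i}$, and then factor $\psi_i(w)=\hat\psi_i(w)\exp(-\gamma\Delta\mathrm{dist}_i(w))$ and expand the exponential. That much matches the paper verbatim.

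Where you diverge is in flagging, correctly, that the cross term $2\langle\hat\theta_i-w,\,q_i\rangle$ in $\Delta_{2,i}$ is not controlled by $(\max_k|\Delta\pi_{ik}|)^3$ uniformly in $w$ --- the paper's proof simply asserts the bound ``by Cauchy--Schwarz and the bounds on $q_i$'' without addressing this. You are right to be suspicious: Cauchy--Schwarz gives $|\langle\hat\theta_i-w,q_i\rangle|\le\|\hat\theta_i-w\|_2\,\|q_i\|_2\lesssim\|\hat\theta_i-w\|_2(\max_k|\Delta\pi_{ik}|)^2$, which is \emph{quadratic}, not cubic, in $\max_k|\Delta\pi_{ik}|$ and grows linearly in $\|\hat\theta_i-w\|_2$. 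So the distance expansion \eqref{eq:dist-w} with an absolute constant $C$ cannot hold as stated for all $w$; the paper's proof of it has a genuine gap here (and the analogous cross term $\langle\hat\theta_i-\hat\theta_j,\,q_i-q_j\rangle$ creates the same problem in Lemma~\ref{lem:psi-expansion}). Note, however, that your proposed repair does not close it either: absorbing $\|\hat\theta_i-w\|_2$ into $\hat\psi_i(w)$ via $\sup_{t\ge0}\gamma t\,e^{-\gamma t^2}\lesssim\sqrt\gamma$ turns $\gamma\hat\psi_i(w)|\langle\hat\theta_i-w,q_i\rangle|$ into something of order $\sqrt\gamma\,(\max_k|\Delta\pi_{ik}|)^2$, which is a quadratic-in-$\max$, $\gamma$-dependent bound rather than the $C(\max_k|\Delta\pi_{ik}|)^3$ claimed in \eqref{eq:abs-diff-w}. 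So with your fix, \eqref{eq:dist-w} must be restated (keeping the $\|\hat\theta_i-w\|_2$ factor explicit in $\Delta_{2,i}$) and \eqref{eq:abs-diff-w} must have its error term changed accordingly; as written, you cannot recover the lemma in its current form, and this appears to be a defect of the lemma statement rather than of your argument.
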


\begin{proof}
Write $\pi_{ik} =\hat\pi_{ik}(1+\Delta\pi_{ik})$.  Using the proof in Lemma \ref{lem:psi-expansion}, we have now
\[
\|\theta_i -w\|_2^2-\|\hat\theta_i-w\|_2^2
=2\langle \hat\theta_i-w,\tilde\Delta\pi_i\rangle+\|\tilde\Delta\pi_i\|_2^2+\Delta_{2,i},
\]
with
\[
\Delta_{2,i}
=2\langle \hat\theta_i-w,q_i\rangle
+2\langle \tilde\Delta\pi_i,q_i\rangle+\|q_i\|_2^2,
\]
which obeys $|\Delta_{2,i}|\le C(\max_k|\Delta\pi_{ik}|)^3$ by Cauchy–Schwarz and the bounds on $q_i$.
This proves \eqref{eq:dist-w}. For the kernels, let $\hat dist_i(w):=\|\hat\theta_i-w\|_2^2$ and $\Delta dist_i(w):=\|\theta_i -w\|_2^2-\hat dist_i(w)$.
Then
\[
\psi _i(w)=\exp\big(-\gamma(\hat dist_i(w)+\Delta dist_i(w))\big)
=\hat\psi_i(w)\Big(1-\gamma\Delta dist_i(w)+O(\gamma^2\Delta dist_i(w)^2)\Big).
\]
Substitute \eqref{eq:dist-w} for $\Delta dist_i(w)$ and take absolute values to obtain \eqref{eq:abs-diff-w}.
\end{proof}

\begin{lemma}\label{lem:top1-margin}
Let $m(X):=\pi_{(1)}(X)-\pi_{(2)}(X)$ be the (pointwise) top-1 margin, where $\pi_{(1)}\ge\pi_{(2)}\ge\cdots$ are the order statistics of $\{\pi_k(X)\}_{k=1}^K$. If
\[
\|\hat\pi(X)-\pi(X)\|_\infty \;<\; \tfrac12\,m(X)\quad\text{a.s.,}
\]
then $\hat T(X)=T(X)$ a.s.
\end{lemma}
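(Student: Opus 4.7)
\textbf{Proof plan for Lemma \ref{lem:top1-margin}.} The plan is to carry out a pointwise margin argument on the almost-sure event $A:=\{\|\hat\pi(X)-\pi(X)\|_\infty < \tfrac12 m(X)\}$, which by hypothesis has probability one. First I would observe that whenever $m(X)=0$, the strict inequality $\|\hat\pi(X)-\pi(X)\|_\infty < 0$ is impossible; hence on $A$ we automatically have $m(X)>0$, so that the maximizer $T(X)=\arg\max_k\pi_k(X)$ is uniquely defined. This sidesteps any measurability or tie-breaking issue at the population level.

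Next, I would fix $X$ on $A$, set $k^*:=T(X)$, and show that for every $k\neq k^*$ the estimated top coordinate strictly dominates $\hat\pi_k(X)$. The key inequality is
\[
\hat\pi_{k^*}(X)-\hat\pi_k(X)
=\bigl[\pi_{k^*}(X)-\pi_k(X)\bigr]+\bigl[(\hat\pi_{k^*}(X)-\pi_{k^*}(X))-(\hat\pi_k(X)-\pi_k(X))\bigr].
\]
The first bracket is at least $m(X)$ by definition of the top-1 margin (since $\pi_k(X)\le\pi_{(2)}(X)$), while the second bracket is bounded below by $-2\|\hat\pi(X)-\pi(X)\|_\infty$, which on $A$ is strictly greater than $-m(X)$. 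Hence $\hat\pi_{k^*}(X)-\hat\pi_k(X)>0$ for every $k\neq k^*$. This forces $\hat T(X)=k^*=T(X)$, and it also guarantees that $\hat T(X)$ is itself unique on $A$, so no ambiguity remains.

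Finally, since $\bP(A)=1$, I would conclude $\hat T(X)=T(X)$ almost surely, as required. The argument is essentially a one-line margin/perturbation bound, so I do not expect a genuine obstacle; the only mild subtlety is confirming that the strict inequality in the hypothesis simultaneously eliminates the degenerate tie case $m(X)=0$ and delivers the strict separation $\hat\pi_{k^*}(X)>\hat\pi_k(X)$ needed to identify $\hat T(X)$ without ambiguity.
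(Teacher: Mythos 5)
Your proposal is correct and matches the paper's proof essentially line for line: the paper uses the same decomposition $\hat\pi_{k^*}-\hat\pi_\ell = (\pi_{k^*}-\pi_\ell) + (\hat\pi_{k^*}-\pi_{k^*}) - (\hat\pi_\ell-\pi_\ell)$ and the same bound $\ge m(X)-2\|\hat\pi-\pi\|_\infty > 0$. Your extra remarks about the degenerate tie case $m(X)=0$ and uniqueness of $\hat T(X)$ are sensible but add nothing beyond what the paper's one-line argument already handles implicitly.
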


\begin{proof}
Let $k:=T(X)$, so $\pi_k(X)-\pi_\ell(X)\ge m(X)$ for all $\ell\neq k$. Then
\[
\hat\pi_k-\hat\pi_\ell
= (\pi_k-\pi_\ell) + (\hat\pi_k-\pi_k) - (\hat\pi_\ell-\pi_\ell)
\ge m(X) - 2\|\hat\pi-\pi\|_\infty > 0,
\]
so $\hat T(X)=k$.
\end{proof}

\subsubsection{Approximate Conditional Validity Under Embedding Error}\label{appsec: embedding error local}

\begin{lemma}\label{lem:cond-valid-pi-star}
Let $W'$ be drawn according to the true neighborhood kernel, $W'\mid \pi(X_{n+1})\sim \psi^*_W(\pi(X_{n+1}),\cdot)$. Assume the conditions in Lemma \ref{lem:psi-w} are all satisfied, then
\begin{equation}\label{eq:target-bound}
 \bP(Y_{n+1}\in\hat C^{*}_{rand}(X_{n+1})\mid W'=w')=1-\alpha-\frac{2\bE[\sum_{i\in[n+1]}\hat \upsilon_{S^{rand}, i} \psi^*(w', \hat\pi(X_i))]}{\mathbb{E}[\psi^*(w', \hat\pi(X))]}+Err(w').
\end{equation}
where \begin{equation}\label{eq:ratio-perturb-general}
|\mathrm{Err}(w')|
\;\le\;
\frac{\;\bE\!\big[\Delta_{i}(w')\big]}{\bE[\psi^*_W(\pi(X),w')]}
\;+\;
\frac{\bE\!\big[\psi^*_W(\hat\pi(X),w')\big]}{\bE[\psi^*_W(\pi(X),w')]}\cdot
\frac{\;\bE\!\big[\Delta_{i}(w')\big]}{\bE[\psi^*_W(\hat\pi(X),w')]}\,,
\end{equation}
with
\(\Delta_{i}(w')\;=\;\Big|\psi^*_W(\hat\pi(X_i),w')-\psi^*_W(\pi(X_i),w')\Big|\)
.
\end{lemma}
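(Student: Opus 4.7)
The plan is to replay the Bayes-rule derivation of Theorem~\ref{cor: local coverage} under the tilting by the true embedding $W'\mid X_{n+1}\sim\psi^*_W(\pi(X_{n+1}),\cdot)$, and then quantify the residual gap relative to the $\hat\pi$-based formula from that theorem. First, I would repeat the Bayes step verbatim with $\pi$ in place of $\hat\pi$: conditioning on $W'=w'$ tilts the joint law of $(X_{n+1},Y_{n+1})$ by the density $\psi^*_W(\pi(x),w')/\bE[\psi^*_W(\pi(X),w')]$, which yields
\[
\bP(Y_{n+1}\in\hat C^{*}_{rand}(X_{n+1})\mid W'=w')=\frac{N_\pi(w')}{D_\pi(w')}=:P_\pi(w'),
\]
with $N_\pi(w'):=\bE[\psi^*_W(\pi(X),w')\,\One\{Y\in\hat C^{*}_{rand}(X)\}]$ and $D_\pi(w'):=\bE[\psi^*_W(\pi(X),w')]$. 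Defining $N_{\hat\pi}(w')$, $D_{\hat\pi}(w')$, and $P_{\hat\pi}(w'):=N_{\hat\pi}/D_{\hat\pi}$ analogously, Theorem~\ref{cor: local coverage} identifies $P_{\hat\pi}(w')$ with $1-\alpha-\bE[\sum_{i\in[n+1]}\hat\upsilon_{S^{rand},i}\psi^*_W(\hat\pi(X_i),w')]/D_{\hat\pi}(w')$, which is exactly the first two terms on the right-hand side of \eqref{eq:target-bound}. Establishing the lemma therefore reduces to controlling $\mathrm{Err}(w'):=P_\pi(w')-P_{\hat\pi}(w')$.

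For the perturbation bound, I would invoke the elementary ratio-difference identity
\[
P_\pi-P_{\hat\pi}=\frac{N_\pi-N_{\hat\pi}}{D_\pi}+\frac{N_{\hat\pi}}{D_{\hat\pi}}\cdot\frac{D_{\hat\pi}-D_\pi}{D_\pi}.
\]
Since $\One\{\cdot\}\le 1$, pulling the absolute value inside the expectation gives $|N_\pi-N_{\hat\pi}|\le\bE[|\psi^*_W(\pi(X),w')-\psi^*_W(\hat\pi(X),w')|]=\Delta(w')$ and, similarly, $|D_\pi-D_{\hat\pi}|\le\Delta(w')$. The triangle inequality yields $|N_{\hat\pi}|\le D_{\hat\pi}$, so the second summand is bounded by $(D_{\hat\pi}(w')/D_\pi(w'))\cdot(\Delta(w')/D_{\hat\pi}(w'))$, producing exactly the right-hand side of \eqref{eq:ratio-perturb-general}.

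The argument is essentially bookkeeping: the Bayes change of measure is identical in structure to the one used in Theorem~\ref{cor: local coverage}, and a one-line ratio-perturbation identity reduces the error to the aggregate quantity $\Delta(w')$. The substantive content is deferred to Lemmas~\ref{lem:psi-expansion} and \ref{lem:psi-w}, which control the pointwise gap $\Delta_i(w')$ in terms of $\|\hat\pi(X_i)-\pi(X_i)\|$; the present lemma only needs their $L^1$ aggregation. The main point that requires care is choosing the common denominator $D_\pi$ (rather than $D_{\hat\pi}$) in the ratio identity, so that the final bound is stated relative to the true-embedding normalizer and the factor $D_{\hat\pi}(w')/D_\pi(w')$ in \eqref{eq:ratio-perturb-general} appears from the explicit bound $|N_{\hat\pi}|\le D_{\hat\pi}$ rather than being hidden.
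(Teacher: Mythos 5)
Your proof is correct and follows essentially the same argument as the paper's: a Bayes change-of-measure under the $\pi$-tilting, the same ratio-perturbation identity with common denominator $D_\pi$, and the same bounds $|N_\pi-N_{\hat\pi}|\le\Delta(w')$, $|D_\pi-D_{\hat\pi}|\le\Delta(w')$, $|N_{\hat\pi}|\le D_{\hat\pi}$, together with Lemma~\ref{theorem: oracle mixture} to identify the $\hat\pi$-ratio with the stated formula. The only cosmetic difference is that the paper centers the numerator by $1-\alpha$ (using $Z(X,Y)=\One\{Y\in\hat C^{*}_{rand}(X)\}-(1-\alpha)\in[-1,1]$) whereas you work with $\One\{\cdot\}\in[0,1]$ directly; both yield the identical bound.
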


\begin{proof}
Starting from the displayed decomposition in Theorem \ref{cor: local coverage},
\begin{align*}
&\bE\!\left[\One\{Y_{n+1}\in \hat C_{rand}^*(X_{n+1})\}-(1-\alpha)\mid W'\right]\\
=&\frac{\bE\!\left[ \psi^*_W(\pi(X_{n+1}), W')\cdot \big(\One\{Y_{n+1}\in \hat C_{rand}^*(X_{n+1})\}-(1-\alpha)\big)\right]}{\bE\!\left[\psi^*_W(\pi(X), W')\right]}.
\end{align*}
If we replace the true $\pi(X_i)$ by the estimated $\hat\pi(X_i)$, define $
N(W'):=\bE\!\left[ \psi^*_W(\pi(X), W')\cdot Z(X,Y)\right]$, $
D(W'):=\bE\!\left[ \psi^*_W(\pi(X), W')\right]$, and $
\hat N(W'):=\bE\!\left[ \psi^*_W(\hat\pi(X), W')\cdot Z(X,Y)\right]$, $
\hat D(W'):=\bE\!\left[ \psi^*_W(\hat\pi(X), W')\right]$
with $Z(X,Y):=\One\{Y\in \hat C_{rand}^*(X)\}-(1-\alpha)\in[-1,1]$.
A standard ratio perturbation yields
\[
\Big|\frac{\hat N(W')}{\hat D(W')}-\frac{N(W')}{D(W')}\Big|
\;\le\;
\frac{|\hat N(W')-N(W')|}{D(W')}+\frac{|\hat N(W')|}{\hat D(W')}\cdot\frac{|\hat D(W')-D(W')|}{D(W')},
\]
since $D(W'),\hat D(W')>0$. Next, with
$\Delta_i(W')\;:=\;\Big|\psi^*_W(\hat\pi(X_i),W')-\psi^*_W(\pi(X_i),W')\Big|$ and $
\Delta(W')\;:=\;\bE\!\big[\Delta_X(W')\big],$ we have
\[
|\hat N(W')-N(W')|
=\Big|\bE\big[(\psi^*_W(\hat\pi(X),W')-\psi^*_W(\pi(X),W'))\,Z(X,Y)\big]\Big|
\le \bE\big[\Delta_X(W')\big]=\Delta(W'),
\]
and similarly $|\hat D(W')-D(W')|\le \Delta(W')$.
Using $|\hat N(W')|\le \hat D(W')$ (because $|Z|\le 1$) gives
\begin{align*}
    &\bE\!\left[\One\{Y_{n+1}\in \hat C_{rand}^*(X_{n+1})\}-(1-\alpha)\mid W'\right]\\
=&\frac{\bE\!\left[ \psi^*_W(\hat\pi(X_{n+1}), W')\cdot \big(\One\{Y_{n+1}\in \hat C_{rand}^*(X_{n+1})\}-(1-\alpha)\big)\right]}{\bE\!\left[\psi^*_W(\hat \pi(X), W')\right]}+Err(W')\\
=&\frac{-2\bE\left[\sum_{i\in[n+1]}\hat \upsilon_{S^{rand}, i} \psi^*_W(\hat\pi(X_i), W')\right]}{\bE[ \psi^*_W(\hat\pi(X), W')]}+Err(W'),\quad\text{ Lemma \ref{theorem: oracle mixture}}
\end{align*}
where the general bound is in \eqref{eq:ratio-perturb-general}. If $\Delta_{i}(w')\to 0$, then $Err(w')\to 0$ as well. Therefore, \eqref{eq: corollary 2 neighborhood} closely approximates the conditional guarantee with respect to the true latent representation.
\end{proof}

\subsubsection{Coverage Gap Estimation}\label{appsec: coverage gap}
The idea behind estimating the coverage gap $2\hat\lambda \frac{\bE\left[\langle \hat g_{rand, \psi^*}, f_{\psi^*}\rangle_{\psi^*}\right]}{\bE[f(X)]}$ is to leverage results from $n$-sample quantile regression, applied specifically to the calibration data points. As shown in Proposition 2 of \citet{gibbs2023conformal}, the estimation error in their setting (using raw covariates) can be bounded by $O(\sqrt{d\log n/n})$. We adapt their arguments to the latent-space setting, where the feature map satisfies $\|\Phi^*(X)\|_1=1$.  The following result, Proposition \ref{proposition: estimation error in local reweighting}, provides a sharper bound on this estimation error under our setting.

To simplify the notation, let
\begin{align*}
    &\cL_n(g_{\psi^*}, \eta):=\frac{1}{n}\sum_{i\in[n]} \ell_{\alpha}(S_i-\Phi^*(X_i)^\top \eta-g_{\psi^*}(X_i))\\
    &\cL_\infty(g_{\psi^*}, \eta):=\bE\left[ \ell_{\alpha}(S_i-\Phi^*(X_i)^\top \eta-g_{\psi^*}(X_i))\right]
\end{align*}
denote the empirical and population losses with low-rank projection $\hat\pi(\cdot)$.

Recall the closed form solution in \eqref{eq:closed_form} shows the estimated coefficients are functions of $\lambda$. For a fixed $\lambda$, we denote the solution class parameterized by $\lambda$ as
\begin{align}\label{eq: g form}
\cF_{\lambda,\psi^*}=\{g_{\psi^*}: g_{\psi^*}(x)=\frac{1}{\lambda}\sum_{i\in[n+1]}\upsilon_{i}\psi^*(x, X_i)\}
\end{align}
Define the objective
\begin{align*}
    &\tilde \cL_n(g_{\psi^*}, \eta):= \cL_n(g_{\psi^*}, \eta)+\lambda\cdot\|g_{\psi^*}\|_{\psi^*}^2\\
    &\tilde \cL_\infty(g_{\psi^*}, \eta):= \cL_\infty(g_{\psi^*}, \eta)+\lambda\cdot\|g_{\psi^*}\|_{\psi^*}^2
\end{align*}
which is strictly convex in $g_{\psi^*}$ and $\eta$. Let $(\hat g_{n, \psi^*},\mathcal{B}_n), (  g^*_{\infty, \psi^*}, \mathcal{B}^*_\infty)\in \cF_{\lambda, \psi^*}\times 2^{\bR^{K}},$ denote the minimizers of $\min_{(g_{\psi^*}, \eta)\in \cF^*}\tilde \cL_n(g_{\psi^*}, \eta),  \min_{(g_{\psi^*}, \eta)\in \cF^*}\tilde \cL_\infty(g_{\psi^*}, \eta)$, respectively.

Note we write $g(x)=\Phi^*(x)^\top\eta+g_{\psi^*}(x)$ with arbitrary $( g_{\psi^*}, \eta)$. Let
\begin{align*}
    & g^*_{\infty}(x)=\Phi^*(x)^\top \eta^*_{\infty}+g^*_{\infty, \psi^*}(x)\text{ for } \eta^*_{\infty}\in \mathcal{B}^*_{\infty}\\
    &\hat g_{n}(x)=\Phi^*(x)^\top\hat \eta_{n}+\hat g_{n, \psi^*}(x) \text{ for } \hat \eta_{n}\in \mathcal{B}_n
\end{align*}

  Let $e(g, g^*_{\infty})=\mathcal{\tilde L}_{\infty}(g_{\psi^*}, \eta)-\mathcal{\tilde L}_{\infty}( g^*_{\infty, \psi^*}, \mathcal{P}_{\mathcal{B}^*_\infty}\eta)$.
\begin{assumption}[Population strong convexity]\label{ass:strong convexity} Let $d(g_{\psi^*}, \eta):=\inf_{\eta^*_{\infty}\in \mathcal{B}^*_{\infty}}\|\eta-\eta^*_{\infty}\|_2+\|g_{\psi^*}-g^*_{\infty, \psi^*}\|_{\psi^*}$ denote the distance from $(g_{\psi^*}, \eta)$ to the nearest population minimizer. Suppose $S\mid X$ has positive density on $\bR$ and is continuous.
If $d(g_{\psi^*}, \eta)\leq \epsilon_l$ for some constant $\epsilon_l>0$, then there exists some constant $C_l>0$ such that
\begin{align*}
 e(g, g^*_{\infty})\geq C_l d(g_{\psi^*}, \eta)^2
\end{align*}
\end{assumption}
This assumption is mild under some assumptions on the distribution of $S\mid X$ since $\nabla^2_{\eta}\cL_{\infty}=\bE[P_{S\mid X}(0) XX^\top]$ \citep{tan2022communication}.
\begin{assumption}\label{ass: other assumptions}
    There exist some constants $c_f, c_{\pi}, c_{f, S}>0$ such that
    \begin{align*}
    &\sup_{f\in \cF^*}\sqrt{\bE[|f(X_i)|^2]}\leq c_f \bE\left[|f(X)|\right],\quad \bE[|f(X_i)|S_i^2]\leq c_{f, S}\bE\left[|f(X)|\right]\\&\inf_{\eta:\|\eta\|_2=1, \eta\in \bR^{d}} \bE[|\Phi^*(X)^\top\eta|]\geq c_{\pi}, \bE[\|\Phi^*(X_i)\|_2^2]\leq c_0, \\&\sup_{f\in \cF^*}\bE[|f(X_i)|\|\Phi^*(X_i)\|^2_2]\leq c_1\bE[|f(X_i)|].
    \end{align*}

    Furthermore, we assume that $\bE[|S_i^2|]<\infty$ and $\sup_x\psi^*(x, x)=1$.
\end{assumption}
This assumption is stronger than Assumption 1 in \citet{gibbs2023conformal}, which requires the following moment bounds:
\begin{align*}
 \bE[\|\Phi^*(X_i)\|_2^2]\leq c_0d, \sup_{f\in \cF^*}\bE[|f(X_i)|\|\Phi^*(X_i)\|^2_2]\leq c_1\bE[|f(X_i)|]d
\end{align*}
In contrast, we assume a bounded-norm feature map in the latent space, specifically $\|\Phi^*(X)\|_2^2\leq c_0$, which does not grow with feature dimension $d$. In particular, this holds when $\Phi^*(X)$ is an indicator vector over a finite partition, in which case $\|\Phi^*(X)\|_1=1$ as well.

\begin{proposition}\label{proposition: estimation error in local reweighting} Suppose the assumptions \ref{ass:strong convexity}, \ref{ass: other assumptions} are satisfied. Under the settings in Lemma \ref{theorem: oracle mixture}. Define the $n$-sample kernel quantile regression estimate with a fixed $\lambda$
\begin{align*}
    (\hat g_{n,\psi^*}, \hat \eta_{n})=\arg\min_{g_{\psi^*}\in \cF_{\lambda, \psi^*}, \eta\in \bR^{K}}\frac{1}{n}\sum_{i\in[n]} \ell_\alpha(S_i-g_{\psi^*}(X_i)-\Phi^*(X_i)^\top\eta)+ \lambda\|g_{\psi^*}\|_{\psi^*}^2,
\end{align*}
    and for any $\epsilon>0$, let $$\cF^*_{\epsilon}:=\{f(\cdot)=f_{\psi^*}(\cdot)+\Phi^*(\cdot)^\top\eta\in \cF^*:\|f_{\psi^*}\|_{\psi^*}+\|\eta\|_2\leq 1, \bE[|f(X)|]\geq \epsilon\}.$$
    Then,
    \begin{align*}
        \sup_{f\in \cF^*_{\epsilon}}\left|2\lambda\frac{\langle \hat g_{n,\psi^*}, f_{\psi^*}\rangle_{\psi^*}}{\frac{1}{n}\sum_{i=1}^n|f(X_i)|}-2\lambda\frac{\mathbb{E}[\langle \hat g_{S_{n+1},\psi^*}, f_{\psi^*}\rangle_{\psi^*}]}{\mathbb{E}[\frac{1}{n}\sum_{i=1}^n|f(X_i)|]}\right|\leq O_{\bP}\left(\sqrt{\frac{\log n}{n}}\right)
    \end{align*}
\end{proposition}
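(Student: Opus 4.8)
\textbf{Proof strategy for Proposition~\ref{proposition: estimation error in local reweighting}.}
The plan is to adapt the empirical-process argument of Proposition~2 in \cite{gibbs2023conformal} to the RKHS-augmented class $\cF^*$ under the bounded-norm latent feature map, while carefully tracking how the improved moment bounds in Assumption~\ref{ass: other assumptions} remove the $\sqrt{d}$ factor. The quantity of interest is a ratio, so I would first split the deviation into a numerator part and a denominator part: writing $N_n(f)=\langle\hat g_{n,\psi^*},f_{\psi^*}\rangle_{\psi^*}$, $N_\infty(f)=\bE[\langle\hat g_{S_{n+1},\psi^*},f_{\psi^*}\rangle_{\psi^*}]$, $D_n(f)=\frac1n\sum_i|f(X_i)|$ and $D_\infty(f)=\bE[|f(X)|]$, the target is $\sup_{f\in\cF^*_\epsilon}|N_n(f)/D_n(f)-N_\infty(f)/D_\infty(f)|$, and by the standard ratio-perturbation inequality (together with $D_n(f)\ge\epsilon/2$ uniformly with high probability, via a uniform law of large numbers over $\cF^*_\epsilon$ using $\bE[|f(X)|^2]\le c_f\bE[|f(X)|]$ from Assumption~\ref{ass: other assumptions}) this is bounded by a constant times $\sup_f|N_n(f)-N_\infty(f)| + \sup_f|D_n(f)-D_\infty(f)|$, each of which I handle by a separate uniform-convergence argument.

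For the denominator term, $\sup_{f\in\cF^*_\epsilon}|D_n(f)-D_\infty(f)|$, the function class $\{x\mapsto|f(x)|:f\in\cF^*_\epsilon\}$ has bounded envelope controlled by the kernel normalization $\sup_x\psi^*(x,x)=1$ and $\|\Phi^*(X)\|_2^2\le c_0$, so this is an $O_\bP(\sqrt{\log n/n})$ bound by a chaining/Dudley argument where the metric entropy of the unit ball of $\cF_{\psi^*}$ plus the $d$-dimensional linear part contributes at most a $\log n$ factor — the key point being that with $\|\Phi^*(X)\|_1=1$ (indicator features) the linear piece contributes $O(\sqrt{\log n/n})$ rather than $O(\sqrt{d/n})$. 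For the numerator term, I would first decompose $N_n(f)-N_\infty(f)$ by inserting $\langle g^*_{\infty,\psi^*},f_{\psi^*}\rangle_{\psi^*}$: one piece is $\langle\hat g_{n,\psi^*}-g^*_{\infty,\psi^*},f_{\psi^*}\rangle_{\psi^*}$, controlled by Cauchy–Schwarz in $\|\cdot\|_{\psi^*}$ times the estimation error $\|\hat g_{n,\psi^*}-g^*_{\infty,\psi^*}\|_{\psi^*}$, which is $O_\bP(\sqrt{\log n/n})$ by the population strong convexity (Assumption~\ref{ass:strong convexity}) combined with a basic inequality comparing $\tilde\cL_n$ at $\hat g_n$ and at $g^*_\infty$ and bounding the empirical-process fluctuation of the pinball loss (which is $1$-Lipschitz, hence the relevant increments have sub-Gaussian tails with the improved moment constants). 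The remaining piece compares $\langle g^*_{\infty,\psi^*},f_{\psi^*}\rangle_{\psi^*}$ against $\bE[\langle\hat g_{S_{n+1},\psi^*},f_{\psi^*}\rangle_{\psi^*}]$; here one uses that $\hat g_{S_{n+1}}$ is the $(n+1)$-point fit with an exchangeable extra point, so its population limit coincides with $g^*_\infty$ up to an $O(1/n)$ perturbation from the single added point — a standard leave-one-out / stability argument.

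The main obstacle I anticipate is making the uniform-in-$f$ control of $\|\hat g_{n,\psi^*}\|_{\psi^*}$ and the empirical-process term genuinely independent of the ambient dimension $p$ and sharp in $d$: one must verify that the localization radius $\epsilon_l$ in Assumption~\ref{ass:strong convexity} is not entered with a dimension-dependent constant, and that the Rademacher complexity of the loss class composed with $\cF_{\lambda,\psi^*}$ scales like $\sqrt{\operatorname{tr}(\Psi^*)/(\lambda^2 n)}$ with $\operatorname{tr}(\Psi^*)\le n$ by the normalization $\sup_x\psi^*(x,x)=1$, which yields the $\sqrt{\log n/n}$ rate only after the strong-convexity lower bound is used to trade the RKHS norm of the error against the loss gap. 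A secondary subtlety is that $\cF^*_\epsilon$ restricts to $\bE[|f(X)|]\ge\epsilon$ but the empirical process is indexed by $f$ with $\|f_{\psi^*}\|_{\psi^*}+\|\eta\|_2\le 1$, so the ratio's denominator is bounded below only after the uniform LLN for $D_n$; I would sequence the argument so that this lower bound is established first and then conditioned on throughout.
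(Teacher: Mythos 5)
Your proposal is correct and follows essentially the same route the paper takes: the paper's proof (which is very terse) bounds the Rademacher complexity of $\{f_{\psi^*}+\Phi^{*\top}\eta:\|f_{\psi^*}\|_{\psi^*}+\|\eta\|_2\le 1\}$ by $O(\sqrt{1/n})$ using the normalization $\sup_x\psi^*(x,x)=1$ and the boundedness $\|\Phi^*(X)\|_2^2\le c_0$, then invokes the four intermediate claims from Proposition~2 of \cite{gibbs2023conformal} — a localized uniform deviation of the pinball loss used with Assumption~\ref{ass:strong convexity} to control $d(\hat g_{n,\psi^*},\hat\eta_n)$, a uniform LLN for $D_n(f)=\frac1n\sum_i|f(X_i)|$, boundedness of the population inner product, and the numerator comparison $\sup_{\|f_{\psi^*}\|_{\psi^*}\le1}\lambda|\langle\hat g_{n,\psi^*},f_{\psi^*}\rangle_{\psi^*}-\bE[\langle\hat g_{S_{n+1},\psi^*},f_{\psi^*}\rangle_{\psi^*}]|=O_\bP(\sqrt{\log n/n})$ — and combines them via the ratio-perturbation inequality. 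Your decomposition into numerator/denominator, the insertion of $g^*_{\infty,\psi^*}$ and Cauchy–Schwarz, the use of strong convexity with the empirical-process fluctuation bound, and the observation that the bounded latent feature map removes the $\sqrt d$ factor, all track those steps; the only cosmetic difference is that you describe the comparison of $\bE[\langle\hat g_{S_{n+1},\psi^*},f_{\psi^*}\rangle]$ to $\langle g^*_{\infty,\psi^*},f_{\psi^*}\rangle$ as a leave-one-out stability argument, whereas the paper (following \cite{gibbs2023conformal}) simply applies the same population-convergence bound to the $(n+1)$-sample fit, but these are equivalent at the claimed $O_\bP(\sqrt{\log n/n})$ rate.
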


\begin{proof}
By the results in Section 4.1.2 in \citet{boucheron2005theory} and $\|\Phi^*(X)\|_2^2\leq c_0$, we know that $\{f_{\psi^*}(\cdot)+\Phi^*(\cdot)\eta:\|f_{\psi^*}\|_{\psi^*}+\|\eta\|_2\leq 1\}$ has Rademacher complexity at most $O(\sqrt{1/n})$.
    Following the proof for Proposition 2 in \citet{gibbs2023conformal}, we can show
    \begin{enumerate}
    \item Let $\mathcal{E}_{2}=\{\|\eta-\mathcal{P}_{\mathcal{B}^*_{\infty}}\eta\|_2\leq \epsilon_1, \|g_{\psi^*}-g^*_{\infty,\psi^*}\|_{\psi^*}\leq \epsilon_2: \epsilon_1, \epsilon_2>0\}$. We have \begin{gather*}
        \bE\left\{\sup_{\eta, g_{\psi^*}\in \mathcal{E}_2}|\cL_{n} (g_{\psi^*}, \eta)-\cL_{n} (g^*_{\infty,\psi^*}, \mathcal{P}_{\mathcal{B}^*_{\infty}}\eta)-\left(\cL_{\infty} (g_{\psi^*}, \eta)-\cL_{\infty} (g^*_{\infty,\psi^*},\mathcal{P}_{\mathcal{B}^*_{\infty}}\eta)\right)|\right\}\\
        \leq O((\epsilon_1+\epsilon_2)\sqrt{1/n})
    \end{gather*}
        \item $\sup_{f\in \cF^*_\epsilon}\left|\frac{1}{n}\sum_{i\in[n]}f(X_i)-\bE[\frac{1}{n}\sum_{i\in[n]}|f(X_i)|]\right|=O_{\bP}(\sqrt{1/n})$
        \item $\sup_{f_{\psi^*}\in \cF_{\lambda, \psi^*}}\lambda\left|\bE[\langle \hat g_{S_{n+1}, \psi^*}, f_{\psi^*}\rangle_{\psi^*}]\right|=O(1)$
        \item $\sup_{f_{\psi^*}\in \cF_{\lambda, \psi^*}:\|f_{\psi^*}\|_{\psi^*}\leq 1}\lambda\left|\langle \hat g_{n, \psi^*}, f_{\psi^*}\rangle_{\psi^*}-\bE[\langle \hat g_{S_{n+1}, \psi^*}, f_{\psi^*}\rangle_{\psi^*}]\right|\leq O_{\bP}(\sqrt{\frac{\log (n)}{n}})$
    \end{enumerate}
    Using the claims above, we thus get the desired results through some calculations.
    \end{proof}
{
    \begin{remark}
        Under the setting in Theorem \ref{cor: local coverage}, the tilt function $f^{w'}(x)=\psi^*_W(\hat\pi(x), w')$ emphasizes coverage in a neighborhood around the fixed point $w'$ in the latent space. As shown in Proposition \ref{proposition: estimation error in local reweighting}, this coverage gap $\frac{2\bE[\sum_{i\in[n+1]}\hat \upsilon_{S^{rand}, i} \psi^*_W(W', \hat\pi(X_i))]}{\mathbb{E}[\psi^*_W( W', \hat\pi(X))]}$ admits a data-driven approximation $\frac{2\sum_{i\in[n]}\hat \upsilon_{n, i} \psi^*_W(w', \hat\pi(X_i))}{\frac{1}{n}\sum_{i\in[n]}\psi^*_W( w', \hat\pi(X_i))}$ where $W'=w'$ is fixed and $\{\hat \upsilon_{n,i}\}_{i\in[n]}$ are the empirical coefficients from  $\hat g_{n, \psi^*}$.
    \end{remark}}

\begin{lemma}[Regularization-controlled coverage gap]
\label{lem:gap_order_sqrt_lambda}
Consider the setting of Theorem \ref{cor: local coverage} in \emph{SpeedCP} with $\Phi^*(\cdot)\equiv 0$. Let $\mathcal W$ be the latent space.
Fix $w'\in\mathcal W$ and define the (population) kernel density
\[
p_\gamma(w') \;:=\; \mathbb E\big[\psi_W^*(w',\hat\pi(X))\big].
\]
Let $\widehat C_{\mathrm{rand}}^*$ be the randomized prediction set produced by SpeedCP
and let $\hat\upsilon_i$ denote the dual coefficients returned by the RKHS quantile regression so that $\hat g(\cdot)=\frac{1}{\lambda}\sum_{i=1}^{n+1}\hat\upsilon_i \psi^*(\cdot,X_i)$.

Assume $\sup_{w\in\mathcal W}\psi_W^*(w,w) =1$, $\mathbb E[|S_i|] =O(1)$ and $p_\gamma(w')>0$.
Then the coverage gap in Theorem \ref{cor: local coverage}, $
\mathrm{Gap}_{n+1}(w')
\;:=\;
\frac{
2\mathbb E\!\left[\sum_{i=1}^{n+1}\hat\upsilon_i\,\psi_W^*(w',\hat\pi(X_i))\right]
}{
\mathbb E\!\left[\psi_W^*(w',\hat\pi(X))\right]
},$
satisfies the bound
\begin{align*}
|\mathrm{Gap}_{n+1}(w')|
\;\le\;
\frac{2\sqrt{2}}{p_\gamma(w')}\,\sqrt{\lambda}.
\end{align*}
Consequently, on any subset $\mathcal R\subset\mathcal W$ with $\inf_{w'\in\mathcal R}p_\gamma(w')\ge c>0$,
\[
\sup_{w'\in\mathcal R}|\mathrm{Gap}_{n+1}(w')|
\;\le\;
\frac{2\sqrt{2}}{c}\,\sqrt{\lambda}
\;=\;O(\sqrt{\lambda}).
\]
\end{lemma}

\begin{proof}
Fix $w'\in\mathcal W$ and define the function $f_{w'}(\cdot)$ in the RKHS $\mathcal F_W$ over the latent space
associated with $\psi_W^*$. As $
\|f_{w'}\|_{\psi_W^*}^2 \leq \sup_{w'\in\mathcal W}\psi_W^*(w',w') =1,$
we have $
\|f_{w'}\|_{\psi_W^*} =1$.
Let the fitted RKHS component on the latent space be $
\hat g_W(w)\;:=\;\frac{1}{\lambda}\sum_{i=1}^{n+1}\hat\upsilon_i\,\psi_W^*(w,\hat\pi(X_i))$. We have
\[
\sum_{i=1}^{n+1}\hat\upsilon_i\,\psi_W^*(w',\hat\pi(X_i))
\;=\;
\lambda\,\hat g_W(w')
\;=\;
\lambda\,\langle \hat g_W,\,f_{w'}\rangle_{\psi_W^*}.
\]
Applying Cauchy--Schwarz yields
\begin{equation*}
\left|\sum_{i=1}^{n+1}\hat\upsilon_i\,\psi_W^*(w',\hat\pi(X_i))\right|
\;\le\;
\lambda\,\|\hat g_W\|_{\psi_W^*}\,\|f_{w'}\|_{\psi_W^*}
\;\le\;
\lambda\,\|\hat g_W\|_{\psi_W^*}.
\end{equation*}

Next, $\hat g_W$ minimizes the penalized pinball objective used in SpeedCP, so comparing its objective value
to $g\equiv 0$ implies that
\begin{equation*}
\frac{\lambda}{2}\|\hat g_W\|_{\psi_W^*}^2
\;\le\;
\frac{1}{n+1}\left(\sum_{i=1}^n |S_i|\right).
\end{equation*}
Taking expectations and using $\mathbb{E} [|S_i|]=O(1)$,
\begin{align*}
\mathbb E\|\hat g_W\|_{\psi_W^*}^2
\;\le\;
\frac{2}{\lambda},\quad\text{ and   }
\mathbb E\|\hat g_W\|_{\psi_W^*}
\;\le\;
\sqrt{\mathbb E\|\hat g_W\|_{\psi_W^*}^2}
\;\le\;
\sqrt{\frac{2}{\lambda}}.
\end{align*}

Combining this $\|f_{w'}\|_{\psi_W^*} =1$ and taking expectations gives
\begin{align*}
\left|\mathbb E\!\left[\sum_{i=1}^{n+1}\hat\upsilon_i\,\psi_W^*(w',\hat\pi(X_i))\right]\right|
\;\le\;
\mathbb E\left|\sum_{i=1}^{n+1}\hat\upsilon_i\,\psi_W^*(w',\hat\pi(X_i))\right|
\;\le\;
\lambda\,\mathbb E\|\hat g_W\|_{\psi_W^*}
\;\le\;
\sqrt{2}\,\sqrt{\lambda}.
\end{align*}
Finally multiply by 2 and divide by $p_\gamma(w')=\mathbb E[\psi_W^*(w',\hat\pi(X))]>0$ to conclude.
\end{proof}

\paragraph{Interpretation and Dependence on $\lambda$.}
Lemma~\ref{lem:gap_order_sqrt_lambda} shows a worst-case regularization control:
for fixed localization (fixed $\gamma$), the gap decreases at least as $O(\sqrt{\lambda})$
on any latent region where $p_\gamma(w')$ is bounded below.
This is a purely regularization-driven bound and does not require smoothness assumptions on the conditional score law.
The denominator $p_\gamma(w')=\mathbb E[\psi_W^*(w',\hat\pi(X))]$ is a kernel-smoothed density.
If $w'$ lies in a low-density region of the latent space, $p_\gamma(w')$ can be arbitrarily small, making any ratio bound
necessarily weak. Thus one typically states uniform bounds on a subset $\mathcal R$ where $\inf_{w'\in\mathcal R}p_\gamma(w')\ge c>0$,
which corresponds to restricting attention to dense latent regions.

Finally, letting $\lambda=\lambda_n\to 0$ as $n\to\infty$ makes the regularization-induced gap vanish on such dense regions,
and therefore yields asymptotic conditional coverage {under localization}.
However, taking $\lambda\to 0$ without additional structure can reintroduce the classical impossibility phenomenon:
in a fully distribution-free setting, enforcing exact conditional coverage may force prediction sets to become overly conservative
and uninformative. In contrast, the asymptotic conditional coverage guarantees in localized conformal prediction as in \citet{guan2023localized} are obtained under explicit distributional regularity conditions together with a shrinking localization radius
$h_n\to 0$ (equivalently $\gamma_n\to\infty$), and rates such as $n h_n^\beta/\log n\to\infty$ to ensure sufficient local sample mass;
these assumptions guarantee that both localization bias and stochastic error vanish asymptotically.

\section{Additional Experiments}
\subsection{Synthetic Experiments}\label{appendix:sim}

In this section, we provide additional details on the synthetic experiments and provide further experiment results.

In all of our experiments, we generate $X_i\in \mathbb{R}^{p}$ from a mixture of $K=3$ latent distributions. Specifically, we first generate $X_i$ from a multinomial distribution, $mX_i \sim \mathrm{Multinomial}(N, \sum_{k\in [K]}W_i(k)\zeta_k)$ with $W_i=w_i$ fixed and total count $N=1000$. For each sample in the training and calibration sets, we generate $W_i\sim \mathrm{Dir}([2,1,1])$ and randomly shuffle the elements to create a distribution that is more symmetric across vertices. Here, the density is higher in the central part of the simplex. For test samples, we generate from the same distribution but do not shuffle, to create a high concentration near one vertex of the 2-dimensional simplex (Figure~\ref{fig:density}).

We sample the latent component $\zeta_k \in \mathbb{R}^{p}$ from a uniform distribution and normalize it so that $\sum_{j\in[p]}\zeta_{k}(j)=1 $ for each $k \in [K]$.
We estimate $\pi(X_i)=\mathbb{E}[W_i\mid X_i]$ with pLSI (Section~\ref{sec: topic modeling}) and use $\hat{\pi}(X_i)$ as inputs of SpeedCP, CondCP, PCP, and RLCP. For SpeedCP and CondCP, we choose $\Phi^*(X_i)=(1, \One\{\arg\max_{k}\hat \pi_k(X_i)=1 \},\dots,\One\{\arg\max_{k}\hat \pi_k(X_i)=K-1 \})^\top$ using the estimated latent embeddings $\hat{\pi}(X_i)$. The response is generated as $Y_i\sim N(\sin(2\pi\cdot W_i(1))+ (W_i(2))^2+W_i^\top\eta, 0.1^2)$, where $\eta_j\sim \mathrm{Unif}(1,10)$ for $j=1,2,3$ and is normalized.

In this setting, we aim to see whether each conformal method can guarantee 0.9 coverage uniformly across the simplex, especially in boundaries (areas close to one vertex). We report our results based on 50 independent runs of data generation. At each run, we split the data into 600 training points, 300 calibration points, and 100 test points.

{
In Table~\ref{tab:predictors}, we report the computation time for two different predictors. We can see that SpeedCP is faster compared to CondCP and PCP, which are the state-of-the-art conformal prediction methods that account for the local or latent data structure. RLCP is fast but fails to attain target miscoverage level as discussed in Section~\ref{sec: experiments} of the main manuscript. In Figure~\ref{fig:nn_simplex}, we show the coverage conditional on the latent space of $\hat{\pi}(X)$ when the predictor is a neural network. The same plot for the linear regression predictor is shown in Figure~\ref{fig:ternary}. In both plots, we observe that SpeedCP achieves 0.9 across the simplex most uniformly.

\begin{figure}[ht]
    \centering
   \includegraphics[width=0.7\textwidth]{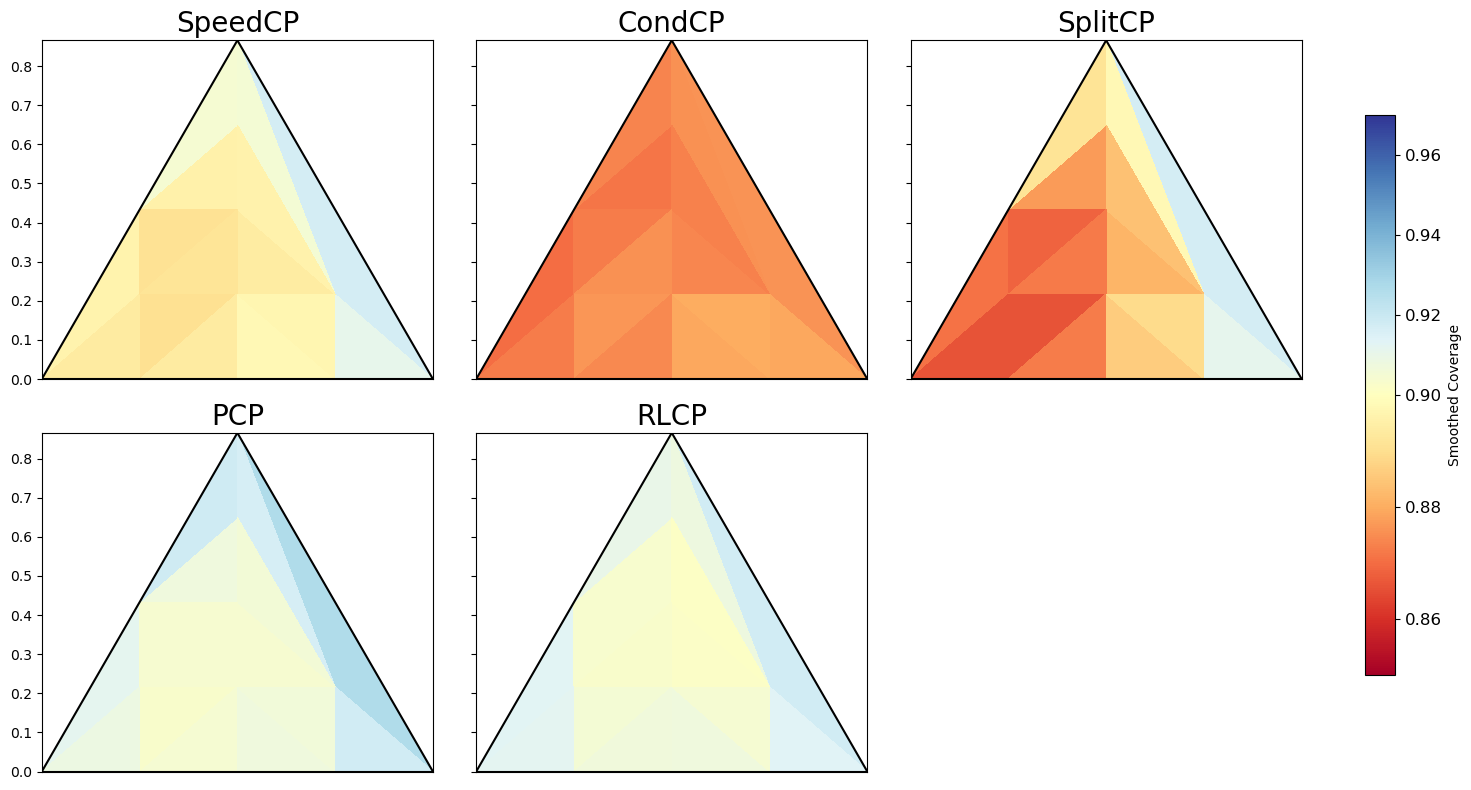}
    \caption{Mean coverage on fine-gridded partitions on the latent space (a 2D simplex) when $\hat{\mu}=\mathrm{NN}$. The results are aggregated over 50 random generations. SpeedCP shows the most uniform 0.9 (pale yellow) coverage across the simplex.}
    \label{fig:nn_simplex}
\end{figure}
}
\begin{figure}[ht]
    \centering
   \includegraphics[width=0.7\textwidth]{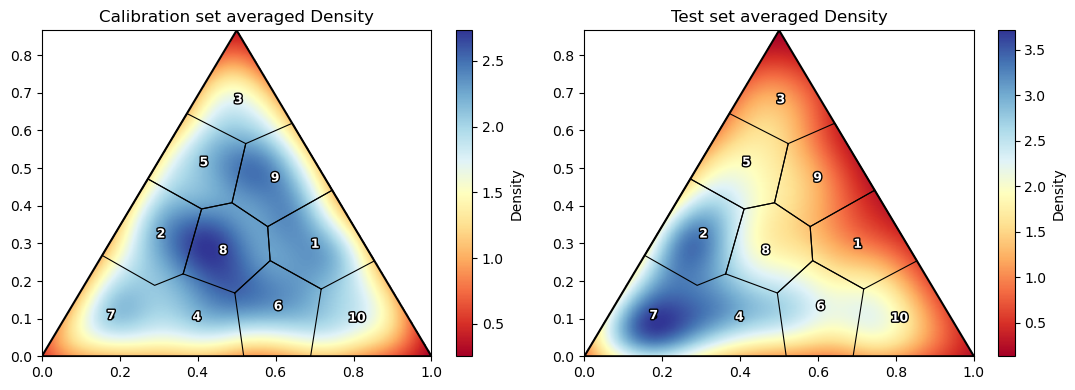}
    \caption{Averaged calibration and test density over 50 random generations of data. We use kmeans followed by Voronoi tessellation to partition the latent simplex into 10 bins.}
    \label{fig:density}
\end{figure}

{
\subsubsection{Coverage Across Different Sample Size \texorpdfstring{$n$}{n}}

We experiment with different sample sizes $n$ in Table~\ref{tab:speedcp_n}. As the calibration set grows larger, we generally expect the coverage guarantee to remain the same while the prediction set size decreases because the estimation error and the uncertainty of the kernel quantile estimator diminish. Interestingly, we observe a slight increase in prediction size at $n=2000$ followed by a decrease at $n=5000$. This fluctuation is likely due to finite-sample variability in both the estimated latent embeddings and the cross-validated hyperparameters. The computation time increases with $n$, which is consistent with the computational complexity of SpeedCP, which is approximately $O(n^3)$. Overall, for moderately large $n$, the method still remains computationally feasible.

\begin{table}[htbp]
  \centering
  \caption{Mean prediction set size and computation time for SpeedCP (linear regression predictor).}
  \label{tab:speedcp_n}
  \vspace{2mm}
  \resizebox{0.5\textwidth}{!}{
  \begin{tabular}{c c c}
    \toprule
    $n$
      & \textbf{Prediction set size}
      & \textbf{Computation time (seconds)} \\
    \midrule
    1000
      & 1.376 \,$\pm$\, 0.07
      & 12.363 \,$\pm$\, 4.10 \\
    2000
      & 2.053 \,$\pm$\, 0.11
      & 23.956 \,$\pm$\, 7.03 \\
    5000
      & 1.030 \,$\pm$\, 0.02
      & 191.255 \,$\pm$\, 29.36 \\
    \bottomrule
  \end{tabular}
  }
\end{table}

}

\subsubsection{Choices of Different \texorpdfstring{$\Phi^*(X)$}{PhiX}} We also discuss how conditional coverage changes with different choices of $\Phi^*(X)$ of our function class $\cF^*$ \eqref{eq:fstar}. When running a RKHS-based quantile regression on the scores, $\Phi^*(X)^{\top}\eta$ acts as the linear component with the design matrix $\Phi^*(X)$ and parameters $\eta$. $\Phi^*(\cdot)$ allows flexible modeling of different types of conditional coverage. For example, in this synthetic experiment, we can consider four different $\Phi^*(X)$ based on the estimated latent embedding $\hat{\pi}(X)$,

\begin{enumerate}\label{eq:phi}
    \item Taking $\Phi^*({X})=1$ yields the marginal coverage.
     \item Taking $\Phi^*({X})=\hat{\pi}(X)$ yields mixture-conditional coverage, where we guarantee coverage linearly reweighted with $\hat{\pi}(X)$.
     \item In our experiments, taking {$\Phi^*({X})=(\One, \One\{\hat T(X)=1\},\dots,\One\{\hat T(X)=K-1\})^\top $; equivalently, $\Phi^*({X})=( \One\{\hat T(X)=1\},\dots,\One\{\hat T(X)=K\})^\top $} where $\hat T(X)=\arg\max_{k\in [K]}\hat\pi_k({X})$ yields topic-conditional coverage, where the topic is defined as the latent distribution with the highest mixture proportion weight.
 \end{enumerate}

Through our experiments, we observed that in high-dimensional settings, coverage using SpeedCP is primarily affected by the RKHS component, $f_{\Psi^*}$ rather than the linear term. If more prior information is available on the conditional distribution, and the goal is to achieve more precise conditional coverage at level $1 - \alpha$, one may instead calibrate scores using a function class restricted to the linear term, as in \citet{gibbs2023conformal}. However, the inclusion of the RKHS component can lead to smaller prediction sets even without those additional prior structures. Further investigation is needed to determine whether choosing $\Phi^*(X)$ as the indicators of topics, or the latent embeddings, improves performance under varying covariate dimensionality $p$ or the signal-to-noise ratio in $X$.

\subsubsection{Sensitivity to Kernel Family}
\label{app:kernel-family-sensitivity}

The main experiments use an RBF kernel on the low-rank latent representation. We additionally evaluate whether the empirical performance of SpeedCP is sensitive to this kernel choice. This question is practically relevant because the kernel controls how the RKHS quantile regression borrows information across nearby calibration points in the latent space. The RBF kernel is very smooth and decays rapidly with distance, whereas Mat\'ern kernels allow finite smoothness controlled by a smoothness parameter, and inverse multiquadric (IMQ) kernels have heavier polynomial tails \citep{rasmussen2006gaussian,stein1999interpolation,gorham2017measuring}.

We use the same latent representation as in Appendix~\ref{sec: topic modeling}. Let
\[
    \widehat{\theta}_i = \operatorname{clr}(\widehat{\pi}(X_i)),
    \qquad
    \widehat{\theta}_{ik}
    =
    \log \widehat{\pi}_k(X_i)
    -
    \frac{1}{K}\sum_{\ell=1}^K \log \widehat{\pi}_\ell(X_i),
\]
where \(\widehat{\pi}(X_i)\in\Delta^{K-1}\) is the estimated mixture-proportion embedding and \(\operatorname{clr}(\cdot)\) is the centered log-ratio transformation for compositional data \citep{aitchison1982statistical}. Define the latent distance
\[
    r_{ij}
    :=
    \|\widehat{\theta}_i-\widehat{\theta}_j\|_2,
    \qquad
    u_{ij}
    :=
    \sqrt{\gamma}\, r_{ij},
\]
where \(\gamma>0\) is the inverse bandwidth parameter. We compare the following normalized radial kernels, each satisfying \(\psi^\ast(X_i,X_i)=1\):

\[
\begin{aligned}
\text{RBF:}\qquad
\psi^\ast_{\mathrm{RBF},\gamma}(X_i,X_j)
&=
\exp\{-u_{ij}^2\}
=
\exp\{-\gamma r_{ij}^2\},
\\[4pt]
\text{Mat\'ern:}\qquad
\psi^\ast_{\mathrm{Mat},\gamma,\nu_{\rm Mat}}(X_i,X_j)
&=
\frac{2^{1-\nu_{\rm Mat}}}{\Gamma(\nu_{\rm Mat})}
\left(\sqrt{2\nu_{\rm Mat}}\,u_{ij}\right)^{\nu_{\rm Mat}}
K_{\nu_{\rm Mat}}\!\left(\sqrt{2\nu_{\rm Mat}}\,u_{ij}\right),
\\[4pt]
\text{IMQ:}\qquad
\psi^\ast_{\mathrm{IMQ},\gamma,\beta_{\rm IMQ}}(X_i,X_j)
&=
\left(1+u_{ij}^2\right)^{-\beta_{\rm IMQ}}
=
\left(1+\gamma r_{ij}^2\right)^{-\beta_{\rm IMQ}} .
\end{aligned}
\]
Here \(K_\nu(\cdot)\) denotes the modified Bessel function of the second kind, \(\nu_{\rm Mat}>0\) is the Mat\'ern smoothness parameter, and \(\beta_{\rm IMQ}>0\) controls the polynomial tail of the IMQ kernel. The Mat\'ern expression is interpreted by continuity at \(u_{ij}=0\), where it equals one. The IMQ form above is equivalent to the common parameterization \((c^2+\|x-y\|_2^2)^\beta\) with \(\beta<0\), after normalization and reparameterization of the bandwidth.

We repeat the shifted latent-mixture simulation from Appendix~\ref{appendix:sim} using the three kernel families above. All other components are held fixed: the pLSI embedding, the centered log-ratio transformation, the topic-based linear feature map \(\Phi^\ast\), the \(\lambda\)-path and \(S\)-path algorithms, and the same hyperparameter selection procedure. For each kernel family, \(\gamma\) and \(\lambda\) are selected using the same tuning strategy as in the main experiments, while the shape parameters \(\nu_{\rm Mat}\) and \(\beta_{\rm IMQ}\) are fixed a priori and are not tuned. Since different kernel families map the same numerical value of \(\gamma\) to different effective neighborhoods, the selected \(\gamma\)'s should not be interpreted as directly comparable across families.

Table~\ref{tab:kernel-family-sensitivity} reports marginal and topic-wise coverage, score cutoffs, selected hyperparameters, and runtime. Across the three kernel families, SpeedCP maintains coverage close to the nominal target \(0.9\), both marginally and within each latent topic. The runtime is also nearly unchanged across kernels, indicating that the computational behavior is mainly governed by the path-following algorithm and the size of the elbow set, rather than by the analytic form of the radial kernel.

The main difference appears in the score cutoffs. RBF gives the smallest and most stable cutoffs. IMQ is slightly more conservative on marginal cutoff, consistent with its heavier tail and more global smoothing behavior. Mat\'ern achieves similar coverage but has larger cutoff variability, especially in Topic~2, where the cutoff is \(1.343\pm 0.996\). This suggests that, in the shifted latent-mixture experiment, validity and computation are relatively insensitive to the kernel family, whereas efficiency is affected by how locally the kernel smooths the latent space. Based on these results, we use the RBF kernel as the default choice in the main experiments.

\begin{table}[ht]
\centering
\begin{tabular}{lccc}
\toprule
 & SpeedCP-RBF & SpeedCP-Mat\'ern & SpeedCP-IMQ \\
\midrule
Marginal coverage & 0.893 & 0.884 & 0.892 \\
Topic 0 coverage  & 0.891 & 0.882 & 0.896 \\
Topic 1 coverage  & 0.901 & 0.900 & 0.903 \\
Topic 2 coverage  & 0.882 & 0.885 & 0.889 \\
\midrule
Marginal cutoff & $1.164 \pm 0.141$ & $1.225 \pm 0.406$ & $1.203 \pm 0.167$ \\
Topic 0 cutoff  & $1.099 \pm 0.150$ & $1.106 \pm 0.367$ & $1.144 \pm 0.176$ \\
Topic 1 cutoff  & $1.215 \pm 0.169$ & $1.268 \pm 0.443$ & $1.240 \pm 0.215$ \\
Topic 2 cutoff  & $1.190 \pm 0.201$ & $1.343 \pm 0.996$ & $1.238 \pm 0.331$ \\
\midrule
Selected $\gamma$ & $8.046 \pm 11.006$ & $8.612 \pm 12.433$ & $5.679 \pm 9.568$ \\
Selected $\lambda$ & $166.457 \pm 594.037$ & $149.159 \pm 584.517$ & $194.275 \pm 760.146$ \\
Runtime (s) & $61.060 \pm 24.150$ & $63.550 \pm 26.930$ & $60.960 \pm 27.600$ \\
\bottomrule
\end{tabular}
\caption{Kernel-family sensitivity of SpeedCP on the shifted latent-mixture experiment. Coverage entries are empirical averages over repeated runs. Cutoffs, selected hyperparameters, and runtime are reported as mean $\pm$ standard deviation.}
\label{tab:kernel-family-sensitivity}
\end{table}

{
\subsubsection{Using Calibration Set for Tuning \texorpdfstring{$(\gamma, \lambda)$}{gammalammbda}}\label{appendix:calib}
In our experiments, we use the calibration set for selection of $(\gamma, \lambda)$ instead of setting aside a separate validation set for efficiency. We agree that, in the current implementation, \((\gamma, \lambda)\) is selected by cross-validation using the calibration data, so the chosen pair is technically data-dependent. Therefore, this dependence can introduce some bias and small finite-sample distortion of coverage. However, the selected pair $(\lambda, \gamma)$ converges in probability to a deterministic value when the calibration set size is large. Recent work on adaptive coverage policies shows that it is empirically valid to use the calibration set itself to select regularization parameters via leave-one-out or cross-validation (Theorem 2.6 in \citep{gauthier2025adaptive}). \citet{gibbs2023conformal} also show that the selection of $\lambda$ using the calibration set does not affect coverage significantly.

To assess whether the calibration set can be reliably used for hyperparameter tuning, we compare it against a split strategy in which half of the calibration set is used for tuning and the remaining half for calibration, avoiding the potential issues discussed earlier. As shown in Figure~\ref{fig:split}, the two approaches have similar conditional coverage and prediction set sizes, with the full-calibration procedure exhibiting only a slight overestimation of coverage. In Figure~\ref{fig:grid}, we observe that the chosen $(\gamma, \lambda)$ pairs from the two approaches are also similar.

\subsubsection{Uniform Coverage on Any \texorpdfstring{$(\gamma, \lambda)$}{gammalambda}}\label{appendix: uniform_grid}
We assess whether the uniform coverage guarantee assumed in Section~\ref{sec: theory} holds across all choices of $(\gamma, \lambda)$. We obtain the approximate joint hyperparameter space by gathering results from running the $\lambda$-path on each $\gamma$ in the $\gamma$ grid. We then select 25 pairs and run the $S$-path to obtain coverage and prediction set size. We observe in Figure~\ref{fig:grid} that coverage holds uniformly at 0.9 across the pairs, affirming that coverage is not affected by the choice of $(\gamma, \lambda)$. However, we observe that the prediction set size differs by the choice of $(\gamma, \lambda)$. Our cross-validation approach chooses $(\gamma, \lambda)$ in the region where prediction set size is small.

\begin{figure}[ht]
    \centering
   \includegraphics[width=0.8\textwidth]{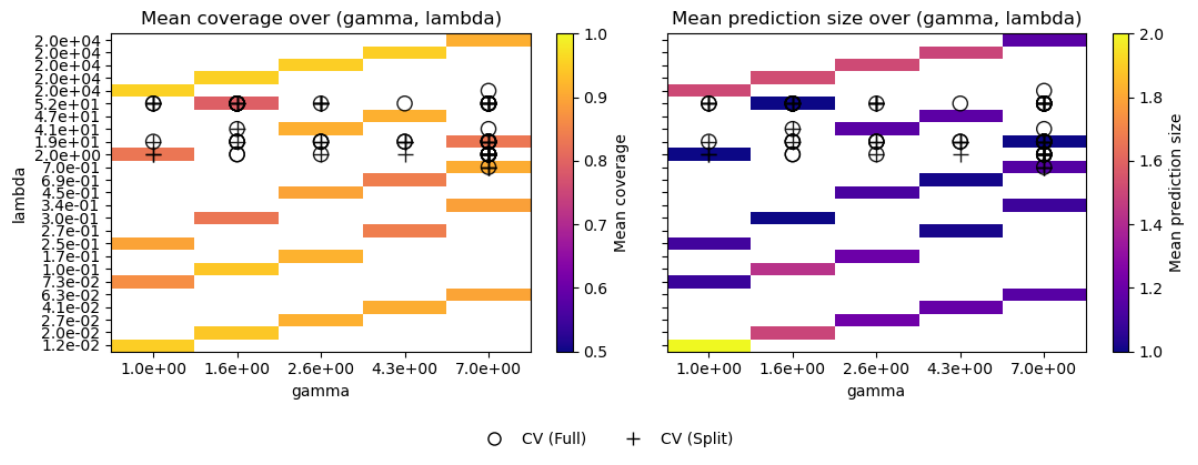}
    \caption{Marginal coverage and prediction set size of 25 pairs of $(\gamma, \lambda)$ on the joint hyperparameter space. We also show selected $(\gamma, \lambda)$'s using $k$-fold cross validation on calibration set (full) or validation set (split).}
    \label{fig:grid}
\end{figure}

\subsubsection{Effect of Randomization on \texorpdfstring{$S$}{S}}
As described in Section~\ref{sec: method}, we use the randomized cutoff $ S^{rand}(X_{n+1}) = \sup\{S\mid \hat \upsilon_{S,n+1}< U\}$, where $U\sim \mathrm{Unif}(-\alpha,1-\alpha)$, to construct prediction sets. In practice, this randomization introduces little variability. This is because the $S$-path starts with a small value, and along the path, the $S$ values are generally smaller than $\hat{g}_S$, leading to $\hat \upsilon_{S,n+1}=-\alpha$. In Figure~\ref{fig:rand}, we observe that the standard deviation of $S^{rand}_i$'s for each run (seed) is small, confirming this behavior.
}

\begin{figure}[ht]
    \centering
   \includegraphics[width=0.5\textwidth]{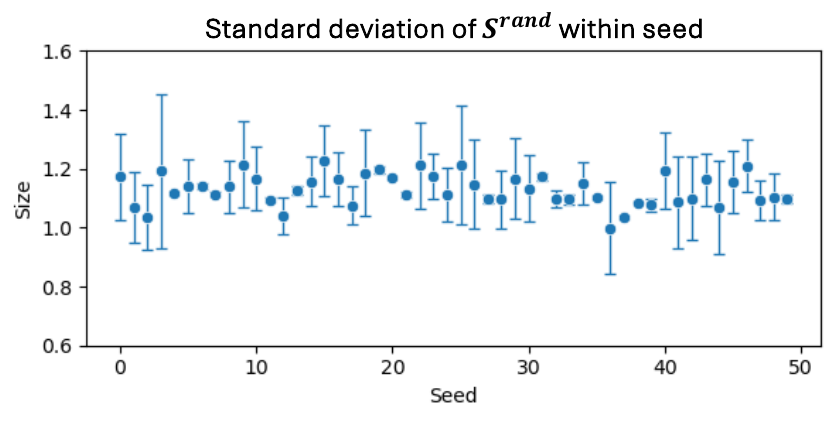}
    \caption{Mean and standard deviation of $S^{rand}_i$'s for each seed.}
    \label{fig:rand}
\end{figure}

\subsubsection{Empirical Behavior of the Elbow Set Size}
\label{app:elbow-set-size}

In particular, since the computational complexity of the $l$th step scales as $O((|E^l|+d)^3)$, the worst-case complexity can approach $O((n+d)^3)$ if $|E^l|$ becomes comparable to $n$. To better understand whether this occurs in practice, we conducted additional synthetic experiments in the latent mixture setting and empirically measured the elbow set size across different sample sizes, covariate distributions, and label noise levels.

We varied three parameters in the synthetic experiments: sample sizes $n \in \{500, 2000, 5000\}$, the train/test distribution of $\pi(X)$ through the Dirichlet parameters $\alpha \in \{[1,1,1], [2,1,1], [5,1,1]\}$, where $\alpha=[5,1,1]$ corresponds to a stronger covariate shift regime in which most test samples concentrate on the first latent cluster, and the clusterwise label noise levels, considering $\sigma=[0.05,0.05,0.05]$ for low noise, $\sigma=[0.1,0.1,0.3]$ for moderate noise, and $\sigma=[0.5,0.5,1.0]$ for high noise. All results are aggregated over $10$ independent data generations. We report both the elbow set size $|E^l|$ and the number of breakpoints along the $\lambda$- and $S$-paths.

\begin{table*}[!ht]
\vspace{-1mm}
\caption{Elbow set size and number of breakpoints across sample sizes.}
\label{tab:elbow-size-n}
\centering
\resizebox{0.55\textwidth}{!}{
\begin{tabular}{l c c c}
\toprule
                         & $n=500$         & $n=2000$          & $n=5000$          \\
\midrule
$|E^l|$ for $\lambda$-path & $4.43 \pm 1.32$ & $10.26 \pm 2.15$  & $10.82 \pm 3.08$  \\
$|E^l|$ for $S$-path       & $4.64 \pm 2.14$ & $5.62 \pm 1.46$   & $7.03 \pm 3.22$   \\
$\lambda$ breakpoints              & $23.4 \pm 9.04$ & $130.2 \pm 44.18$ & $195.0 \pm 15.0$  \\
$S$ breakpoints                    & $6.19 \pm 3.92$ & $6.62 \pm 3.66$   & $12.77 \pm 11.52$ \\
\bottomrule
\end{tabular}
}
\vspace{-1mm}
\end{table*}

\begin{table*}[!ht]
\vspace{-1mm}
\caption{Elbow set size and number of breakpoints across distributions of $\pi(X)$, with $n=2000$.}
\label{tab:elbow-size-distribution}
\centering
\resizebox{0.55\textwidth}{!}{
\begin{tabular}{l c c c}
\toprule
                         & $\alpha=[1,1,1]$         & $\alpha=[2,1,1]$          & $\alpha=[5,1,1]$          \\
\midrule
$|E^l|$ for $\lambda$-path & $7.35 \pm 0.71$    & $10.26 \pm 2.15$  & $9.05 \pm 3.55$  \\
$|E^l|$ for $S$-path       & $6.81 \pm 2.51$    & $5.62 \pm 1.46$   & $5.35 \pm 1.80$  \\
$\lambda$ breakpoints              & $112.7 \pm 31.47$  & $130.2 \pm 44.18$ & $115.6 \pm 21.44$ \\
$S$ breakpoints                    & $9.77 \pm 11.61$   & $6.62 \pm 3.66$   & $8.33 \pm 5.13$  \\
\bottomrule
\end{tabular}
}
\vspace{-1mm}
\end{table*}

\begin{table*}[!ht]
\vspace{-1mm}
\caption{Elbow set size and number of breakpoints across label noise levels, with $n=2000$.}
\label{tab:elbow-size-noise}
\centering
\resizebox{0.6\textwidth}{!}{
\begin{tabular}{l c c c}
\toprule
                         & $\sigma=[0.05,0.05,0.05]$         & $\sigma=[0.1,0.1,0.3]$          & $\sigma=[0.5,0.5,1.0]$          \\
\midrule
$|E^l|$ for $\lambda$-path & $10.48 \pm 4.04$ & $10.26 \pm 2.15$  & $6.55 \pm 0.91$  \\
$|E^l|$ for $S$-path       & $9.99 \pm 10.09$ & $5.62 \pm 1.46$   & $4.74 \pm 2.18$  \\
$\lambda$ breakpoints              & $136.3 \pm 40.81$ & $130.2 \pm 44.18$ & $63.2 \pm 21.55$ \\
$S$ breakpoints                    & $6.68 \pm 3.69$  & $6.62 \pm 3.66$   & $10.70 \pm 13.52$ \\
\bottomrule
\end{tabular}
}
\vspace{-1mm}
\end{table*}

Across all settings, the elbow set size remained small relative to $n$ for both the $\lambda$- and $S$-paths. In Table~\ref{tab:elbow-size-n}, increasing the sample size from $n=500$ to $n=5000$ increased the average elbow set size only from $4.43$ to $10.82$ along the $\lambda$-path, and from $4.64$ to $7.03$ along the $S$-path. Tables~\ref{tab:elbow-size-distribution} and~\ref{tab:elbow-size-noise} further show that the elbow set size remains stable across different levels of covariate shift and label noise. This suggests that, in these experiments, $|E^l|$ grows much more slowly than $n$ and does not appear to be highly sensitive to either covariate shift or label noise level. Therefore, the practical per-step cost $O((|E^l|+d)^3)$ is far below the worst-case $O((n+d)^3)$.

\subsection{Real Data Experiment}\label{appendix:real}

\subsubsection{ArXiv Abstracts} \label{appendix: arxiv}
We sample 5000 abstracts from ArXiv metadata \citep{clement2019arxiv} in mathematics, statistics, and computer science categories. The processed abstract-word count matrix has a vocabulary size of $p=11,516$. We project the abstracts onto $K=5$ latent mixture proportions, $\hat{\pi}(X_i)$, using pLSI, the topic modeling approach described in Section~\ref{sec: topic modeling}. We use $\hat{\pi}(X_i)$ as inputs for all methods.

For SpeedCP and CondCP, we additionally set the linear representation $\Phi^*(X_i)$ as a one-hot encoding of the topic:  { \(\Phi^*(X) = \big( \ \One\{\hat T(X)=\texttt{Geometry}\},\ \One\{\hat T(X)=\texttt{Algebra}\}, \One\{\hat T(X)=\texttt{ML}\},\ \One\{\hat T(X)=\texttt{Vision}\},\ \One\{\hat T(X)=\texttt{Quantum}\}\big)^\top\)}. Figure~\ref{fig:arxiv topics} displays the top words for each estimated topic, while Figure~\ref{fig:arxiv_prop} shows the proportion of documents in each estimated topic. At a resolution of $K=5$, the topics are readily interpretable and correspond to distinct subfields within mathematics, statistics, and computer science. pLSI estimates soft assignments $\hat{\pi}(X_i)\in\mathbb{R}^5$, representing mixture proportions over the topics, which we use as inputs to SpeedCP, CondCP, PCP, and RLCP.

The goal is to construct prediction intervals that achieve nominal level 0.9 across topics.  {CondCP is omitted because, in our experiments, it did not finish within the allotted time budget (30 hours). This occurred consistently across the larger datasets we evaluated.} We present topic-conditional coverage and prediction set size in Table~\ref{tab:coverage}.  {To illustrate performance under a poor predictor, we choose linear regression of citation counts on raw word frequencies, which fails to extract any meaningful associations between words and citation counts.} As a result, RLCP produces overly wide prediction intervals and PCP fails to uncover any latent mixture structure of the conditional score distribution $S\mid\hat{\pi}(X)$ and becomes equivalent to SplitCP. In contrast, SpeedCP leverages kernel smoothing, resulting in tighter and more accurate prediction intervals.

\begin{table}[htbp]
  \centering
  \vspace{-3mm}
  \caption{Mean coverage across topics and prediction set size of ArXiv dataset.}
  \label{tab:coverage}
  \resizebox{\textwidth}{!}{
      \begin{tabular}{l c c c c c c c c }
        \toprule
        Method
          & \multicolumn{5}{c}{\textbf{Target coverage ($1-\alpha=0.9$)}}
          & \textbf{Size} & \textbf{Time (seconds)}\\
        \cmidrule(lr){2-6}

          & Geometry & Algebra & ML & Vision & Quantum
          &   \\
        \midrule
        \textbf{SpeedCP} & \textbf{0.880\,$\pm$0.02} & \textbf{0.890\,$\pm$0.05} & 0.730\,$\pm$0.34
                  & \textbf{0.920\,$\pm$0.02} & 0.822\,$\pm$0.11
                  & 15.835\,$\pm$3.05 &  8.682\,$\pm$3.10\\
        SplitCP & 0.877\,$\pm$0.02 & 0.876\,$\pm$0.04 & 0.659\,$\pm$0.35
                  & 0.926\,$\pm$0.02 & 0.762\,$\pm$0.08
                  & 15.661\,$\pm$1.17 & $< 0.01$\\
        PCP      & 0.877\,$\pm$0.02 & 0.876\,$\pm$0.04 & 0.659\,$\pm$0.35
                  & 0.926\,$\pm$0.02 & 0.762\,$\pm$0.08
                  & 15.661\,$\pm$1.17 & 17.501\,$\pm$0.54 \\
        RLCP      & 0.935\,$\pm$0.02 & 0.958\,$\pm$0.03 & \textbf{0.956\,$\pm$0.16}
                  & 0.923\,$\pm$0.02 & \textbf{0.962\,$\pm$0.04}
                  & 42.493\,$\pm$45.308 & 1.184\,$\pm$0.01\\
        \bottomrule
      \end{tabular}
  }
\end{table}

\begin{figure}[ht]
    \centering
    \includegraphics[width=1.0\linewidth]{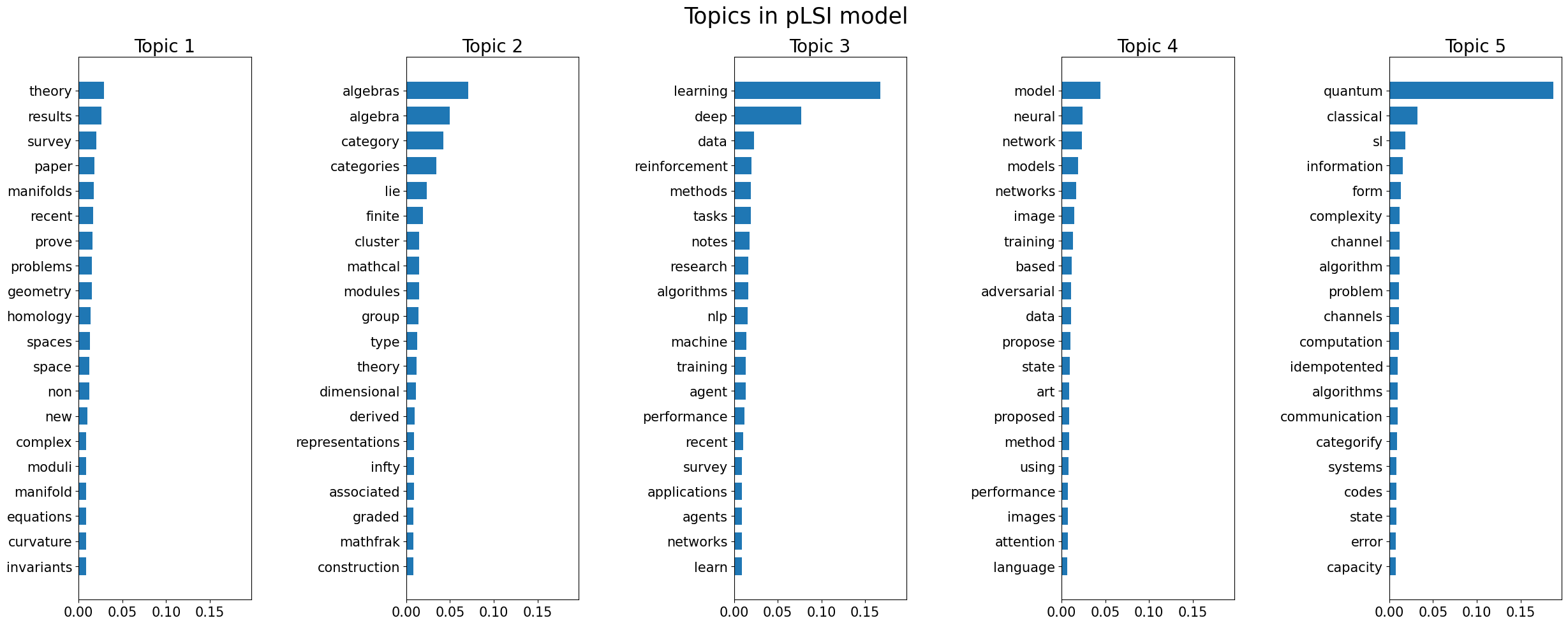}
    \caption{Latent topics of ArXiv abstracts identified by probabilistic latent semantic indexing (pLSI), a topic modeling approach. We plot the top 20 words with the largest weights for each topic. We name each topic as \textit{Geometry, Algebra, Machine Learning, Computer Vision,} and \textit{Quantum theory} based on the top words.}
    \label{fig:arxiv topics}
\end{figure}

\begin{figure}[ht]
    \centering
    \includegraphics[width=0.5\linewidth]{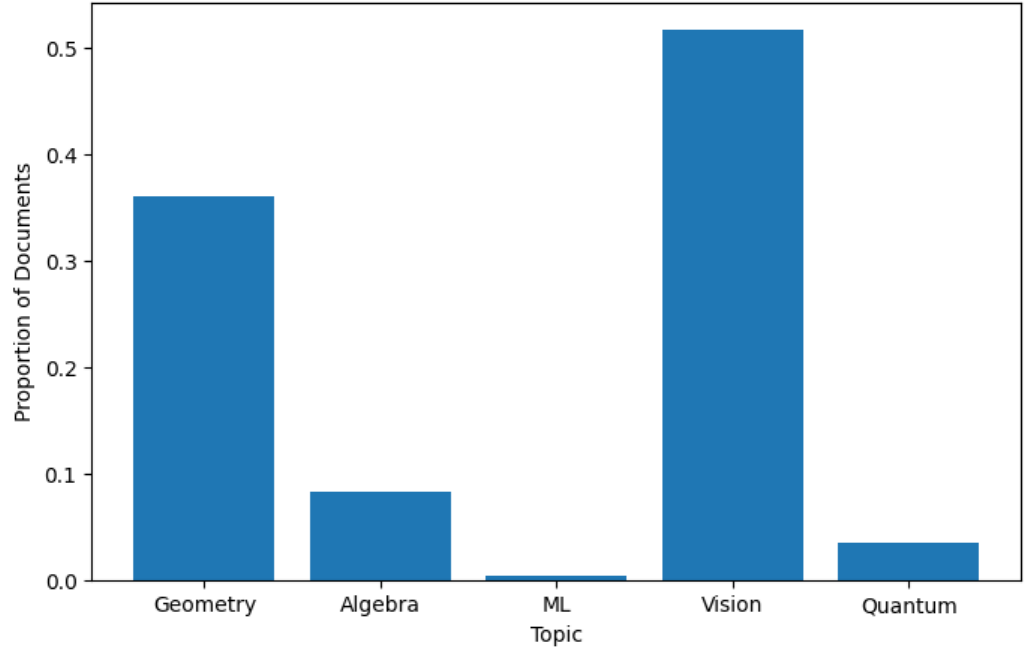}
    \caption{Distribution of the most likely topic over all abstracts with $n=5000$.}
    \label{fig:arxiv_prop}
\end{figure}

\subsubsection{Molecule Graphs}
\label{app:molecule}
We provide additional results of the molecule dataset example in Section~\ref{sec: experiments}. For each dataset, we subsample 2000 molecule graphs at each run with 50 runs in total, and split into 1000/500/500 training, calibration, and test points. Using the 1000 molecule graphs, we train a GIN predictor $\hat{\mu}(\cdot)$ to extract the 64-dimensional last layer and compute conformal scores $S_i=|\hat\mu(X_i)-y_i|$. In this experiment, we consider the intercept for the linear term, $\Phi^*(X_i)=1$ and $\pi(X_i)$ as the PC score. In Figure~\ref{fig:molecule_vor}, we plot the Voronoi partitions on which we measure the coverage (Figure~\ref{fig:molecule_coverage}) as well as the mean prediction set size (Figure~\ref{fig:molecule_cutoffs}) at level $\alpha=0.1$.

Our method, SpeedCP, and SplitCP construct the smallest prediction sets overall. However, while SplitCP applies a single global cutoff across the entire PC space, SpeedCP adapts to the local structure of the embeddings. For instance, in the QM9 dataset we find that SpeedCP produces slightly larger prediction sets in sparser regions of the PC space (e.g., partitions 2, 4, and 6), which allows it to maintain consistent $0.9$ coverage across all partitions.

\begin{figure}[ht]
    \centering
    \includegraphics[width=0.85\linewidth]{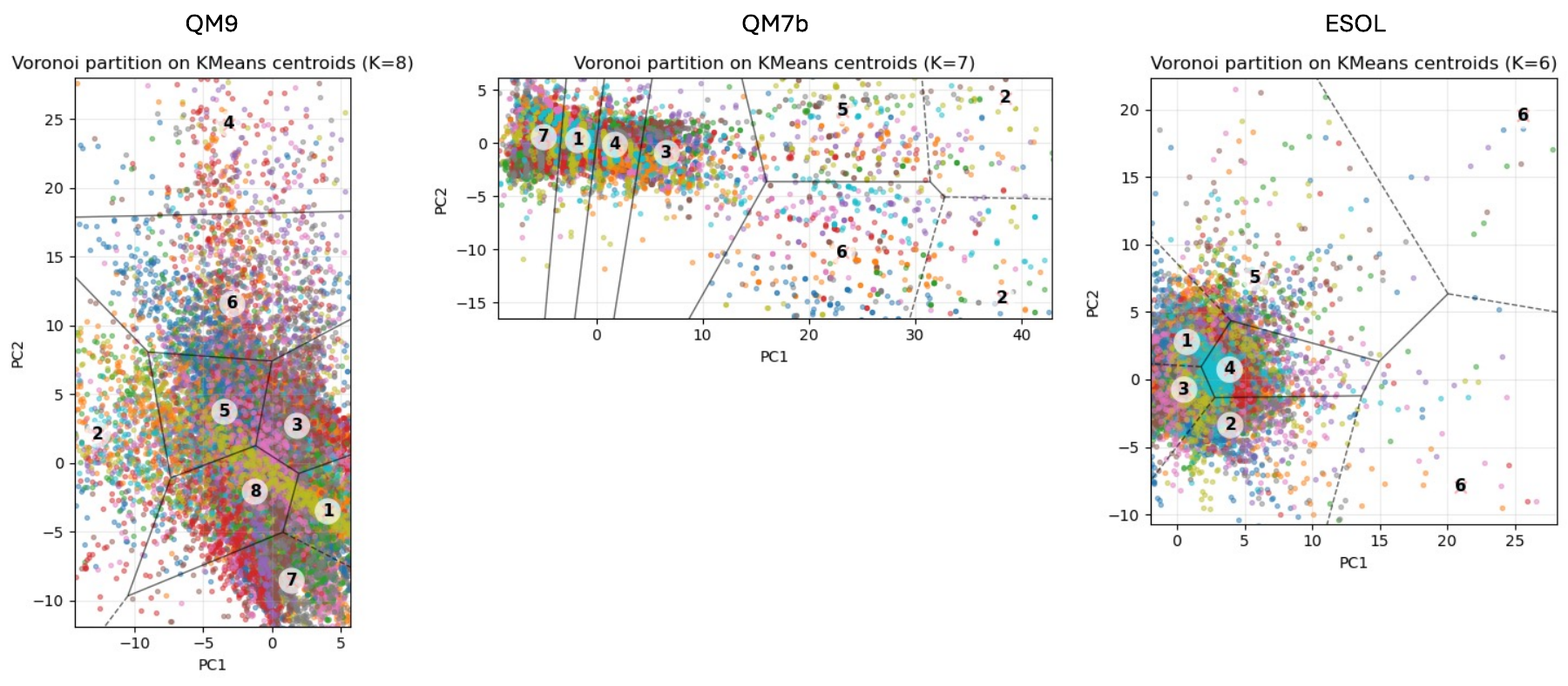}
    \caption{Voronoi tessellation of the PC space. We plot PC representations of graph embeddings, where each color denotes a random subsample of the dataset.}
    \label{fig:molecule_vor}
\end{figure}

\begin{figure}[ht]
    \centering
    \includegraphics[width=0.8\linewidth]{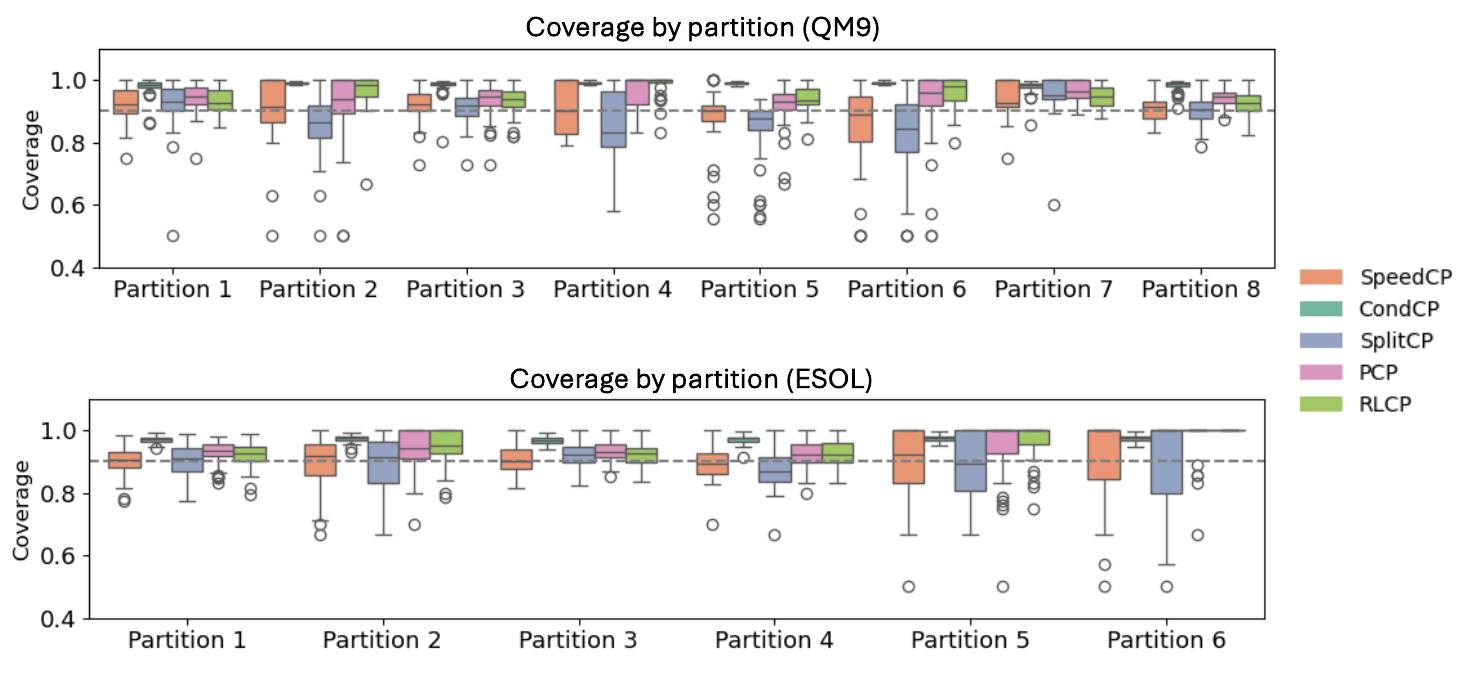}
    \caption{Coverage on fixed partitions of the PC space for QM9 and ESOL. We use PCA on the last layer embeddings of GNN with $K=3$ dimensions. The dashed line denotes the target coverage rate $1-\alpha=0.9$.}
    \label{fig:molecule_coverage}
\end{figure}

\begin{figure}[ht]
    \centering
    \includegraphics[width=0.8\linewidth]{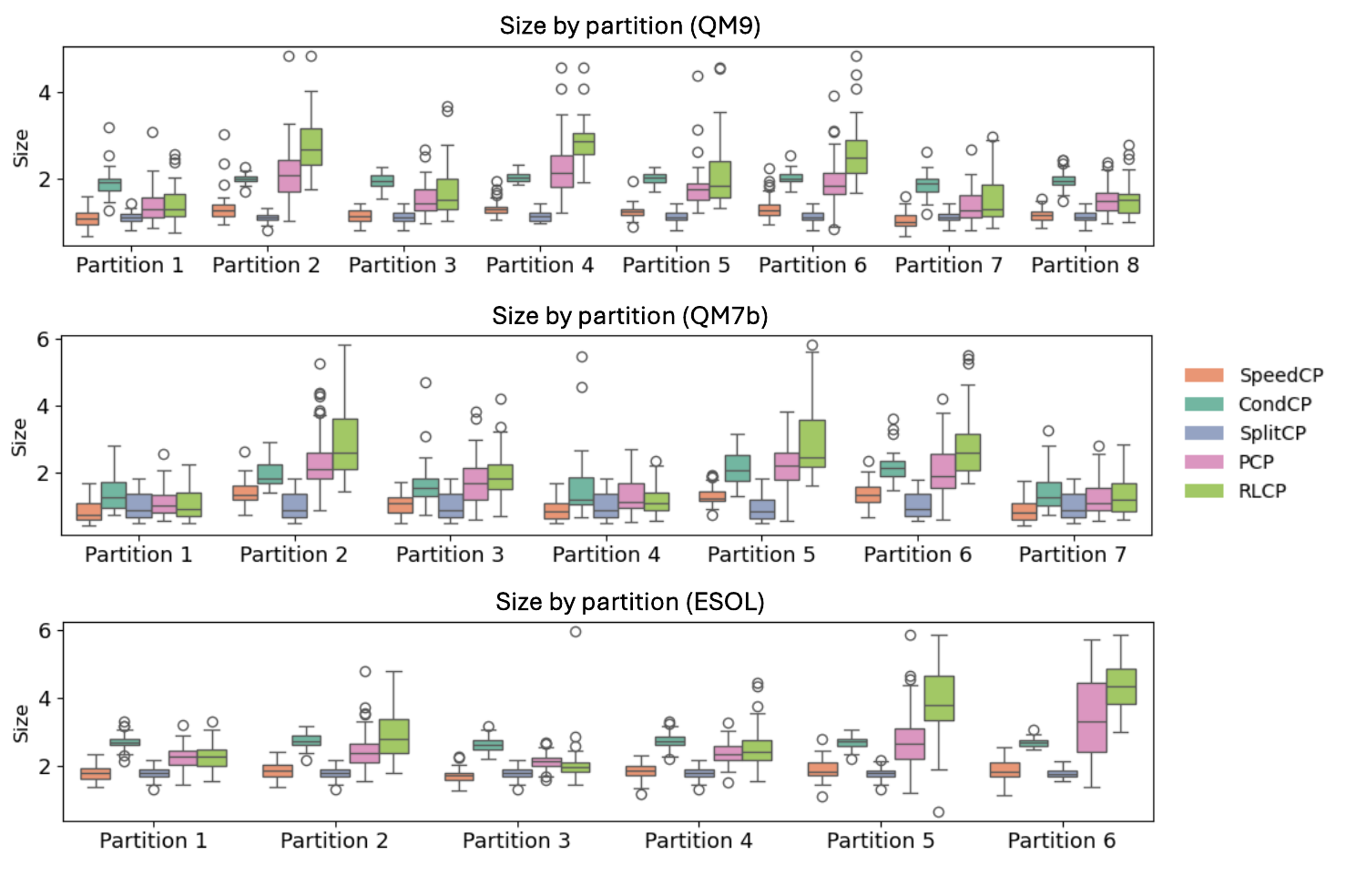}
    \caption{Prediction set size on fixed partitions of the PC space for each molecule dataset. We use PCA on the last layer embeddings of GNN with $K=3$ dimensions.}
    \label{fig:molecule_cutoffs}
\end{figure}

\subsubsection{Brain Tumor MRI}\label{appsec: brain mri}
We train a CNN classifier $\hat\mu(\cdot) $ on 2{,}000 images and extract the 256-dimensional NN features from the last layer. We report the performance of the CNN classifier \(\hat\mu(\cdot)\) in Figure~\ref{fig:mri-eval}, which shows the evaluation metrics on the training and validation sets.

\begin{figure}[htbp]
    \centering
    \includegraphics[width=0.9\linewidth]{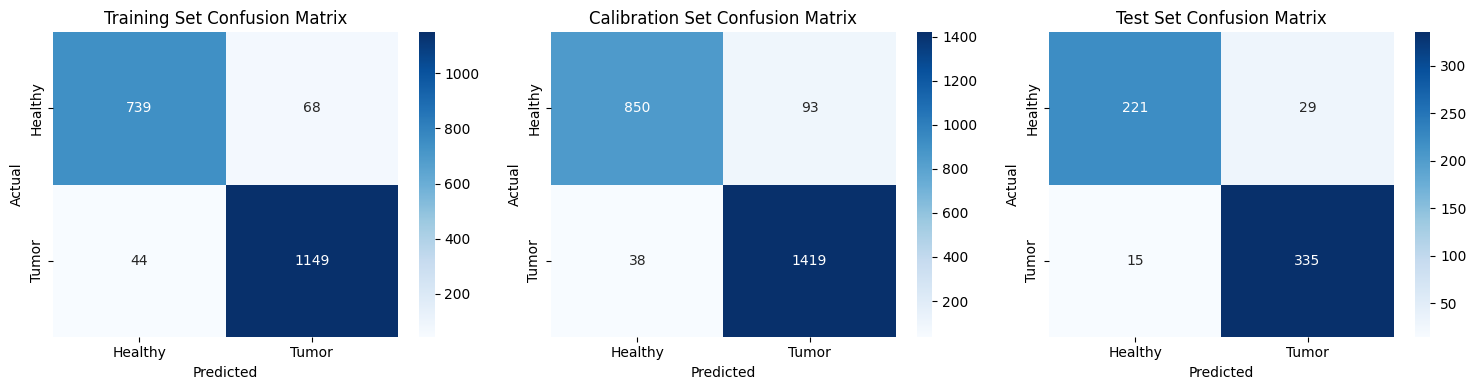}
    \caption{Evaluation of the CNN classifier on the Brain Tumor MRI dataset. }
    \label{fig:mri-eval}
\end{figure}

Using another 2{,}400 images for calibration, we compute conformal scores $S_i=|\hat\mu(X_i)-y_i|$ and apply our RKHS path-following quantile regression in the latent space to obtain thresholds at level $\alpha=0.1$.

In this experiment, we evaluate both \emph{marginal} coverage and \emph{per-label} (predicted-label) coverage
\(\mathbb{P}\!\big(Y_{n+1}\in \hat C^{*}_{\mathrm{rand}}(X_{n+1}) \mid \hat\mu(X_{n+1})=\hat y\big)\) using  600 test images over 50 simulation trials.  We exclude CondCP from the analysis because a single simulation takes over 50,000 seconds and the algorithm fails to converge. For comparison, we perform calibration using the 256-dimensional neural network features directly as the embedding \(\hat\pi(\cdot)\). To further reduce dimensionality, we apply a post hoc PCA to rank~8 on these features; the resulting principal components define \(\hat\pi:\mathcal X\to\mathbb R^{8}\).

\paragraph{Using 256-Dim Features from NN.}
We include illustrative results corresponding to Table~\ref{tab:coverage for mri} from the main paper. Empirically, the cutoffs produced by SplitCP and RLCP are effectively identical in our high-dimensional setting. Intuitively, RLCP’s locality weights become uninformative in high dimensions (the distance metric loses discriminative power), so RLCP reduces to uniform weighting over the calibration set, recovering the SplitCP cutoff.

\begin{table}[htbp]
\centering
\caption{Summary statistics of conformal cutoffs (marginal and by predicted label) using the 256-dim features from NN as input space for conformal prediction.}
\label{tab:cutoff-mri-raw}
\small
\begin{tabular}{lcccc}
\toprule
\textbf{Method} & \textbf{Mean} & \textbf{Std} & \textbf{Min} & \textbf{Max} \\
\midrule
\multicolumn{5}{l}{\textbf{Marginal}} \\
\addlinespace[2pt]
\textbf{SpeedCP$(\One)$ }& 0.2662&0.0908&0.0012&0.9985 \\
\textbf{SpeedCP$(\Phi^*)$ }& 0.2828 & 0.0820 & 0.0025 & 0.9714 \\
SplitCP     & 0.3482 & 0.0091 & 0.3271 & 0.3660 \\
RLCP      & 0.3482 & 0.0091 & 0.3271 & 0.3660 \\
PCP       & 0.2310 & 0.2899 & 0.0000 & 0.9984 \\
\midrule
\multicolumn{5}{l}{\(\boldsymbol{\hat y = \texttt{healthy}}\)} \\
\addlinespace[2pt]
\textbf{SpeedCP$(\One)$ } &0.2500&0.0954&0.0012&0.9938 \\
\textbf{SpeedCP$(\Phi^*)$ } & 0.2662 & 0.0819 & 0.0025 & 0.9533 \\
SplitCP     & 0.3482 & 0.0091 & 0.3271 & 0.3660 \\
RLCP      & 0.3482 & 0.0091 & 0.3271 & 0.3660 \\
PCP       & 0.2818 & 0.2904 & 0.0000 & 0.9984 \\
\midrule
\multicolumn{5}{l}{\(\boldsymbol{\hat y = \texttt{tumor}}\)} \\
\addlinespace[2pt]
\textbf{SpeedCP$(\One)$ } & 0.2758&0.0866&0.0963&0.9985 \\
\textbf{SpeedCP$(\Phi^*)$ } & 0.2925 & 0.0805 & 0.0952 & 0.9714 \\
SplitCP     & 0.3482 & 0.0091 & 0.3271 & 0.3660 \\
RLCP      & 0.3482 & 0.0091 & 0.3271 & 0.3660 \\
PCP       & 0.2012 & 0.2855 & 0.0000 & 0.9984 \\
\bottomrule
\end{tabular}
\end{table}

\begin{figure}[htbp]
    \centering
    \includegraphics[width=0.7\linewidth]{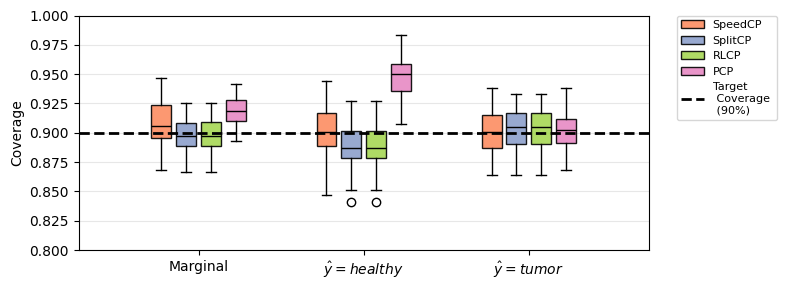}
\caption{Predicted-label conditional coverage on the Brain Tumor MRI test set under the PCA-based model. Calibration is performed using the linear feature map \(\Phi^*(X) = \big(\One\{\hat\mu(X)=\texttt{healthy}\},\ \One\{\hat\mu(X)=\texttt{tumor}\}\big)^\top\) under the 256-dim features layer from NN.}
    \label{fig:mri_raw-predlabel-coverage}
\end{figure}

\begin{figure}[htbp]
    \centering
    \includegraphics[width=0.7\linewidth]{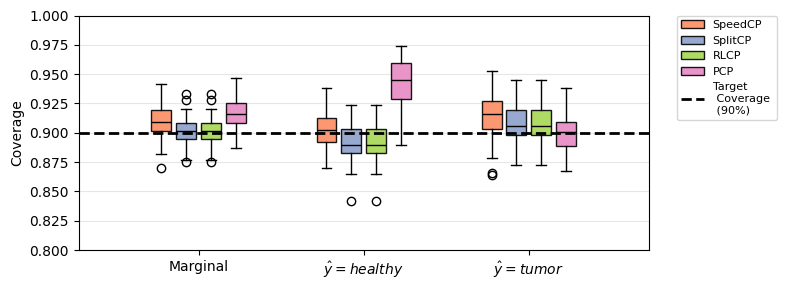}
    \caption{Predicted-label conditional coverage on the Brain Tumor MRI test set by calibrating with the intercept only $\Phi^*(X)=1$ under the 256-dim features layer from NN.}
    \label{fig:mri_raw-intercept-coverage}
\end{figure}

\paragraph{Using Principal Components.}

To further reduce dimensionality, we extract features from the neural network and project them onto a low-rank embedding via PCA with $K=8$, fitted on the first 2,000 training samples.
SplitCP attains similar coverage but requires more conservative sets in lower-dimensional space, whereas our method delivers narrower sets with near-nominal predicted-label coverage. RLCP and PCP tend to over-cover, particularly for the healthy class, and exhibit unstable cutoffs with high variance and frequent near-zero values (see Table~\ref{tab:cutoff-mri}). Consequently, even after dimensionality reduction, RLCP and PCP produce overly conservative conditional coverage.

Compared to results using higher-dimensional features, the low-rank projection further reduces the cutoff without compromising conditional guarantees (comparing Table \ref{tab:coverage for mri} with \ref{tab:coverage for mri pca}), thereby yielding narrower prediction sets.

\begin{table}[htbp]
  \centering
  \vspace{-3mm}
  \caption{Mean coverage and prediction set size across predicted labels in the MRI dataset under the PCA-based model.}
  \label{tab:coverage for mri pca}
  \resizebox{\textwidth}{!}{
      \begin{tabular}{l c c c c c c c c }
        \toprule
        Method
          & \multicolumn{3}{c}{\textbf{Target coverage ($1-\alpha=0.9$)}}
          & \multicolumn{3}{c}{\textbf{Prediction set size}} & \textbf{Time (seconds)}\\
        \cmidrule(lr){2-4} \cmidrule(lr){5-7}

          & Marginal & Healthy & Tumor & Marginal & Healthy & Tumor \\
        \midrule
        \textbf{SpeedCP($\One$)}&{0.910\,$\pm$0.01} & \textbf{0.901\,$\pm$0.02} & 0.915\,$\pm$0.01
                  & {0.239\,$\pm$0.07} & \textbf{0.230\,$\pm$0.07}
                  & 0.244\,$\pm$0.08 &  286.1\,$\pm$14.2\\
       \textbf{SpeedCP}($\Phi^*$)&{0.905\,$\pm$0.02} &\textbf{0.898\,$\pm$0.03} & \textbf{0.900\,$\pm$0.02 }
                  & {0.247\,$\pm$0.08}  & {0.241\,$\pm$0.08}
                  &{0.251\,$\pm$0.08}&  294.5\,$\pm$20.9\\
        SplitCP
        & \textbf{ 0.901\,$\pm$0.01} & 0.893\,$\pm$0.02 & 0.906\,$\pm$0.01
                  & 0.350\,$\pm$0.00 & 0.350\,$\pm$0.00
                  & 0.350\,$\pm$0.00  & $< 0.01$\\
        PCP
        & 0.906\,$\pm$0.02 & 0.925\,$\pm$0.03 & 0.895\,$\pm$0.02
                  &\textbf{ 0.230\,$\pm$0.27} & 0.279\,$\pm$0.26
                  & \textbf{0.200\,$\pm$0.26} & 130.1\,$\pm$ 28.9 \\
        RLCP
        & 0.916\,$\pm$0.01 & 0.926\,$\pm$0.02 & {0.911\,$\pm$0.02}
                  & 0.359\,$\pm$0.38 & {0.388\,$\pm$0.37}
                  & 0.342\,$\pm$0.38 & 2.095\,$\pm$0.13\\
        \bottomrule
      \end{tabular}
  }
\end{table}

\begin{figure}[htbp]
    \centering
    \includegraphics[width=0.7\linewidth]{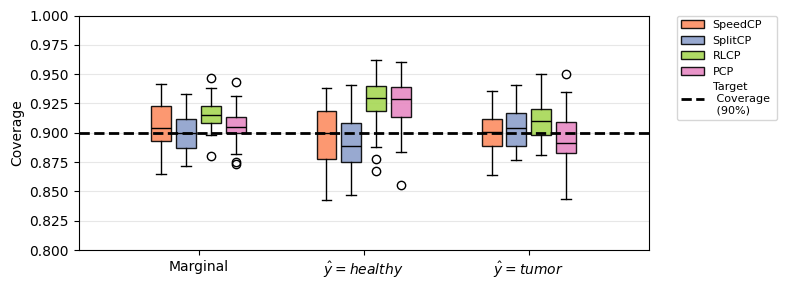}
\caption{Predicted-label conditional coverage on the Brain Tumor MRI test set under the PCA-based model. Calibration is performed using the linear feature map \(\Phi^*(X) = \big(1,\ \One\{\hat\mu(X)=\texttt{healthy}\},\ \One\{\hat\mu(X)=\texttt{tumor}\}\big)^\top\).}
    \label{fig:mri-predlabel-coverage}
\end{figure}

\begin{figure}[htbp]
    \centering
    \includegraphics[width=0.7\linewidth]{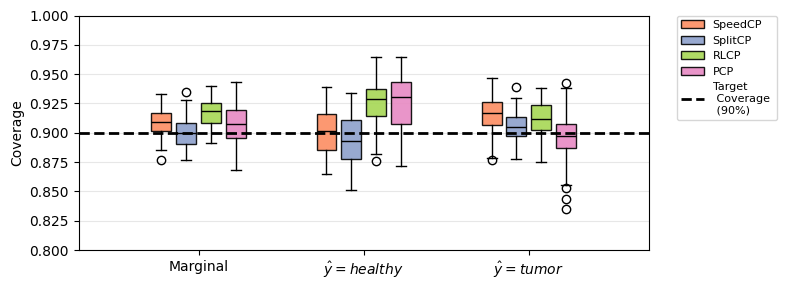}
    \caption{Predicted-label conditional coverage on the Brain Tumor MRI test set by calibrating with the intercept only $\Phi^*(X)=1$.}
    \label{fig:mri-intercept-coverage}
\end{figure}

\begin{table}[htbp]
\centering
\caption{Summary statistics of conformal cutoffs (marginal and by predicted label) using PCA-based model.  \textbf{SpeedCP$(\Phi^*)$} calibrates scores with a linear term that includes predicted labels, whereas \textbf{SpeedCP($\One$)} uses an intercept-only term.}
\label{tab:cutoff-mri}
\small
\begin{tabular}{lcccc}
\toprule
\textbf{Method} & \textbf{Mean} & \textbf{Std} & \textbf{Min} & \textbf{Max}  \\
\midrule
\multicolumn{5}{l}{\textbf{Marginal}} \\
\addlinespace[2pt]
\textbf{SpeedCP($\One$)}  & 0.2391 & 0.0738 & 0.0654 & 0.8641 \\
\textbf{SpeedCP$(\Phi^*)$ } & 0.2470 & 0.0805 &0.0442& 1.2279 \\
SplitCP      & 0.3505 & 0.0087 & 0.3315 & 0.3729  \\
RLCP       & 0.3594 & 0.3797 & 0.0000 & 0.9984  \\
PCP        & 0.2301 & 0.2672 & 0.0000 & 0.9984\\
\midrule
\multicolumn{5}{l}{\(\boldsymbol{\hat y = \texttt{healthy}}\)} \\
\addlinespace[2pt]
\textbf{SpeedCP($\One$)}  & 0.2300 & 0.0697 & 0.0654 & 0.7414 \\
\textbf{SpeedCP$(\Phi^*)$ } & 0.2409 & 0.0785&0.0442&1.2279\\
SplitCP      & 0.3506 & 0.0088 & 0.3315 & 0.3729  \\
RLCP       & 0.3883 & 0.3711 & 0.0000 & 0.9984 \\
PCP        & 0.2788 & 0.2654 & 0.0000 & 0.9984  \\
\midrule
\multicolumn{5}{l}{\(\boldsymbol{\hat y = \texttt{tumor}}\)} \\
\addlinespace[2pt]
\textbf{SpeedCP($\One$)}    & 0.2445 & 0.0756 & 0.1486 & 0.8641  \\
\textbf{SpeedCP$(\Phi^*)$ } & 0.2506&0.0815&0.0615&1.2225\\
SplitCP      & 0.3505 & 0.0087 & 0.3315 & 0.3729 \\
RLCP       & 0.3420 & 0.3838 & 0.0000 & 0.9984  \\
PCP        & 0.2009 & 0.2641 & 0.0000 & 0.9984  \\
\bottomrule
\end{tabular}
\end{table}

\subsection{Details on Computation Resources}\label{appendix:computation}
All experiments were conducted on a cloud-based computing cluster. Each job was allocated 4 CPU cores and 4 GB of memory. No GPUs were used. For CondCP, we used the MOSEK solver in CVXPY to solve the underlying convex optimization problems. All code was implemented in Python3 and run in a consistent computing environment to ensure reproducibility.

\end{document}